\newtheorem{thm}{Theorem}
\newtheorem{lem}{Lemma}
\newtheorem{cor}{Corollary}
\newtheorem{prop}{Proposition}
\newtheorem{remark}{Remark}
\newcommand{\vm}[1]{\boldsymbol{#1}}
\renewcommand{\P}[1]{\operatorname{\mathbb{P}}\{#1\}}
\newcommand{\E}{\operatorname{{E}}}
\DeclareMathAlphabet{\mathcalligra}{T1}{calligra}{m}{n}
\newcommand{\argmin}{\operatorname{argmin}}
\newcommand{\T}{{\operatorname{T}}}
\newcommand{\F}{{\operatorname{F}}}
\newcommand{\er}{\mathrm{e}}
\newcommand{\vct}[1]{\bm{#1}}
\def\y{\vm{Y}}
\def\x{\vm{X}}
\def\A{\vm{A}}
\def\z{\vm{Z}}
\def\w{\vm{W}}
\def\ff{\vm{f}}
\def\o{\ell}
\def\hh{\vm{h}}
\def\mm{\vm{x}}
\def\ot{\ell^\prime}
\def\nt{n^\prime}
\def\kt{k^\prime}
\def\bh{{\vm{b}}}
\newcommand{\uu}{\vct{u}}
\newcommand{\vv}{\vct{v}}
\newcommand{\e}{\vm{e}}
\def\PT{\mathcal{R}}
\def\PTc{\mathcal{R}_{\scalebox{0.6}{$\perp$}}}
\def\PO{\mathcal{P}}
\def\POc{\mathcal{P}_{\scalebox{0.6}{$\perp$}}}
\def\cA{\mathcal{A}}
\def\reals{\mathbb{R}}
\def\comps{\mathbb{C}}
\def\mm{\vm{m}}
\def\cc{\vm{c}}
\def\X{\vm{X}}
\def\xx{\vm{x}}
\def\ww{\vm{w}}
\def\l{\ell}
\newcommand{\<}{\langle}
\renewcommand{\>}{\rangle}
\begin{document}
\title{Leveraging Diversity and Sparsity in  Blind Deconvolution}
\author{Ali Ahmed and Laurent Demanet\thanks{Ali Ahmed is currently with the Information Technology University, Lahore, Pakistan. He was associated until recently with the Department of Mathematics, MIT, Cambridge, MA.  Laurent Demanet is with the Department of Mathematics, MIT, Cambridge, MA. Email for the corresponding author: alikhan@mit.edu.  Both authors are sponsored by AFOSR grants FA9550-12-1-0328 and FA9550-15-1-0078. LD is also funded by NSF, ONR, and Total SA. We thank Augustin Cosse for interesting discussions. Some preliminary results in this direction were presented in an earlier conference publication, namely,  A convex approach to blind deconvolution with diverse inputs, in Proc. IEEE CAMSAP, Cancun, December 2015. \textcopyright This work has been submitted to the IEEE Transactions on Information Theory for possible publication. Copyright may be transferred without notice, after which this version may no longer be accessible.}}
\date{\today}
\maketitle
\begin{abstract}

	This paper considers recovering $L$-dimensional vectors $\ww$, and $\xx_1,\xx_2, \ldots, \xx_N$ from their circular convolutions $\vm{y}_n = \ww*\xx_n, \ n = 1,2,3, \ldots, N$. The vector $\ww$ is assumed to be $S$-sparse in a known basis that is spread out in the Fourier domain, and each input $\xx_n$ is a member of a known $K$-dimensional random subspace.

	We prove that whenever $K + S\log^2S \lesssim L /\log^4(LN)$, the problem can be solved effectively by using only the nuclear-norm minimization as the convex relaxation, as long as the inputs are sufficiently diverse and obey $N \gtrsim \log^2(LN)$. By ``diverse inputs", we mean that the $\xx_n$'s belong to different, generic subspaces. To our knowledge, this is the first theoretical result on blind deconvolution where the subspace to which $\ww$ belongs is not fixed, but needs to be determined.

	We discuss the result in the context of multipath channel estimation in wireless communications. Both the fading coefficients, and the delays in the channel impulse response $\ww$ are unknown. The encoder codes the $K$-dimensional message vectors randomly and then transmits coded messages $\xx_n$'s over a fixed channel one after the other. The decoder then discovers all of the messages and the channel response when the number of samples taken for each received message are roughly greater than $(K+S\log^2S)\log^4(LN)$, and the number of messages is roughly at least $\log^2(LN)$.

\end{abstract}

 \section{Introduction}
 This paper addresses the problem of recovering a vector $\ww$ from its circular convolutions individually with a series of unknown vectors $\{\xx_n\}_n := \xx_1, \xx_2, \ldots, \xx_N$. Consider a linear, time-invariant (LTI) system, characterized by some unknown impulse response $\ww$. The system is driven by a series of inputs $\{\xx_n\}_n$ and one wants to identify the system by observing only the outputs, which in this case are the convolutions of the inputs with the system impulse response. This problem is referred to as the blind system identification: jointly discover the inputs, and the system impulse response from the outputs, and is one of the core problems in the field of system theory and signal processing. When $\ww$ is expected to be sparse, the problem can be recast in a now standard fashion as the recovery of a simultaneously sparse and rank-1 matrix. We relax this formulation by dropping the sparsity contraint and using nuclear-norm minimization. 
 
 We then leverage results in the well understood area of low-rank recovery from underdetermined systems of equations to give the conditions on the unknown impulse response, and inputs under which they can be deconvolved exactly. 
 
 Roughly, the results say the input vectors $\xx_1,\xx_2,\ldots,\xx_N$, each of which lives in some known ``generic" $K$-dimensional subspace of $\reals^L$, and a vector $\ww \in \reals^L$ that is ``incoherent" in the Fourier domain, and is only assumed to be $S$-sparse is some known basis, are separable with high probability provided $K + S \sim L$, up to log factors, and with appropriate coherences appearing in the constants.

More precisely, we state the problem as follows. Assume that each of the input $\xx_n$ lives in a known $K$-dimensional subspace of $\reals^L$, i.e.,
 \begin{align}\label{eq:x=Cm}
 \xx_n = \vm{C}_n\mm_n,\quad n = 1,2,3,\ldots,N
 \end{align}
for some $L \times K$ basis matrix $\vm{C}_n$ with $K \leq L$, whose columns span the subspace in which $\xx_n$ resides. Moreover, the vector $\ww$ is only assumed to be $S$-sparse in an $L \times L$ basis matrix $\vm{B}$, i.e., 
 \begin{align}\label{eq:w=Bh}
 \ww = \vm{B} \hh, ~\mbox{such that~} \|\hh\|_0 \leq S.
 \end{align}
(It can be convenient to think of $\vm{B}$ as the identity upon first reading.) Given the basis matrices $\vm{B}$ and $\vm{C}_n$, all we need to know are the expansion coefficients $\mm_n$, and $\hh$ to discover the inputs $\xx_n$ for each $n$, and $\ww$. The structural assumptions on $\ww$ are much weaker than on $\xx_n$, in that we only need $\ww$ to be sparse in some known basis, whereas each $\xx_n$ resides in a generic and known subspace.

We observe the circular convolutions:
 \begin{align}\label{eq:circular-convs}
 \vm{y}_n = \ww*\xx_n, ~ n = 1, 2,3,\ldots, N.
 \end{align}
 An entry in the length-$L$ observation vector $\vm{y}_n$, for each $n \in \{1,2,3,\ldots, N\}$ is 
 \begin{align*}
 y_n[\l] = \sum_{\l^\prime = 1}^{L}w[\l^\prime]x_n[\l-\l^\prime + 1 ~\mbox{mod}~ L ],~~(\o,n) \in \{1,2,3,\ldots, L\}\times \{1,2,3,\ldots,N\},
 \end{align*}
 where modulo $L$ is what makes the convolution circular. Given no information about the inputs $\xx_n$ and the impulse response $\vm{w}$, it is clear that both of these quantities cannot be uniquely identified from the observations \eqref{eq:circular-convs}. 
 
 We want to put this result in perspective from the outset by comparing it with a related result in \cite{ahmed2012blind}, where a single ($N=1$) input blind deconvolution problem is analyzed. Mathematically, the main result in \cite{ahmed2012blind} shows that in a single input deconvolution problem, the vectors $\xx$, and $\ww$ can be recovered from the circular convolution $\vm{y} = \ww*\xx$ when $\xx$ lives in a known generic subspace as above, however, unlike above the incoherent vector $\ww$  also lives in a known subspace. In this paper, we do not have a known subspace assumption on $\ww$, which makes it a significant improvement over the results in \cite{ahmed2012blind} and has concrete implications in important applications as will be explained in Section \ref{sec:app}

 \subsection{Notations}
 
We use upper, and lower case bold letters for matrices and vectors, respectively. Scalars are represented by upper, and lower case, non-bold letters. The notation $\xx^
 *$ ($\xx^\T$) denotes a row vector formed by taking the transpose with (without) conjugation of a column vector $\xx$. By $\bar{\vm{x}}$, we mean a column vector obtained from $\vm{x}$ by conjugating each entry. Linear operators are represented using script letters. We repeatedly use the notation $k \sim_K  n$  to indicate that the index $k$ takes value in the range $\{(n-1)K+1,\ldots,nK\}$ for some scalar $K$. We use $[N]$ to denote the set $\{1,2,3,\ldots,N\}$. The notation $\vm{I}_K$ denotes $K \times K$ identity matrix for a scalar $K$. For a set $\Omega \subset [L]$,  $\vm{I}_{L \times \Omega}$ denotes an $L \times |\Omega|$ submatrix of an $L \times L$ identity obtained by selecting columns indexed by the set $\Omega$. Also we use $\vm{D}_n$ to represent a $KN \times KN$ matrix $\vm{I}_K \otimes \vm{e}_n \vm{e}_n^*$ with ones along the diagonal at locations $k \sim_K n$ and zeros elsewhere, where $\{\vm{e}_n\}_n$ denote standard $N$-dimensional basis vectors, and $\otimes$ is the conventional Kronecker product. We write $\mbox{vec}(\vm{A})$ for the vector formed by stacking the columns of a matrix $\vm{A}$. Given two matrices $\vm{A}$, and $\vm{B}$, we denote by $\vm{A} \boxtimes \vm{B}$, the rank-1 matrix: $[\mbox{vec}(\vm{A})][\mbox{vec}(\vm{B})]^*$. Similarly, $\PO: \comps^{L \times M} \rightarrow \comps^{L \times M}$ for some $L$, and $M$ takes an $L\times M$ matrix $\X$ to $\vm{I}_{L \times \Omega}\vm{I}_{L \times \Omega}^* \X$. We will use $L \sim a \rightarrow b$ to show that a variable $L$ varies between scalar $a$, and $b$. Lastly, the operator $\E$ refers to the expectation operator, and $\mathbb{P}$ represents the probability measure. 
 
\subsection{Lifting and convex relaxation}
In this section, we recast the blind system identification from diverse inputs as a simultaneously sparse, and  rank-1 matrix recovery problem, and set up a semidefinite program (SDP) to solve it. Begin with defining $\vm{F}$, the $L \times L$ discrete Fourier transform (DFT) matrix, 
\begin{align}\label{eq:DFT}
F[\omega,\l] = \frac{1}{\sqrt{L}} \mathrm{e}^{-\mathrm{j}2\pi(\omega-1)(\l-1)/L},\quad(\omega,\l) \in [L]\times[L],
\end{align}
and let $\ff_\l^*$ denote the $\l$th row of $\vm{F}$. In the Fourier domain, the convolutions in \eqref{eq:circular-convs} are
\begin{align*}
\hat{\vm{y}}_n = \sqrt{L} \hat{\vm{w}} \odot \hat{\xx}_n, ~~ \mbox{or} ~~ \hat{y}_n[\l] = \sqrt{L}\<\ff_\l,\vm{w}\>\<\ff_\l,\xx_n\>, \quad (\o,n) \in [L]\times[N], 
\end{align*}
where $\hat{\vm{w}} = \vm{F}\vm{w}$, $\hat{\xx}_n = \vm{F}\xx_n$, and $\odot$ denotes the Hadamard product. Using the fact that $\xx_n = \vm{C}_n \mm_n$, and $\vm{w} = \vm{B}\hh$, we obtain 
\begin{align*}
\hat{y}_n[\l] &= \sqrt{L}\<\vm{B}^*\ff_\l,\hh\>\<\vm{C}_n^*\ff_\l,\mm_n\>= \<\bh_\l,\hh\>\<\mm_n,\cc_{\l,n}\>,\quad (\o,n)\in [L]\times[N],
\end{align*}
where the last equality follows by substituting $\bh_\o = \vm{B}^*\ff_\o$, $\bar{\cc}_{\l,n} = \sqrt{L} \vm{C}_n^*\ff_\l$, and using the fact that $\<\xx,\vm{y}\> = \<\vm{y},\xx\>^*$. This can be equivalently expressed as
\begin{align}\label{eq:meas}
\hat{y}_n[\l] = \<\bh_\l\cc_{\l,n}^*,\hh\mm^*_n\> = \<\bh_\l\vm{\phi}_{\l,n}^{*}, \hh\mm^*\>,\quad (\l,n) \in [L]\times[N],
\end{align}
where with matrices as its arguments, the notation $\<\cdot,\cdot\>$ denotes the usual trace inner product, $\mm = [\mm_1^*,\mm_2^*,\ldots,\mm_N^*]^*$, and $\vm{\phi}_{\l,n}$ denotes a length $KN$ vector of zeros except the $\cc_{\l,n}$ in the position indexed by $k \sim_K n$, i.e., 
\begin{align}\label{eq:phi_ln}
\vm{\phi}_{\o,n} = \vm{c}_{\o,n}\otimes \e_n,\quad(\o,n)\in [L]\times[N]
\end{align}
with $\e_n$ denoting the standard $N$-dimensional basis vectors.
It is clear that the measurements are non-linear in $\hh \in \reals^L$ and $\mm \in \reals^{KN}$ but are linear in their outer product $\X_0 = \hh\mm^*$. Since the expansion coefficients $\hh$ are $S$-sparse, this shows that the inverse problem in \eqref{eq:circular-convs} can be thought of as the question of recovering $\hh\mm^*$; a rank-1 matrix with $S$-sparse columns, from its linear measurements obtained by trace inner products against known measurement matrices $\A_{\o,n} = \bh_\l\vm{\phi}_{\l,n}^*$.

Define a linear map $\cA: \reals^{L \times KN} \rightarrow \comps^{LN}$ as
\begin{align}\label{eq:cA}
\cA(\X) : &=\{\<\bh_\l\vm{\phi}_{\l,n}^*,\X\>~| ~(\l,n)\in[L]\times[N]\}
\end{align}
The number of unknowns in $\X$ are $LKN$ and there are only $LN$ linear measurements available. This means that the linear map $\cA$ is severely underdetermined except in the trivial case when $K=1$. In all other cases when $K > 1$, infinitely many candidate solutions satisfy the measurements constraint owing to the null space of $\cA$. 

Of course, we can take advantage of the fact that the unknown matrix will always be simultaneously sparse, and rank-1 and hence the inherent dimension is much smaller. Information theoretically speaking, the number of unknowns is only $\sim S\log L+KN$, and if we can effectively solve for the simultaneously, sparse and rank-1 matrices then inverting the system of equations for $\xx_n$'s, and $\ww$ might be possible for a suitable linear map $\cA$ when $LN \gtrsim S\log L + KN$. If it were possible, a single unknown input ($N=1$) under certain structural assumptions would suffice to identify the system completely. 

However, it is only known how to individually relax the low-rank and sparse structures \cite{fazel2001rank,candes09ex,recht10gu,candes2006robust,candes2006stable}, namely using nuclear and $\ell_1$ norms, but it remains an open question to efficiently relax those structures simultaneously.\footnote{The most natural choice of combining the nuclear and $\ell_1$ norms to constitute a convex penalty for simultaneously sparse and low-rank is known to be suboptimal \cite{oymak2015simultaneously}. In fact, for the special case of rank-1, and sparse matrices, no effective convex relaxation exists \cite{aghasi2013tightest}. Thus, even if the low-rank and sparse structures can be individually handled with effective convex relaxations, no obvious convex penalty is known for the simultaneously spase, and low-rank structure.} Instead, if we ignore the sparsity altogether and only cater to the rank-1 structure, the problem remains in principle solvable because the inherent number $L+KN$ of unknowns in this case become smaller than the number $LN$ of observations as soon as the number $N$ of inputs exceeds $\frac{L}{L-K} > 1$. Therefore, {\em the main idea of this paper is to use multiple inputs, which allow us to forego the use of a sparsity penalty in the relaxed program.}

Before we formulate the optimization program, it is worth mentioning that the recovery of the rank-1 matrix $\X_0$ only guarantees the recovery  of $\hh$ and $\mm$ to within a global scaling factor $\alpha$, i.e., we can only recover $\tilde{\hh} = \alpha \hh$, and $\tilde{\mm} = \alpha^{-1}\mm$, which is not of much concern in practice. 

The inverse problem in \eqref{eq:circular-convs} can be cast into a rank-1 matrix recovery problem from linear measurements as follows:
\begin{align*}
&\quad\mbox{find}\quad\qquad\X\\
&\mbox{subject to}\quad \hat{y}_n[\l] = \<\bh_\l\vm{\phi}_{\l,n}^*,\X\>, \quad (\l,n) \in [L]\times[N]\\
&~~~~~~~~~~~~~~\mbox{rank}(\X) = 1.
\end{align*}
The optimization program is non convex and in general NP hard due to the combinatorial rank constraint. Owing to the vast literature \cite{fazel02ma,candes09ex,gross11re,recht10gu} on solving optimization programs of the above form, it is well known that a good convex relaxation is 
\begin{align}\label{eq:convex-relaxation}
\hat{\vm{X}} := &~~\underset{\vm{X}}{\argmin}~~\quad\|\X\|_* \\
&\mbox{subject to}\quad\hat{y}_n[\l] = \<\bh_\l\vm{\phi}_{\l,n}^*,\X\>, \quad (\l,n) \in [L]\times[N]\notag,
\end{align}
where the nuclear norm $\|\X\|_*$ is the sum of the singular values of $\X$. The system identification problem is successfully solved if we can guarantee that the minimizer to the above convex program equals $\hh\mm^*$. Low-rank recovery from under determined linear map has been of interest lately in several areas of science and engineering, and a growing literature \cite{candes09ex,gross11re,recht10gu,recht11si} has been concerned with finding the properties of the linear map $\cA$ under which we can expect to obtain the true solution after solving the above optimization program.

\subsection{Main results}\label{sec:Main_results}

In this section, we state the main result claiming that the optimization program in \eqref{eq:convex-relaxation} can recover the sparse, and rank-1 matrix $\hh\mm^*$ almost always when the inputs $\xx_n$'s reside in relatively dense ``generic'' $K$ dimensional subspaces of $\reals^L$, and that $\ww \in \reals^L$ satisfies the nominal conditions of $S$-sparsity in some known basis, and ``incoherence" in the Fourier domain. Before stating our main theorem, we define the terms ``generic" and ``incoherence" concretely below. Recall that $\vm{w} = \vm{B} \hh$.

The incoherence of the basis $\hat{\vm{B}} = \vm{F}\vm{B}$ introduced in \eqref{eq:w=Bh} is quantified using a coherence parameter $\mu_{\max}^2$,
\begin{align}\label{eq:coherence_B}
\mu_{\max}^2 : = L \cdot \|\hat{\vm{B}}\|_\infty^2,
\end{align}
where $\| \cdot \|_{\infty}$ is the entrywise uniform norm. Using the fact that $\hat{\vm{B}}$ is an $L \times L$ orthonormal matrix, it is easy to see that $1 \leq \mu_{\max}^2 \leq L$. A simple example of a matrix that achieves minimum $\mu_{\max}^2$ would be the DFT matrix. 

The incoherence of $\ww$ in the Fourier domain is measured by $\mu_0^2$,
\[
\mu_0^2 : = L \cdot \max \left\{ \frac{\| \hat{\vm{B}} \hh \|^2_\infty}{\| \hh \|_2^2} , \frac{\| \hat{\vm{B}} \hh'_{n,p} \|^2_\infty}{\| \hh'_{n,p} \|_2^2} , \frac{\| \hat{\vm{B}} \hh''_{n,n'} \|^2_\infty}{\| \hh''_{n,n'} \|_2^2} \right\},
\]
where each ratio is a measure of diffusion in the Fourier domain. The spirit of the definition is mainly captured by the first term, $L \cdot \frac{\| \hat{\vm{B}} \hh \|^2_\infty}{\| \hh \|_2^2}$ --- scaled peak value of $\ww$ in the Fourier domain.  The other terms involve quantities $\hh'_{n,p}$ and $\hh''_{n,n'}$ that are defined in the sequel (they are random perturbations of $\hh$), and are only present for technical reasons. Notice that the first term is small, $\mathcal{O}(1)$, when $\ww$ is diffuse in the frequency domain, and can otherwise be as large as $L$.

To keep our results as general as possible, we introduce an extra incoherence parameter $\rho_0^2$ that quantifies the distribution of energy among the inputs $\{\mm_n\}_n$, and is defined as
\begin{align}\label{eq:coherence_m}
\rho_0^2 := N\cdot \max_{n} \frac{\|\mm_n\|_2^2}{\|\mm\|_2^2}, 
\end{align}
which is bounded as $1 \leq \rho_0^2 \leq N$.  The coherence $\rho_0^2$ achieves the lower bound when the energy is equally distributed among the inputs, and the upper bound is attained when all of the energy is localized in one of the inputs, and the rest of them are all zero. 

As mentioned earlier, we want each of the inputs to reside in  some ``generic" $K$-dimensional subspace, which we realize by choosing $\vm{C}_n$'s to be iid Gaussian matrices, i.e.,
\begin{align}\label{eq:Cn}
C_n[\o,k] \sim \text{Normal}\big(0,\tfrac{1}{L}\big)~~\forall (\l,n,k) \in [L]\times[N]\times [K].
\end{align}
A  ``generic" $K$-dimensional subspace refers to most of the $K$-dimensional subspaces in the entire continuum of $K$-dimensional subspaces of $\mathbb{R}^L$, however, one must also be mindful that such generic subspaces may not arise naturally in applications, and may have to be introduced by design as will be demonstrated in a stylized channel-estimation application in Section \ref{sec:app}. 

Ultimately, we are working with the rows $\vm{c}_{\l,n}$'s of the matrix $\sqrt{L}\vm{F}\vm{C}_n$ as defined in \eqref{eq:meas}. As the columns of $\vm{C}_n$ are real and $\vm{F}$ is an orthonormal matrix, the columns of $\vm{F}\vm{C}_n$ are also Gaussian vectors with a conjugate symmetry. Hence, the rows $\vm{c}_{\l,n}$ are distributed as\footnote{The construction in \eqref{eq:cl-construction} is explicitly for even $L$ but can be easily adapted to the case when $L$ is odd.}
\begin{align}\label{eq:cl-construction}
\cc_{\l,n} &= \begin{cases}
\mbox{Normal}(0,\vm{I}) & \l = 1,(L/2)+1,~n \in [N]\\
\mbox{Normal}(0,2^{-1/2}\vm{I})+j\mbox{Normal}(0,2^{-1/2}\vm{I})& \l = 2,\ldots, (L/2),~n\in [N]
\end{cases}\\
\cc_{\l,n} &= \bar{\vm{c}}_{L-\l+2,n},~~ \l = (L/2)+2,\ldots, L,~n\in[N].\notag
\end{align}
Note that the vectors $\vm{c}_{\o,n}$'s are independently instantiated for every $ n \in [N]$. On the other hand, the vectors $\vm{c}_{\o,n}$ are no longer independent for every $\o \in [L]$, rather the independence is retained only for $\o \in \{1, 2,\ldots,L/2+1\}$. However, the $\vm{c}_{\o,n}$'s are still uncorrelated for $\forall \o \in [L]$; a fact which is crucial in the analysis to follow later. We are now ready to state the main result. 

\begin{thm}\label{thm:main}
	Suppose the bases $\{\vm{C}_n\}_{n=1}^N$ are constructed as in \eqref{eq:cl-construction} , and the coherences $\mu_{\max}^2$, $\mu_0^2$, and $\rho_0^2$ of the basis matrix $\vm{B}$, and the expansion coefficients $\hh$, and $\{\mm_n\}_{n=1}^N$ are as defined above. Furthermore, to ease the notation, set 
	 \[
     \alpha_1 = \log(K\log(LN)), \quad \text{and}\quad \alpha_2 = \log(S\log(LN)).
     \]
     Then for a fixed $\beta \geq 4$, there exists a constant $C^\prime_\beta = \mathcal{O}(\beta)$, such that if 
	\[
	\max(\mu_0^2\alpha_1 K, \mu_{\max}^2S \alpha_2\log^2S) \leq \frac{L}{C_{\beta}^\prime \alpha_1 \log^2(LN)}~\mbox{and}~ N \geq C^\prime_{\beta}\rho_0^2\alpha_1\log(LN),
	\]
	then $\X_0 = \hh(\mm_1^*,\mm_2^*,\dots,\mm_N^*)$ is the unique solution to  \eqref{eq:convex-relaxation} with probability at least $1- \mathcal{O}((LN)^{4-\beta})$, and we can recover $N$ inputs $\{\xx_n\}_{n=1}^N$ and $\ww$ (within a scalar multiple) from $N$ convolutions $\{\vm{y}_n = \ww * \xx_n\}_{n=1}^N$.
\end{thm}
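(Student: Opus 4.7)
I would follow the standard dual-certificate approach to nuclear-norm minimization (as in Gross, Recht, Cand\`es--Plan), adapted to the blind-deconvolution operator $\cA$ of \eqref{eq:cA}. Writing $\X_0 = \hh\mm^*$ and $\vm{u}\vm{v}^* := (\hh/\|\hh\|_2)(\mm/\|\mm\|_2)^*$, let $\PT$ denote the orthogonal projection onto the tangent space to the rank-one manifold at $\X_0$ and $\PTc$ its complementary projection. A sufficient condition for $\X_0$ to be the unique minimizer of \eqref{eq:convex-relaxation} is the combination of a local isometry $\|\PT\cA^*\cA\PT - \PT\| \le 1/2$, an operator-norm bound on $\cA\PT$, and an inexact dual certificate $\vm{Y}=\cA^*\vm{\lambda}$ satisfying $\|\PT\vm{Y}-\vm{u}\vm{v}^*\|_F$ very small and $\|\PTc\vm{Y}\|<1/2$.

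\textbf{Step 1: Local isometry.} The operator $\PT\cA^*\cA\PT$ decomposes as a sum over $(\ell,n)\in[L]\times[N]$ of rank-one terms built from $\PT(\bh_\ell\vm{\phi}_{\ell,n}^*)$. I would bound its deviation from $\PT$ via matrix Bernstein, using independence of $\vm{c}_{\ell,n}$ across $n$ and across the free half $\ell\in\{1,\ldots,L/2+1\}$ of \eqref{eq:cl-construction}. The deterministic inputs required are (i) $|\langle\bh_\ell,\hh\rangle|^2/\|\hh\|_2^2 \le \mu_0^2/L$, (ii) $\|(\bh_\ell)_\Omega\|_2^2 \le S\mu_{\max}^2/L$ with $\Omega=\operatorname{supp}(\hh)$, and (iii) Gaussian concentration of $|\langle\vm{\phi}_{\ell,n},\mm\rangle|$ and $\|\vm{\phi}_{\ell,n}\|_2$, with $\rho_0^2$ balancing the contribution across $n$. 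Bound (ii) is how the sparsity $S$ enters the sample complexity despite the absence of an $\ell_1$ term in the relaxation.

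\textbf{Step 2: Golfing certificate.} To construct $\vm{Y}$ I would partition the free half of frequencies into $P=O(\log(LN))$ roughly balanced disjoint blocks $\Gamma_1,\ldots,\Gamma_P$, then iterate
\[
\vm{Y}_p = \vm{Y}_{p-1} + \cA_{\Gamma_p}^*\cA_{\Gamma_p}\vm{Z}_{p-1},\qquad \vm{Z}_p = \vm{u}\vm{v}^* - \PT\vm{Y}_p,
\]
starting from $\vm{Y}_0=0$ and $\vm{Z}_0=\vm{u}\vm{v}^*$. Step 1 applied blockwise yields geometric decay of $\|\vm{Z}_p\|_F$, so $\PT\vm{Y}_P\to\vm{u}\vm{v}^*$ exponentially in $P$. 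The auxiliary vectors $\hh'_{n,p}$ and $\hh''_{n,n'}$ in the definition of $\mu_0^2$ arise precisely as the column-space components of the intermediate residuals $\vm{Z}_{p-1}$ seen at stage $p$; bounding $\|\hat{\vm{B}}\hh'_{n,p}\|_\infty$ by $\mu_0\|\hh'_{n,p}\|_2/\sqrt{L}$ is what preserves uniform control of the relevant inner products after conditioning on earlier stages.

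\textbf{Main obstacle.} The crux is the spectral-norm bound $\|\PTc\vm{Y}_P\|<1/2$, decomposed telescopically as a sum of $\|\PTc\cA_{\Gamma_p}^*\cA_{\Gamma_p}\vm{Z}_{p-1}\|$. Each summand is a random matrix of the form
\[
\sum_{(\ell,n)\in\Gamma_p\times[N]} \langle\bh_\ell\vm{\phi}_{\ell,n}^*,\vm{Z}_{p-1}\rangle\;\bh_\ell\vm{\phi}_{\ell,n}^*,
\]
and the plan is to condition on all Gaussians from earlier stages (so $\vm{Z}_{p-1}$ becomes deterministic), then apply matrix Bernstein or a Rudelson-type noncommutative Khintchine bound to the remaining independent rank-one Gaussian outer products, again exploiting independence in $n$ and in the $\ell$-blocks. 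The delicate bookkeeping is to keep $\mu_0^2$, $\mu_{\max}^2$, and $\rho_0^2$ appearing only \emph{linearly} and never multiplicatively in the sample complexity; this is what forces the $\max$ split between $\mu_0^2\alpha_1 K$ and $\mu_{\max}^2 S\alpha_2\log^2 S$ in the theorem, and controlling the moments tightly at each of the $O(\log(LN))$ golfing stages is the principal technical expense.
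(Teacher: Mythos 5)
There is a genuine gap: your proposal is the \emph{standard} golfing argument, and the central difficulty of this problem is precisely that the standard argument breaks down here. Two related things go wrong. First, your tangent space is the rank-one tangent space $T_1$ alone, whereas the proof must work on $T_1\cap T_2$, the rank-two matrices whose rows are supported on $\Omega=\operatorname{supp}(\hh)$; the sparsity of $\ww$ does not enter merely through the deterministic bound $\|\PO\bh_\ell\|_2^2\le \mu_{\max}^2 S/L$, but through the structure of the projector itself, which keeps every golfing residual $\w_{p-1}=\PT(\vm{Y}_{p-1})-\hh\mm^*$ column-$S$-sparse. Second, your key step --- ``condition on all Gaussians from earlier stages so $\vm{Z}_{p-1}$ becomes deterministic, then apply matrix Bernstein'' --- is not available. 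The blocks $\Gamma_p=\{(\Delta_{n,p},n)\}_n$ come from a without-replacement partition of $[L]$ for each $n$, so the random sets $\Delta_{n,1},\dots,\Delta_{n,P}$ are mutually dependent; $\w_{p-1}$ depends on the earlier sets and hence on $\Delta_{n,p}$ itself. Moreover, for a fixed realization the per-block frame $\sum_{\ell\in\Delta_{n,p}}\PO\bh_\ell\bh_\ell^*\PO$ is not a multiple of $\PO$, so with your iteration $\vm{Y}_p=\vm{Y}_{p-1}+\cA_{\Gamma_p}^*\cA_{\Gamma_p}\vm{Z}_{p-1}$ the conditional expectation $\E\,\PT\cA_p^*\cA_p(\w_{p-1})$ is \emph{not} proportional to $\w_{p-1}$: there is a bias, and the residual need not contract geometrically.

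The fix in the paper is to insert a de-biasing operator into the iteration, $\vm{Y}_p=\vm{Y}_{p-1}+\cA_p^*\cA_p\mathcal{S}_p^\ddagger(\hh\mm^*-\PT(\vm{Y}_{p-1}))$, where $\mathcal{S}_p^\ddagger$ inverts the realized frame $\vm{S}_{n,p}=\sum_{\ell\in\Delta_{n,p}}(\PO\bh_\ell)(\PO\bh_\ell)^*$ on $\Omega$, so that $\E\,\PT\cA_p^*\cA_p\mathcal{S}_p^\ddagger(\w_{p-1})=\w_{p-1}$ exactly, whatever the partition turned out to be; and then to replace the per-iterate conditioning you propose by \emph{uniform} bounds --- a restricted-isometry property of the random $Q\times L$ submatrices of $\hat{\vm{B}}$ over all $S$-sparse vectors --- so that the estimates hold for the dependent, random $\w_{p-1}$ simultaneously. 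This is exactly why the result requires $\ww$ to be sparse (inverting $\vm{S}_{n,p}$ on $\Omega$ needs $Q\gtrsim \mu_{\max}^2 S\log^2 S\cdot\log$-factors, which is where the $S\log^2 S$ term in the theorem comes from), why $T_2$ must be intersected into the tangent space, and why the auxiliary vectors in the definition of $\mu_0^2$ are $\hh'_{n,p}=\vm{S}_{n,p}^\ddagger\hh$ and $\hh''_{n,n'}=\vm{S}_{n,2}^\ddagger\vm{S}_{n',1}\hh$ --- deterministic-given-the-partition perturbations of $\hh$ produced by the de-biasing operators, not, as you suggest, column components of the residuals $\vm{Z}_{p-1}$. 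Without these two ingredients your Step 2 and ``Main obstacle'' paragraphs do not go through.
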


The result above crudely says that in an $L$ dimensional space, an incoherent vector, $S$-sparse in some known basis, can be separated successfully almost always from $N$ vectors (with equal energy distribution) lying in known random subspaces of dimension $K$ whenever $ K+S\log^2S \lesssim L /\log^4(LN)$, and $N \gtrsim \log^2(LN)$.

\subsection{Application: Blind channel estimation using random codes}\label{sec:app}

A stylized application of the blind system identification directly arises in multipath channel estimation in wireless communications. The problem is illustrated in Figure \ref{fig:channel_estimation}. A sequence of length-$K$ messages $\mm_1,\mm_2,\ldots, \mm_N$ are coded using taller $L \times K$ coding matrices $\vm{C}_1,\vm{C}_2,\ldots, \vm{C}_N$, respectively. The coded messages $\xx_n = \vm{C}_n\mm_n$, $n \in 1,2,3,\ldots, N$ are then transmitted one after the other over an unknown multipath channel, characterized by a sparse impulse response $\ww \in \reals^L$.  The transmitted message $\xx_n$ arrives at the receiver through multiple paths. Each path introduces its own delay and fading. All the delayed and scaled copies of $\xx_n$ overlap in the free space communication medium. The received signal is modeled as the convolution of $\xx_n$ with $\ww$. This action is repeated with same delay and fading coefficients for every $\xx_n$. In other words, we are assuming here that the channel's impulse response is more or less fixed over the duration of the transmission of these $N$ coded messages, which justifies the use of a fixed impulse response $\ww$ in each of the convolutions. The task of the decoder is to discover both the impulse response and the messages by observing their convolutions $\vm{y}_n = \ww * \xx_n, \ n = 1,2,3,\ldots, N$, and using the knowledge of the coding matrices.

Our main result in Theorem \ref{thm:main} took $\ww$ as a vector that is sparse in some incoherent basis $\vm{B}$. In the application discussed in the last paragraph, we can simply take the basis $\vm{B}$ to be the standard basis; perfectly incoherent. The location of each non-zero entry in $\ww$ depicts the delay in the arrival time of  a copy of coded message at the receiver from a certain path and the value of the entry known as the fading coefficient incorporates the attenuation and the phase change encountered in that path. The coherence parameter $\mu_0^2$ is roughly just the peak value of the normalized frequency in the spectrum of the channel response. For this particular application, we can assume that $\rho_0^2 \approx 1$ as the transmitter energy is equally distributed among the message signals. Our results prove that if each of the message is coded using a random coding matrix, and the channel response has approximately a flat spectrum, then we can recover the messages and the channel response jointly almost always by solving \eqref{eq:convex-relaxation}, whenever the length $K$ of the messages, the sparsity $S$ of the channel impulse response $\ww$, and the codeword length $L$ obey $ K+S\log^2S\lesssim L/\log^4 (LN)$, and the number $N$ of messages that convolve with the same instantiation of the channel roughly exceed $ \log^2(LN)$.

Our results here can be thought of as an extension to the blind deconvolution result that appeared in \cite{ahmed2012blind}, where we only have a one-time look at the unknown channel --- we observe only a single convolution of the impulse response with a randomly coded message. Consequently, only fading coefficients could be resolved in \cite{ahmed2012blind} and not the delays in the impulse response $\ww$ of the channel. In other words, one needs to know the subspace or support of $\ww$ in advance. In general, both fading coefficient, and delays are equally important pieces of information to decipher the received message in wireless communications.  In this paper, we take  advantage of several looks at the same channel as it remains fixed during the transmission of $N$ messages. This enables us to estimate both the fading coefficients and the unknown delays at the same time. In general, we do not assume that the vector $\ww$ lives in a known subspace as was the case in \cite{ahmed2012blind}.

\begin{figure}
	\begin{center}
		\includegraphics[trim = 2cm 8cm 2cm 3cm, scale = 0.65 ]{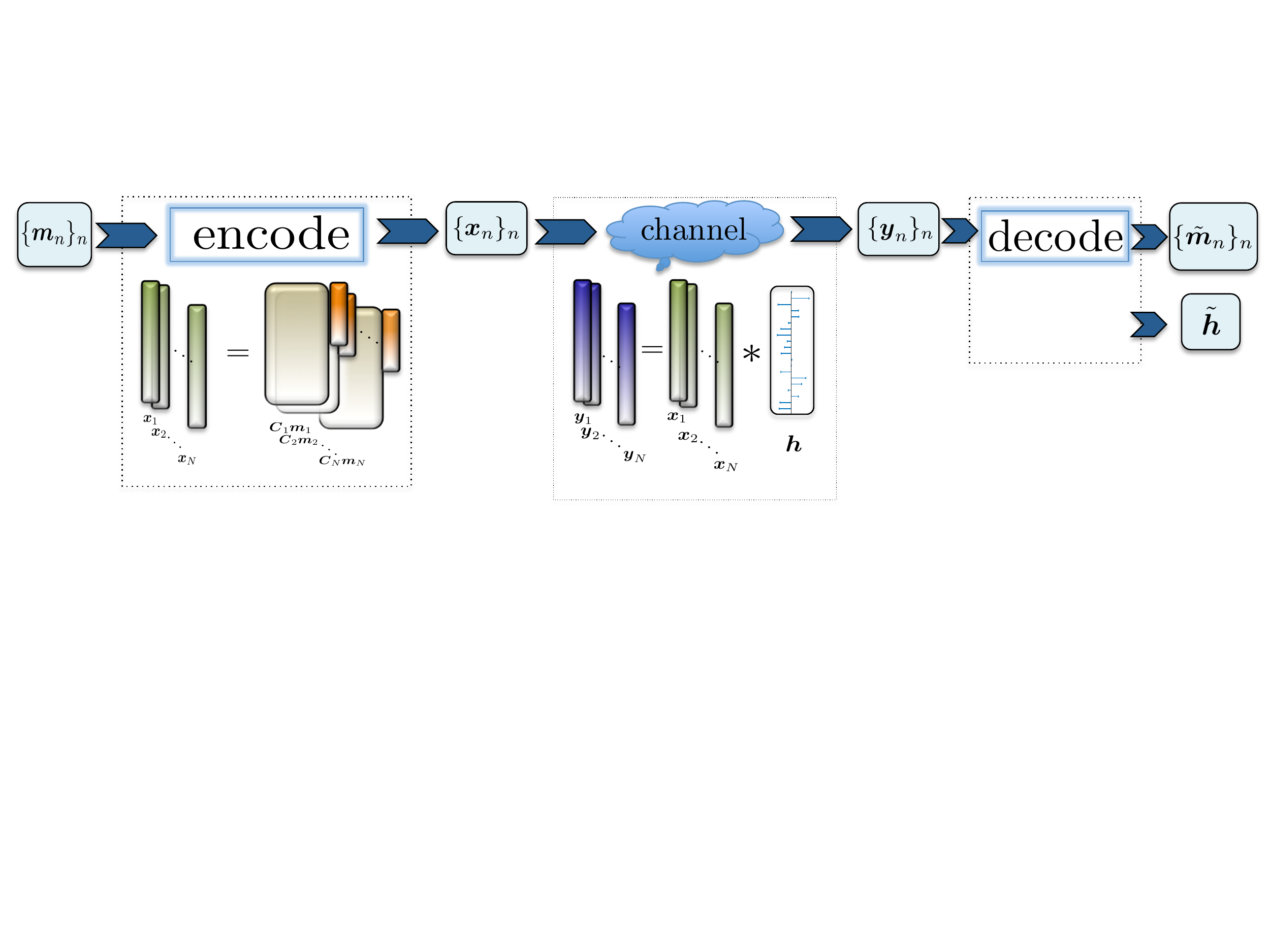} 
	\end{center}
	\caption{\small\sl Blind channel estimation. Each meassage $\mm_n$ in the block $\{\mm_n\}_n = \{\mm_1,\mm_2,\ldots,\mm_N\}$ of messages is coded with a corresponding tall coding matrix $\vm{C}_n$ and the block of coded messages $\{\xx_n\}_n$ is sequentially transmitted over an arbitrary unknown channel. This results in convolution of each of the coded messages $\{\xx_n\}_n$ with an unknown impulse response $\hh$. The decoder receives the convolutions $\{\vm{y}_n\}_n$ and it discovers both the messages $\{\mm_n\}_n$ and the unknown channel $\hh$ (within a global scalar). }
	\label{fig:channel_estimation}
\end{figure}
\subsection{Related work}\label{sec:related-work}

In a nutshell, and to our knowledge, this paper is the first in the literature to theoretically deal with an impulse response that belongs to a low-dimensional subspace that is not fixed ahead of time but needs to be discovered.

The lifting strategy to linearize the bilinear blind deconvolution problem was proposed in \cite{ahmed2012blind}, and it was rigorously shown that two convolved vectors in $\reals^L$ can be separated blindly if their $K$, and $S$ dimensional subspaces are known and one of the subspace is generic and the other is incoherent in the Fourier domain. It is further shown using the dual certificate approach in low-rank matrix recovery literature \cite{candes09ex,gross11re,recht11si} that both the vectors can be deconvolved exactly when $(K+S) \lesssim L/\log^3L$.  This paper extends the single input blind deconvolution result to multiple diverse inputs, where we observe the convolutions of $N$ vectors with known subspaces with a fixed vector only known to be sparse in some known basis. 

A natural question that arises is whether multiple ($N>1$) inputs $\xx_n$'s are necessary in our problem to identify $\ww$ in \eqref{eq:w=Bh}. The answer is no in this specific case as even in the single input case $N=1$, under the same random subspace assumption on $\xx_1$, and replacing the nuclear norm in \eqref{eq:convex-relaxation} with the standard $\ell_1$ norm (sum of absolute entries) will separate $\xx_1$, and $\ww$, however, the sample complexity $L$ will be suboptimal, and of the order of $SK$ to within log factors. In the general single input case; under no random subspace assumption on $\xx_1$, it is shown in \cite{choudhary2014sparse} that $\ww$, and $\xx_1$ are not identifiable from $\vm{y} = \ww*\xx_1$. 

A related question, in a sense dual to that presented in the previous section, is multichannel blind deconvolution. See Figure \ref{fig:Multi_channel}. In discrete time this problem can be modeled as follows. An unknown noise source $\vm{w} \in \reals^L$ feeds $N$ unknown multipath channels characterized by $K$-sparse impulse responses $\xx_n \in \reals^L, \ n = 1,2,3, \ldots, N$.  The receiver at each channel observes several delayed copies of $\ww$ overlapped with each other, which amounts to observing the convolutions $\vm{y}_n = \ww * \xx_n,\ n = 1,2,3,\ldots, N$. The noise $\ww$ can be modeled as a Gaussian vector, and is well dispersed in the frequency domain, i.e., the vector $\ww$ is incoherent according to the definition \eqref{eq:coherence_h}. The fading coefficients of the multipath channels are unknown, however, we assume that the delays are known. This amounts to knowing the subspace of the channels and the unknown impulse responses can be expressed as $\xx_n = \vm{C}_n\vm{m}_n$ for every $n \in [N]$, where the columns of the known $L \times K$ coding matrices are
now the trivial basis vectors and $\mm_n$ contain $K$ unknown fading coefficients in each channel. The indices (delays) of the non-zeros of every impulse response $\xx_n$ can be modeled as random, in which case the coding matrices are composed of the random subset of the columns of the identity matrix. With the coding matrix known and random, the multichannel blind deconvolution problem is in spirit the dual of the blind system identification from diverse inputs presented in this paper, where the roles of the channel and the source signal are reversed. However, the results in Theorem \ref{thm:main} are explicitly derived for dense Gaussian coding matrices and not for random sparse matrices. It is worth mentioning here that in many practical situations the non zeros in the channel impulse response are concentrated in the top few indices making the assumption of known subspaces (delays) plausible.

After \cite{ahmed2012blind}, a series of results on blind deconvolution appeared under different sets of assumptions on the inputs. For example, the result in \cite{bahmani2015lifting} considers an image debluring problem, where the receiver observes the $N$ subsampled circular convolutions of an $L$-dimensional image $\xx$, modulated with random binary waveforms, with an $L$-dimensional bandpass blur kernel $\hh$ that lives in a known $K$-dimensional subspace.  Then it is possible to recover both the image and a incoherent blur kernel using lifting and nuclear norm minimization, whenever $N \gtrsim \log^3 L \log \log K$, where $N/L$ is also the number of subsampling factor of each convolution. The result shows that it is possible to deconvolve two unknown vectors by observing multiple convolutions---each time one of the vectors is randomly modulated and is convolved with the other vector living in a known subspace. We are also observing multiple convolutions but one of the vectors in the convolved pair is changing every time and the subspace of the other is also unknown, this makes our result much broader.  

Another relevant result is blind deconvolution plus demixing \cite{ling2015blind}, where one observes sum of $N$ different convolved pairs of $L$-dimensional vectors lying in $K$, and $S$ dimensional known subspaces; one of which is generic and the other is incoherent in the Fourier domain. Each generic basis is chosen independently of others. The blind deconvolution plus the demixing problem is again cast as a rank-$N$ matrix recovery problem. The algorithm is successful when $N^2(K+S) \lesssim L / \log^4 L$. 

An important recent article from the same group settles the recovery guarantee for a regularized gradient descent algorithm for blind deconvolution, in the single-input case and with the scaling $K + S \lesssim L / \log^2 L$ \cite{li2016rapid}. This result, however, makes the assumption of a fixed subspace for the sparse impulse response. Note that gradient descent algorithms are expected to have much more favorable runtimes than semidefinite programming, when their basin of attraction can be established to be wide enough, as in \cite{li2016rapid}.

The multichannel blind deconvolution was first modeled as a rank-1 recovery problem in \cite{romberg2013multichannel} and the experimental results show the successful joint recovery of Gaussian channel responses with known support that are fed with a single Gaussian noise source.  Other interesting works include \cite{xu1995least,gurelli1995evam}, where a least squares method is proposed. The approach is deterministic in the sense that the input statistics are not assumed to be known though the channel subspaces are known. Some of the results with various assumptions on input statistics can be found in \cite{tong1994blind}. Owing to the importance of the blind deconvolution problem, an expansive literature is available and the discussion here cannot possibly cover all the related material, however, an interested reader might start with the some nice survey articles \cite{levin2011understanding,liu1996recent,tong1998multichannel} and the references therein. 

It is also worth mentioning here a related line of research in the phase recovery problem from phaseless measurements \cite{candes2013phaselift,candes2015phase}, which happen to be quadratic in the unknowns. As in bilinear problems, it is also possible to lift the quadratic phase recovery problem to a higher dimensional space, and solve for a positive-definite matrix with minimal rank that satisfies the measurement constraints. 

\begin{figure}[!ht]
	\begin{center}
		\includegraphics[trim = 2cm 8cm 2cm 0cm, scale = 0.55 ]{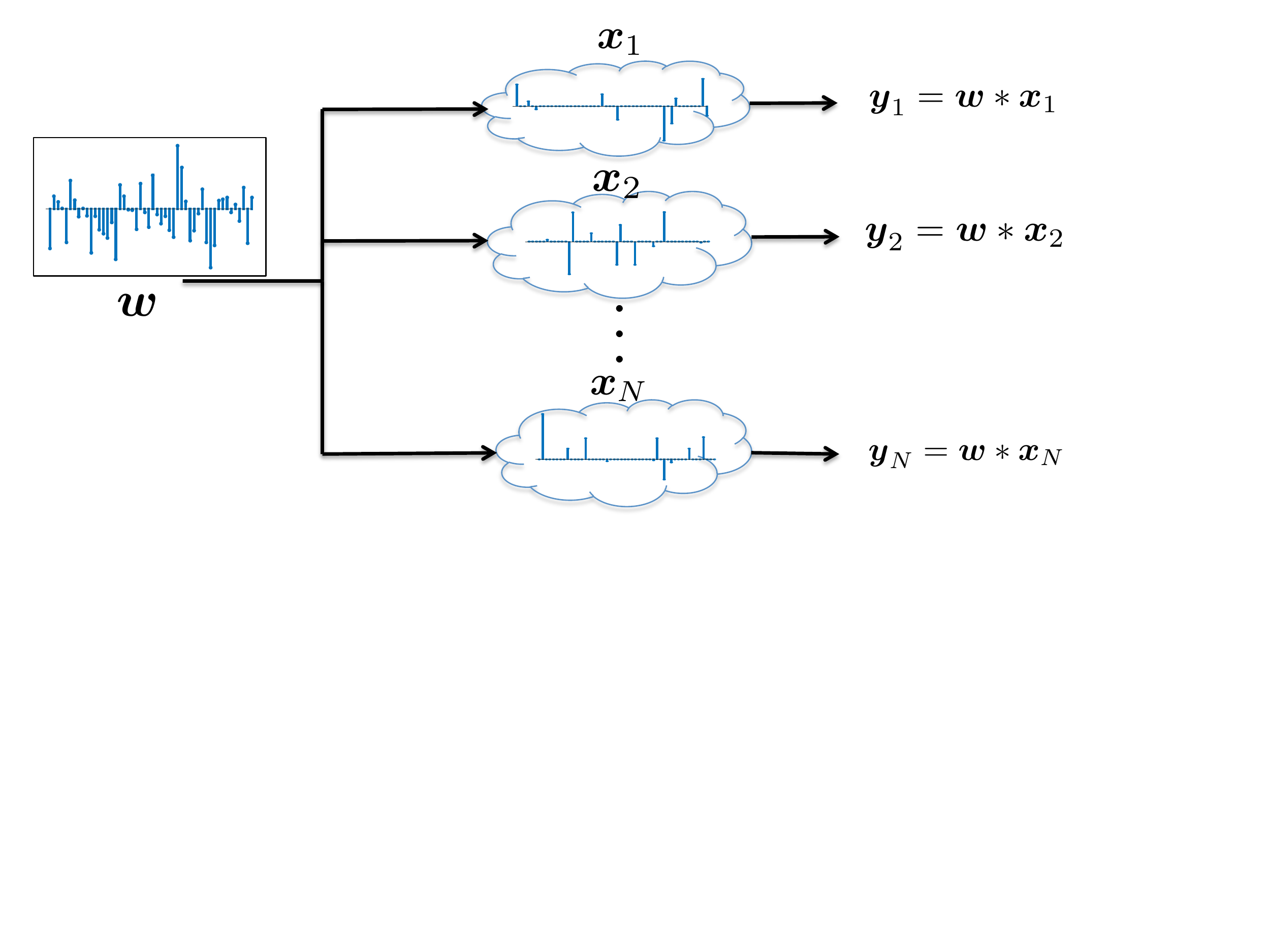} 
	\end{center}
	\caption{\small\sl Blind multichannel estimation. An unknown noise source $\vm{w}$ feeds  $N$ unknown multipath channels characterized by sparse impulse responses $\{\xx_1,\xx_2,\ldots, \xx_N\}$. We observe the convolutions at the receivers and the task is to recover the channel responses together with the noise signal. The problem can be thought of as the dual to the blind channel estimation problem where the roles of channels and the source signals are reversed: a fixed incoherent vector is now fed into all the channels. The channel impulse responses can be reliably modeled with Bernoulli Gaussian distribution. }
	\label{fig:Multi_channel}
\end{figure}
\section{Numerical Simulations}\label{sec:Exps}
As an alternative to the computationally expensive semidefinite program in \eqref{eq:convex-relaxation}, we rely on a heuristic non-linear program: 
\begin{align}\label{eq:non-linear-program}
\hat{\vm{H}}, \hat{\vm{M}} := ~&\underset{\vm{H}, \vm{M}}{\argmin}\quad \|\vm{H}\|_{\F}^2 + \|\vm{M}\|_{\F}^2\\
&\text{subject to} \quad \hat{y}_n[\ell] = \<\vm{b}_{\ell}\vm{\phi}_{\ell,n}^*, \vm{H}\vm{M}^*\> , \quad (\ell,n) \in [L]\times[N]\notag,
\end{align}
which solves for matrices $\vm{H} \in \comps^{L \times R}$, and $\vm{M} \in \comps^{KN \times R}$. The semidefinite constraint in \eqref{eq:convex-relaxation} is always satisfied under the substitution $\vm{X} = \vm{H}\vm{M}^*$. The non-linear program was proposed in \cite{burer2003nonlinear}, and the results therein showed that  all the local minima of \eqref{eq:non-linear-program} are the global minima of \eqref{eq:convex-relaxation} when $R > \mbox{rank}(\hat{\vm{X}})$, where $\hat{\vm{X}}$ is the optimal solution of \eqref{eq:convex-relaxation}. Since in our case the optimal solution $\hh\mm^*$ is rank-1, we solve \eqref{eq:non-linear-program} with $R = 2$, and declare recovery when $\hat{\vm{H}}$, and $\hat{\vm{M}}$ are rank deficient. The best rank-1 approximation of $\hat{\vm{H}}\hat{\vm{M}}^*$ constitutes the solution of \eqref{eq:convex-relaxation}.  The non-linear program considerably speeds up the simulations as instead of operating in the lifted space like \eqref{eq:convex-relaxation} with $LKN$ variables involved, it operates almost in the natural parameter space with much fewer number $2(L+KN)$ of variables. We use an implementation of LBFGS  available in \cite{MINFUNC} to solve \eqref{eq:non-linear-program}. An additional advantage of \eqref{eq:non-linear-program} is that no suitable initialization is required.  Comparatively, the recently proposed gradient descent scheme \cite{li2016rapid} for bilinear problems not only requires to solve a separate optimization program to initialize well but also the gradient updates involve additional unnatural regularizer to control the incoherence.

%
%
%

We present phase transitions that validate the sample complexity results in Theorem \ref{thm:main}. The shade in the phase transitions represents the probability of failure determined by counting the frequency of failures in twenty five experiments for each pixel in the phase transitions. We classify the recovered solution $\hat{\vm{X}}$ as a failure if $\|\hat{\vm{X}}-\hh\mm^*\|_{\F} > 10^{-1}$.

In all of the phase transitions, we take $\ww$, and $\mm$ to be Gaussian vectors. Observe that $\ww$ is constructed to be a dense vector with no sparse model. Recall that Theorem \ref{thm:main} restricts $\ww$ to be a sparse vector, however, our simulation results show successful recovery in a more general case of dense $\ww$. This observation is in conformation with our belief that the sparsity assumption on $\ww$ is a result of merely a technical requirement due to the proof method. Very similar phase transitions can be obtained under restrictive sparse model on $\ww$. 

We will present two sets of phase transitions. Each set contains three phase transition diagrams; in each diagram, we fix one of the variables $L$, $K$, and $N$, and vary the other two in small increments and compute the probability of failure every time as outlined earlier in this section.

 In the first set, we mimic the channel estimation problem discussed in Section \ref{sec:app}, and shown in Figure \ref{fig:channel_estimation}. We take $\vm{C}_n$'s to be Gaussian matrices as in \eqref{eq:Cn}. Figure \ref{fig:RandomGauss1} shows that for a fixed $N = 40$, we are able to recover all of the inputs $\xx_n$'s, and $\ww$ as soon as $L \geq 10K$. The phase diagrams in Figure \ref{fig:RandomGauss2} and \ref{fig:RandomGauss3} mainly show that the performance of the algorithm become roughly oblivious to the number $N$ of inputs as soon as $N \geq 10$ for the particular range of $K$, and $L$ considered in the phase transition diagrams. 

In the second set shown in Figure \ref{fig:RandomEye}, we simulate the blind channel estimation problem discussed in Section \ref{sec:related-work}, and shown in Figure \ref{fig:Multi_channel}. This set contains similar phase diagrams as in first set under the same assumptions, the only difference is that the matrices $\vm{C}_n$'s are now the random subsets of the columns of identity. In other words, we take the support of $\xx_n$'s to be random and known. The results are almost exactly the same as in the first set. 

\begin{figure}[!ht]
	\centering
	\subfigure[]{
		\includegraphics[trim = 4cm 5cm 0cm 1.9cm, angle = 90, scale = 0.367]{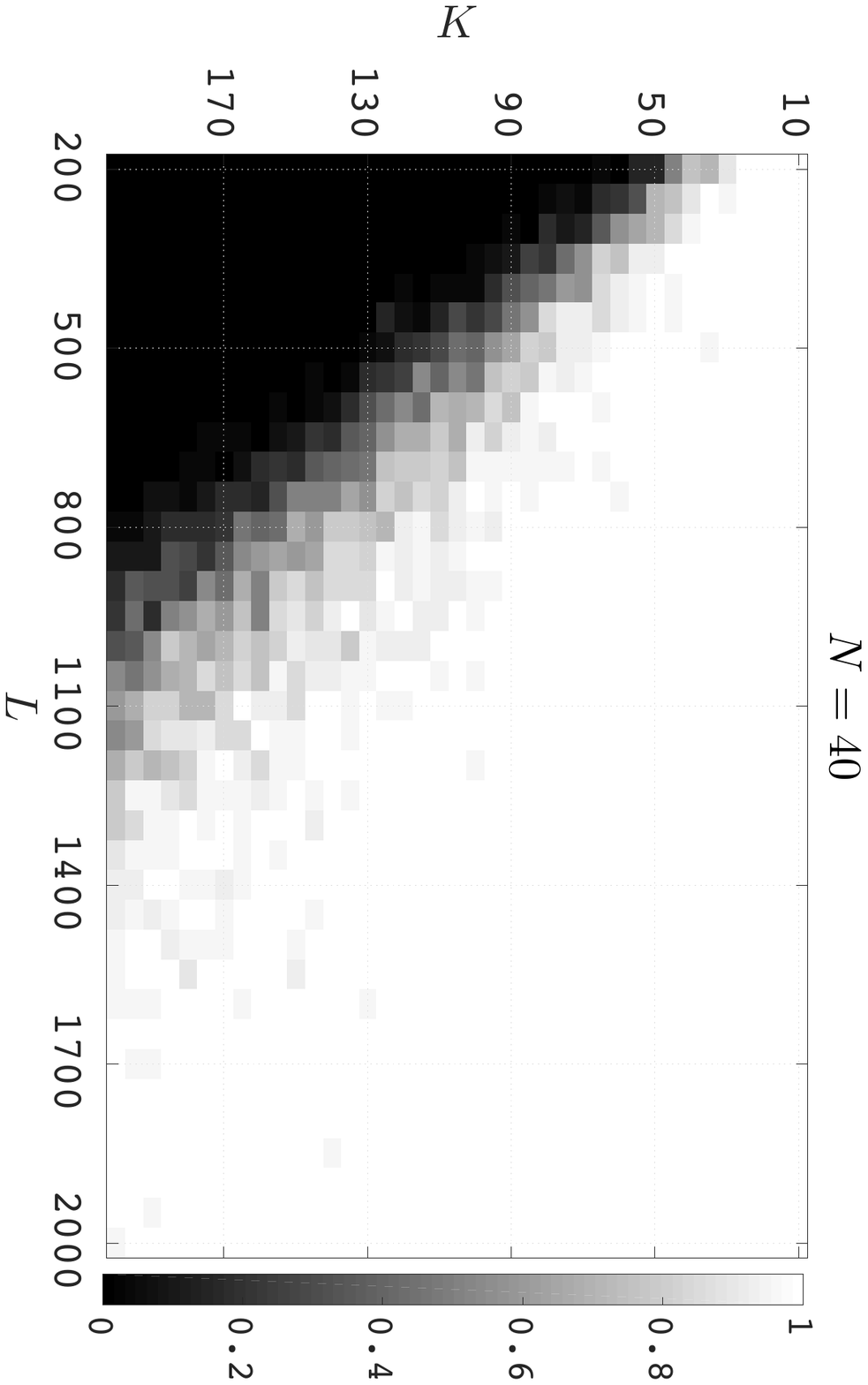}
		\label{fig:RandomGauss1}}
	\subfigure[]{
		\includegraphics[ trim = 4cm 4.5cm 0cm 1cm, angle = 90, scale = 0.36]{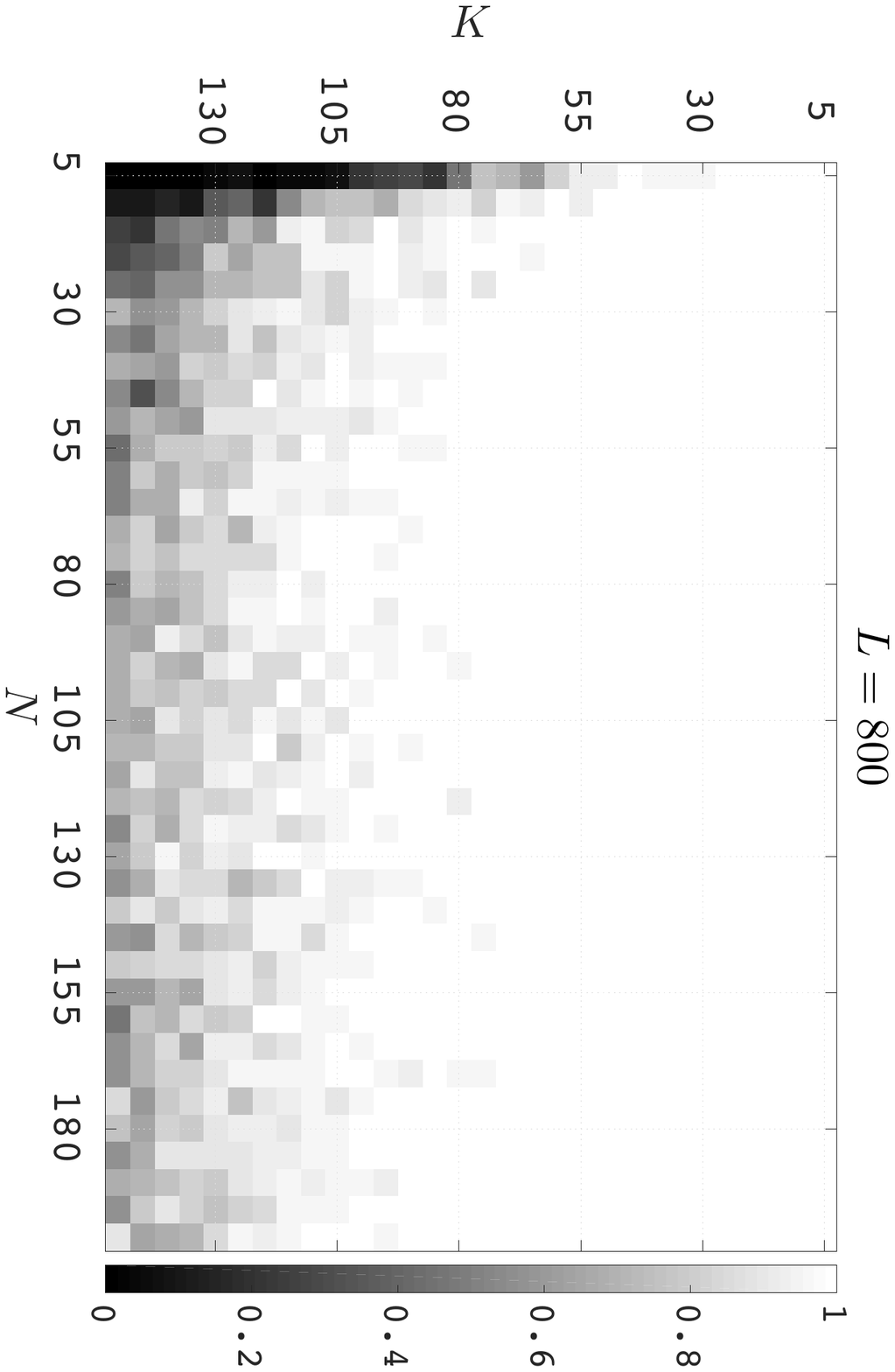}
		\label{fig:RandomGauss2}}
	\subfigure[]{
		\includegraphics[trim = 4cm 4.5cm 4cm -1.5cm, angle = 90, scale = 0.42]{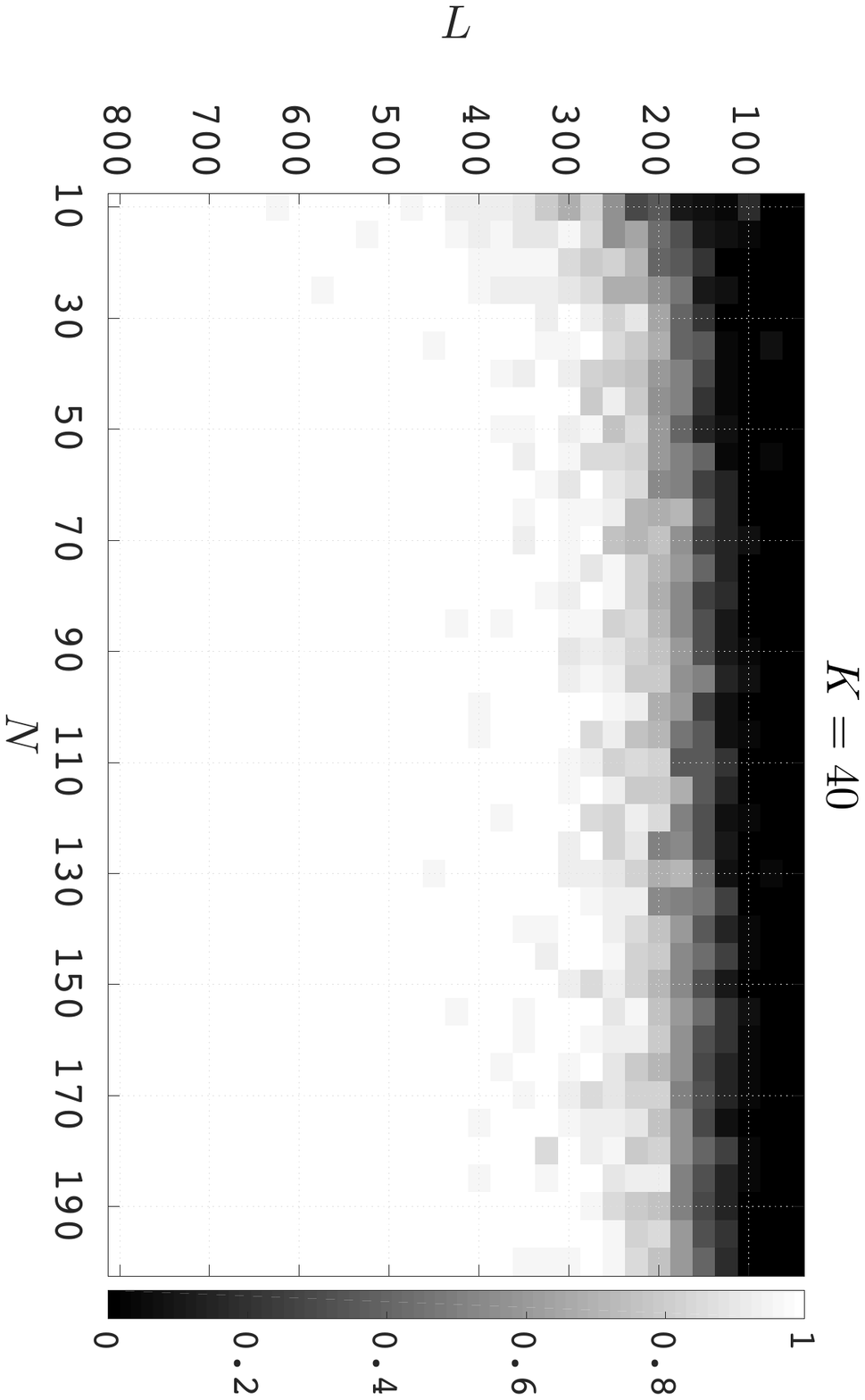}
		\label{fig:RandomGauss3}}
	\caption{\small\sl Empirical success rate for the deconvolution of  $\ww$ and $\xx_1,\xx_2,\ldots,\xx_N$.  Recall that $\xx_n = \vm{C}_n\mm_n$ for every $n = 1,2,3,\ldots, N$. In these experiments, the vectors $\ww$, $\mm_n$ are Gaussian, and $L \times K$ matrices $\vm{C}_n, n =1,2,3,\ldots,N$ are also independent, and Gaussian.  (a) Fix $N = 40$, and vary $L \sim 200 \rightarrow 2000$, $K \sim 10 \rightarrow 190$. Successful reconstruction is obtained with probability one when $L \geq 10K$. (b) Fix $L = 800$, and vary $K \sim 5 \rightarrow 150$, $N \sim 5 \rightarrow 200$. Successful reconstruction is obtained with probability one when $K \leq L/10$, and $N \geq 10$. (c) Fix $K = 40$, and vary $L \sim 50\rightarrow 800$, $N \sim 10 \rightarrow 200$. Successful reconstruction occurs with probability one when $L \geq 10K$, and $N \geq 10$.}
	\label{fig:RandomGauss}
\end{figure}
\begin{figure}[!ht]
	\centering
	\subfigure[]{
		\includegraphics[trim= 2cm 6.5cm -1cm 4.5cm, scale = 0.4]{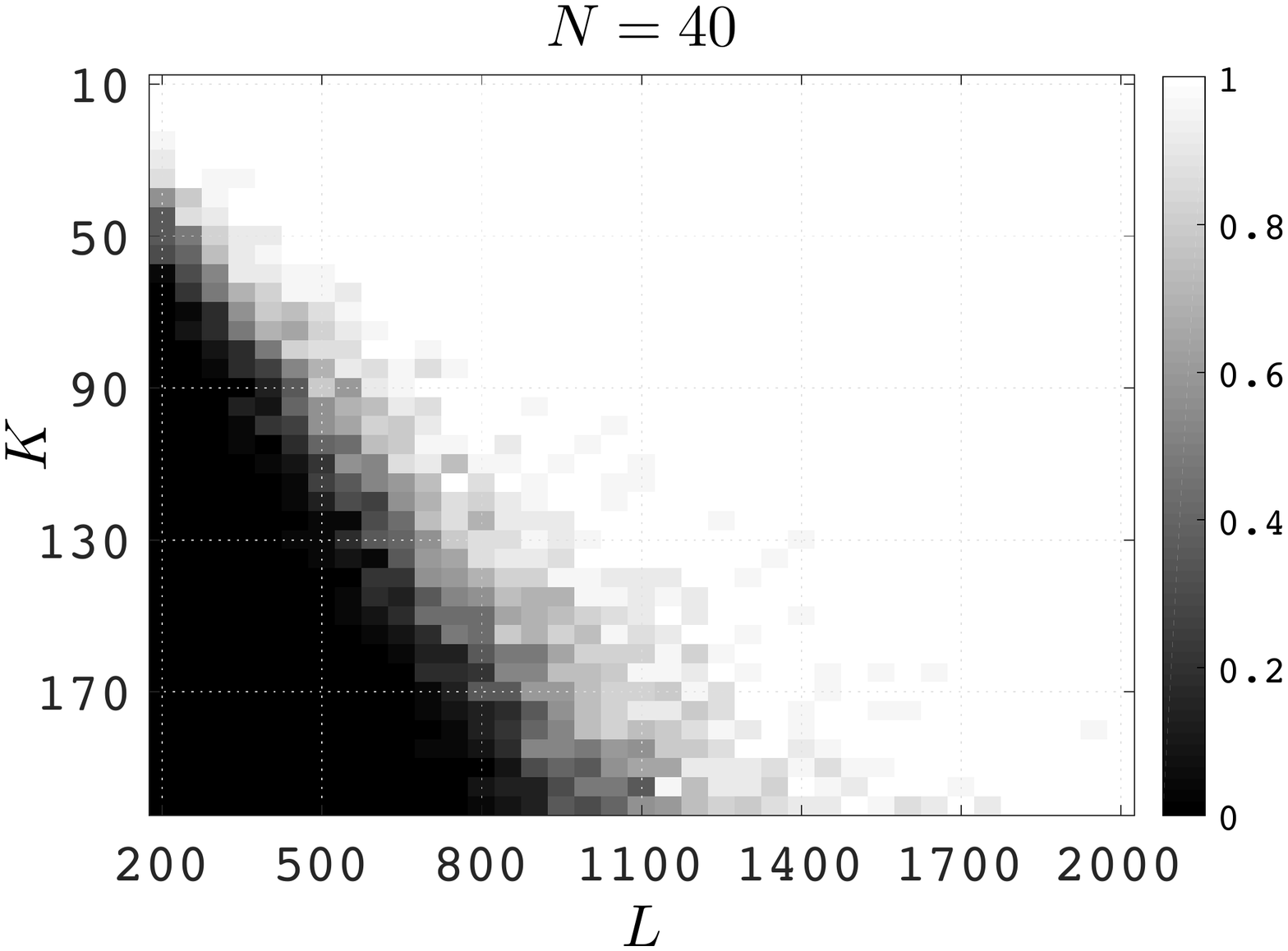}
		\label{fig:RandomEye1}}
	\subfigure[]{
		\includegraphics[trim= 0cm 6.5cm 2.5cm 4.5cm,scale = 0.4]{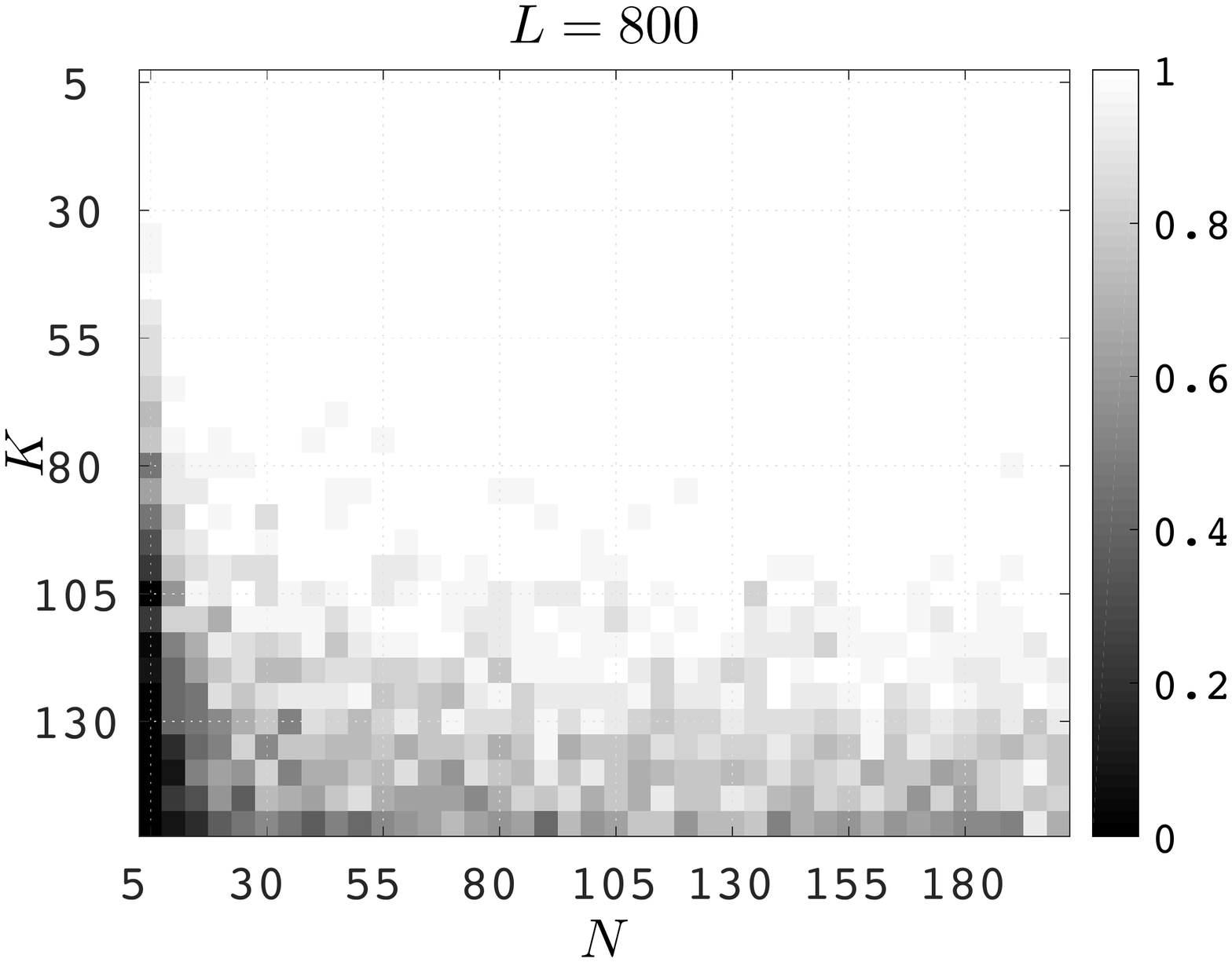}
		\label{fig:RandomEye2}}
	\subfigure[]{
		\includegraphics[trim=0.7cm 6.5cm 2.5cm 6.5cm,scale = 0.4]{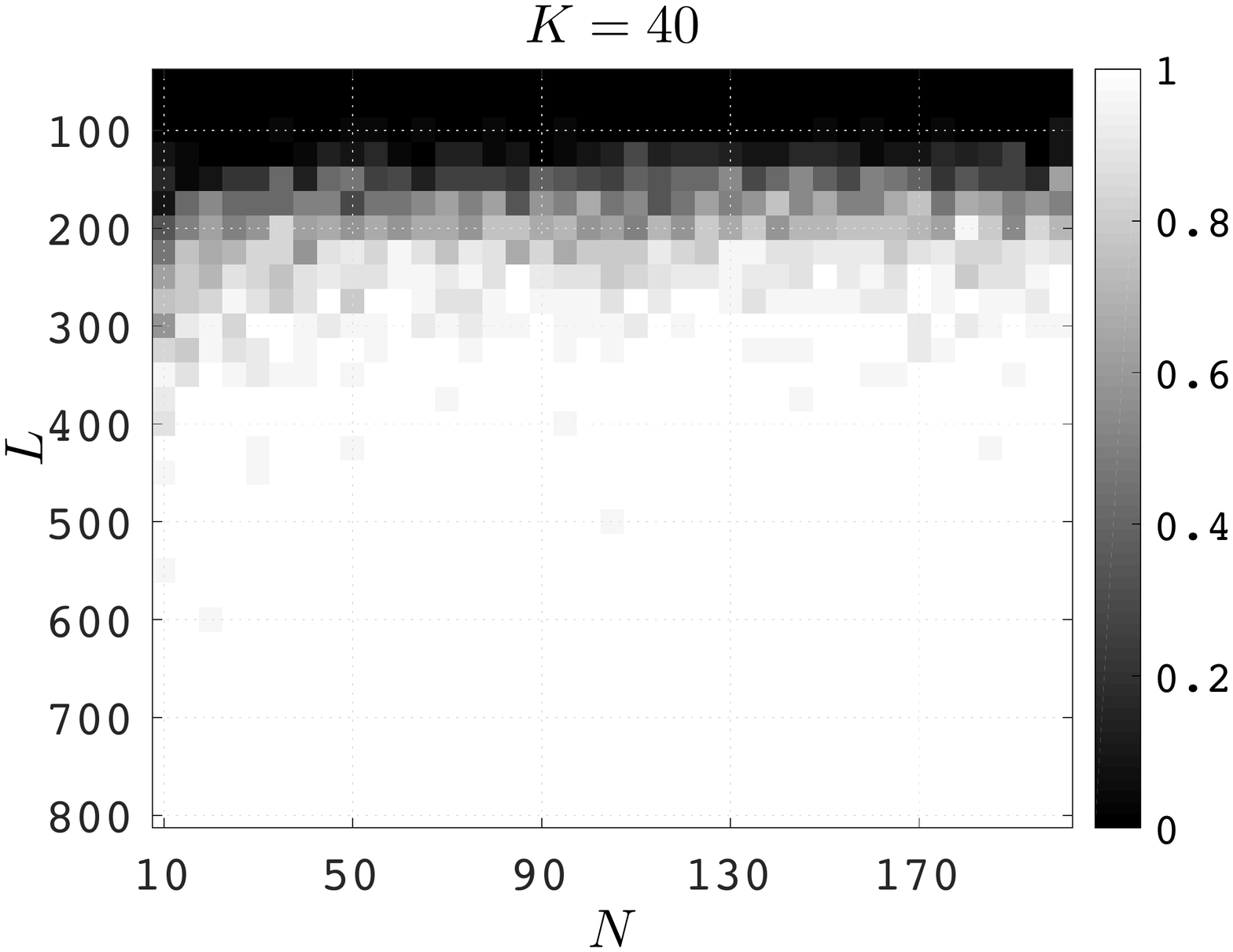}
		\label{fig:RandomEye3}}
	\caption{\small\sl Empirical success rate for the deconvolution of  $\ww$ and $\xx_1,\xx_2,\ldots,\xx_N$.  In these experiments, the vector $\ww$ is Gaussian, and every $\xx_n$ is a sparse vector with random support and its $K$ non-zero entries are Gaussian.  (a) Fix $N = 40$, and vary $L \sim 200 \rightarrow 2000$, $K \sim 10 \rightarrow 190$. Successful reconstruction is obtained with probability one when $L \geq 10K$. (b) Fix $L = 800$, and vary $K \sim 5 \rightarrow 150$, $N \sim 5 \rightarrow 200$. Successful reconstruction is obtained with probability one when $K \leq L/10$, and $N \geq 10$. (c) Fix $K = 40$, and vary $L \sim 50\rightarrow 800$, $N \sim 10 \rightarrow 200$. Successful reconstruction occurs with probability one when $L \geq 10K$, and $N \geq 10$.}
	\label{fig:RandomEye}
\end{figure}

\section{Proof of Theorem \ref{thm:main}}
We are observing the linear measurements as in \eqref{eq:meas} of an unknown $\vm{X}_0 = \hh\mm^*$ such that $\hh$ is an $S$-sparse vector. Define $\Omega: = \mbox{supp}(\hh)$, where $|\Omega| = S$. To show that the solution to the SDP in \eqref{eq:convex-relaxation} equals $\X_0$ with high probability, we establish the existence of a  valid dual certificate \cite{candes09ex,gross11re}. The proof of low-rank recovery using dual certificate method is a standard approach now and has been employed in the literature \cite{ahmed2012blind,recht11si,gross11re} many times before. 
Our construction of the dual certificate uses the golfing scheme \cite{gross11re}, but is unusually technical in that there is a probabilistic dependence between the iterates, which in turn precludes the use of matrix concentration inequalities.

  
  Let $\uu \in \reals^L$, and $\vv \in \reals^{KN}$ be arbitrary vectors, and $\hh$, and $\mm$ be as defined earlier. Let $T_1$ be the linear space of matrices with rank at most two defined as
  \[
  T_1 :=\{\X|\X= \alpha \hh\vv^*+\beta \uu\mm^*, ~\mbox{and}~ \alpha, \beta \in \reals\},
  \]
  and $T_2$ be the space of matrices with rows supported on index set $\Omega$, and is defined as 
 \[
  T_2 : =\{ \X~ | ~\X \in \reals^{L \times KN}, X[\ell, j] = 0 ~\mbox{for} ~ \ell \in \Omega^\perp\},
\]  
where $ \Omega^\perp = [L]\backslash \Omega$. Then we have
\[
T_1\cap T_2:=\{\X|\X= \alpha \hh\vv^*+\beta \uu\mm^*,  \uu_{\Omega^\perp} = \vm{0}, ~\mbox{and}~ \alpha, \beta \in \reals\}.
\]
Note that the matrix of interest $\X_0$ is a member of the space $T_1 \cap T_2$. Let us now define the related projection operators. We start by defining $\PO$ that takes a matrix or a vector with $L$ rows and sets all the rows that are not indexed by the index set $\Omega \subset [L]$ to zero. Mathematically, we can define the projection on the index set $\Omega$ as
  \begin{equation}\label{eq:POmega}
  	\PO(\X):= \vm{I}_{L \times \Omega} \vm{I}_{L \times \Omega}^*\X
  \end{equation}
  where $\vm{I}_{L \times \Omega}$ denotes the submatrix of the $L \times L$ identity matrix with columns indexed by set $\Omega$. The orthogonal projector $\PT$ onto $T_1 \cap T_2$ is then defined as
\begin{equation}\label{eq:PT-def}
\PT(\z) := \hh\hh^*\z + \PO(\z\mm\mm^*)  - \hh\hh^*\z\mm\mm^*,
\end{equation}
and the projector $\PTc$ onto the orthogonal complement $T_1^{\perp}\cup T_2^\perp$ of $T_1\cap T_2$ is then simply $\PTc(\z) = \vm{Z}-\PT(\vm{Z})$. Note that in the definition of the projection above, we assume without loss of generality that $\|\mm\|_2 = \|\hh\|_2 = 1$ as the optimality conditions, presented in Lemma \ref{lem:optimality-conditions} below, for the success of nuclear norm minimization \ref{eq:convex-relaxation}  only involve normalized $\vm{h}$, and $\vm{m}$.

The following lemma gives sufficient conditions on a dual certificate $\vm{Y} \in \text{Range}(\cA^*)$ to guarantee that nuclear-norm minimization program in \eqref{eq:convex-relaxation} produces $\X_0 = \hh\mm^*$ as the solution. The proof of the lemma is almost exactly the same as in \cite{candes09ex}, the only difference is that now instead of just working with a space of rank-2 matrices $T_1$, we are dealing with the space $T_1 \cap T_2$ of rank-2 and row-sparse matrices. We repeat the proof here to show that all the details in \cite{candes09ex} also work out for the space $T_1\cap T_2$.
\begin{lem}[Optimality Conditions]\label{lem:optimality-conditions}
	Let $\cA$ be as defined in \eqref{eq:cA}, and $\gamma > 0$ be a positive number such that $\|\cA\|\leq \gamma$, and 
	\begin{equation}\label{eq:uniqueness-cond2}
	\sqrt{2}\|\cA\PT(\vm{Z})\|_\F  \geq \|\PT(\vm{Z})\|_\F
	\end{equation}
	for all $\vm{Z} \in \mbox{Null}(\cA)$.  Then the matrix $\X_0= \hh\mm^*$ is the unique minimizer of \eqref{eq:convex-relaxation} if there exists a $\vm{Y} \in \mbox{Range}(\cA^*)$ such that 
	\begin{equation}\label{eq:uniqueness-cond}
	\|\hh\mm^*-\PT(\vm{Y})\|_\F \leq \frac{1}{4\gamma},  \qquad	 \|\PTc(\y)\| \leq \frac{1}{2}.
	\end{equation}
\end{lem}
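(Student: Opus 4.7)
The plan is to follow the classical dual-certificate argument of Candes-Recht (candes09ex) for nuclear-norm minimization, adapted to the combined tangent space $T_1\cap T_2$ of rank-two, row-$\Omega$-supported matrices. First I would set up the perturbation argument: any feasible point can be written $\hat{\X}=\X_0+\Z$ with $\Z\in\mathrm{Null}(\cA)$, so it suffices to show $\|\X_0+\Z\|_*>\|\X_0\|_*$ whenever $\Z\neq 0$. The subdifferential characterization at the rank-one matrix $\X_0=\hh\mm^*$ gives, for any $\vm{W}_0$ with $\hh^*\vm{W}_0=0$, $\vm{W}_0\mm=0$, $\|\vm{W}_0\|\leq 1$ (equivalently, $\vm{W}_0\in T_1^\perp\subseteq(T_1\cap T_2)^\perp$), the subgradient inequality
\[
\|\X_0+\Z\|_* \;\geq\; \|\X_0\|_* + \<\hh\mm^*,\Z\> + \<\vm{W}_0,\Z\>,
\]
and I would pick $\vm{W}_0$ to attain the supremum value $\|\mathcal{P}_{T_1^\perp}(\Z)\|_*$.

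Second, I would exploit the dual certificate. Since $\vm{Y}\in\mathrm{Range}(\cA^*)$ and $\cA(\Z)=0$, $\<\vm{Y},\Z\>=0$, so $\<\hh\mm^*,\Z\>=\<\hh\mm^*-\vm{Y},\Z\>$. Decomposing via $\PT$ and $\PTc$, and using that $\hh\mm^*\in T_1\cap T_2$,
\[
\<\hh\mm^*,\Z\> \;=\; \<\hh\mm^*-\PT(\vm{Y}),\PT(\Z)\> \;-\; \<\PTc(\vm{Y}),\PTc(\Z)\>.
\]
Cauchy-Schwarz together with the hypothesis $\|\hh\mm^*-\PT(\vm{Y})\|_\F\leq 1/(4\gamma)$ bounds the first inner product by $\tfrac{1}{4\gamma}\|\PT(\Z)\|_\F$, and the spectral/nuclear duality with $\|\PTc(\vm{Y})\|\leq 1/2$ bounds the second by $\tfrac{1}{2}\|\PTc(\Z)\|_*$.

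Third, I would invoke the near-isometry hypothesis to convert $\|\PT(\Z)\|_\F$ into a multiple of $\|\PTc(\Z)\|_\F$. Since $\cA(\Z)=0$, $\|\cA\PT(\Z)\|_\F=\|\cA\PTc(\Z)\|_\F\leq\gamma\|\PTc(\Z)\|_\F$, and by the hypothesis $\sqrt{2}\|\cA\PT(\Z)\|_\F\geq\|\PT(\Z)\|_\F$ we obtain $\|\PT(\Z)\|_\F\leq\sqrt{2}\gamma\|\PTc(\Z)\|_\F$. Combining the three estimates,
\[
\|\X_0+\Z\|_* - \|\X_0\|_* \;\geq\; \|\mathcal{P}_{T_1^\perp}(\Z)\|_* \;-\; \tfrac{\sqrt{2}}{4}\|\PTc(\Z)\|_\F \;-\; \tfrac{1}{2}\|\PTc(\Z)\|_*,
\]
which I would like to conclude is strictly positive. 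The degenerate case $\PTc(\Z)=0$ forces $\Z\in T_1\cap T_2$, whence the near-isometry hypothesis and $\cA(\Z)=0$ immediately yield $\|\Z\|_\F=0$.

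The main obstacle, and the point where the argument departs from Candes-Recht, is reconciling the two complements that appear: the subgradient only furnishes $\|\mathcal{P}_{T_1^\perp}(\Z)\|_*$, which concerns the rank-tangent complement, whereas the certificate conditions are stated on the strictly larger complement $(T_1\cap T_2)^\perp$. The resolution is to observe that the two-sided projector $\vm{W}\mapsto(\vm{I}-\hh\hh^*)\vm{W}(\vm{I}-\mm\mm^*)$ implementing $\mathcal{P}_{T_1^\perp}$ annihilates $\PT(\vm{Y})$, so $\mathcal{P}_{T_1^\perp}(\vm{Y})=\mathcal{P}_{T_1^\perp}(\PTc(\vm{Y}))$ and the operator norm transfers: $\|\mathcal{P}_{T_1^\perp}(\vm{Y})\|\leq\|\PTc(\vm{Y})\|\leq 1/2$. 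I would repackage the bound above by replacing $\PTc$ with $\mathcal{P}_{T_1^\perp}$ wherever possible so that the $\|\mathcal{P}_{T_1^\perp}(\Z)\|_*$ term from the subgradient cancels the $\tfrac{1}{2}$-factor leakage, and the residual $(\tfrac{1}{2}-\tfrac{\sqrt{2}}{4})\|\mathcal{P}_{T_1^\perp}(\Z)\|_*>0$ delivers strict inequality. Verifying this reshuffling, rather than any new probabilistic estimate, is the only substantive technical task, after which the proof is essentially identical to that of the analogous lemma in \cite{candes09ex}.
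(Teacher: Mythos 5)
Your first three steps coincide with the paper's argument: the perturbation $\hat{\X}=\X_0+\vm{Z}$ with $\vm{Z}\in\mbox{Null}(\cA)$, the identity $\<\hh\mm^*,\vm{Z}\>=\<\hh\mm^*-\vm{Y},\vm{Z}\>$, the split into $\PT$ and $\PTc$ pieces, and the conversion $\|\PT(\vm{Z})\|_\F\leq\sqrt{2}\gamma\|\PTc(\vm{Z})\|_\F$ via \eqref{eq:uniqueness-cond2} are exactly what the paper does. The divergence is in the subgradient term, and that is where your proposal has a genuine gap. The paper takes the subgradient directions $\w$ to range over all of $(T_1\cap T_2)^\perp$ with $\|\w\|\leq 1$ (citing Watson), so the positive term it extracts is $\|\PTc(\vm{Z})\|_*$, every complement in the chain is the same $\PTc$, and the three estimates combine into $\bigl(1-\sqrt{2}\gamma\|\hh\mm^*-\PT(\vm{Y})\|_\F-\|\PTc(\vm{Y})\|\bigr)\|\PTc(\vm{Z})\|_*>0$ with no reshuffling. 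You instead insist on the genuine subdifferential at a rank-one point and so only extract $\|\mathcal{P}_{T_1^\perp}(\vm{Z})\|_*$, then assert that this term absorbs the $\tfrac12\|\PTc(\vm{Z})\|_*$ and $\tfrac{\sqrt{2}}{4}\|\PTc(\vm{Z})\|_\F$ leakages after a ``reshuffling.'' That reshuffling is precisely the missing step, and as sketched it cannot work: the two-sided projection onto $T_1^\perp$ is a contraction in every unitarily invariant norm, so $\|\mathcal{P}_{T_1^\perp}(\vm{Z})\|_*\leq\|\PTc(\vm{Z})\|_*$ --- the wrong direction for the cancellation you need. Concretely, for $\vm{Z}=\uu\mm^*$ with $\mathrm{supp}(\uu)\not\subseteq\Omega$ one has $\vm{Z}\in T_1$, hence $\mathcal{P}_{T_1^\perp}(\vm{Z})=0$, while $\PTc(\vm{Z})=\POc(\uu)\mm^*\neq\vm{0}$; your displayed lower bound is then strictly negative, and your degenerate-case argument only covers $\PTc(\vm{Z})=0$, i.e.\ $\vm{Z}\in T_1\cap T_2$, which does not include this case.

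The transfer $\|\mathcal{P}_{T_1^\perp}(\vm{Y})\|\leq\|\PTc(\vm{Y})\|\leq\tfrac12$ that you establish is correct, but exploiting it forces you to re-decompose $\<\hh\mm^*-\vm{Y},\vm{Z}\>$ along $T_1$ and $T_1^\perp$ rather than along $T_1\cap T_2$ and its complement; the resulting term $\<\hh\mm^*-\mathcal{P}_{T_1}(\vm{Y}),\mathcal{P}_{T_1}(\vm{Z})\>$ involves $\mathcal{P}_{T_1}(\vm{Y})\neq\PT(\vm{Y})$ and $\mathcal{P}_{T_1}(\vm{Z})$, neither of which is controlled by the hypotheses \eqref{eq:uniqueness-cond} and \eqref{eq:uniqueness-cond2}, both of which are formulated for $\PT$. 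The component of $\vm{Y}$ lying in $T_1\cap T_2^\perp$ is bounded only in operator norm through $\|\PTc(\vm{Y})\|\leq\tfrac12$, not in Frobenius norm, so it cannot be folded into the $1/(4\gamma)$ budget. To close the lemma along your route you would need either a subgradient identity valid on all of $(T_1\cap T_2)^\perp$ --- which is what the paper asserts and uses --- or an additional hypothesis on $\vm{Y}$ restricted to $T_1\cap T_2^\perp$; neither is supplied in your proposal.
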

\begin{proof}
	Let $\hat{\vm{X}}$ denote the solution to the optimization program in \eqref{eq:convex-relaxation}. This implies that $\|\hat{\vm{X}}\|_* \leq \|\X_0\|_*$. Given this, it is enough to show
	\[
	\|\X_0 + \vm{Z}\|_* > \|\X_0\|_*, ~ \forall \vm{Z} \in \mbox{Null}(\cA), ~\vm{Z} \neq \vm{0},
	\]
	where $\vm{Z} = \hat{\vm{X}}-\vm{X}_0$, to establish exact recovery, since the two conflicting requirements on $\hat{\vm{X}}$ above would directly mean that $\hat{\vm{X}} = \vm{X}_0$, or $\vm{Z} = \vm{0}$. 
	
	The sub-differential of nuclear norm at point $\x_0$ is (see \cite{watson1992characterization} for details)
	\[
	\partial\|\X_0\|_* := \{ \hh\mm^*+ \w : \PTc(\w) = \w,~ \mbox{and} ~ \|\w\| \leq 1 \}.
	\]
	Since by the definition of sub-differentials
	\[
	\|\X_0+\vm{Z}\|_* -\|\X_0\|_* \geq \<\vm{\Delta}, \vm{Z}\>, ~ \mbox{ for every } \vm{\Delta} \in \partial \|\X_0\|_*,
	\]
	we obtain 
	\begin{align*}
	\|\X_0+\vm{Z}\|_* -\|\X_0\|_* &\geq \< \hh\mm^*, \vm{Z}\> + \< \PTc(\w), \vm{Z}\>\\
	& =\< \hh\mm^*-\y, \vm{Z}\> + \< \w, \PTc(\vm{Z})\>, ~  \mbox{ for every } \y \in \mbox{Range}(\cA^*).
	\end{align*}
	 Using the fact that  $\hh\mm^*\in T_1 \cap T_2$, and also maximizing the inner product $\< \w, \PTc(\vm{Z})\>$ with respect to $\|\w\| \leq 1$ gives us 
	\begin{align*}
	\|\X_0+\vm{Z}\|_* -\|\X_0\|_* &\geq \< \hh\mm^*-\PT(\y), \PT(\vm{Z})\> - \<\PTc(\y), \PTc(\vm{Z})\> + \|\PTc(\vm{Z})\|_* \\
	 & \geq - \|\hh\mm^* -\PT(\y)\|_\F \|\PT(\vm{Z})\|_\F -\|\PTc(\y)\|\|\PTc(\vm{Z})\|_* + \|\PTc(\vm{Z})\|_*.
	\end{align*}
	Now the inequality $\|\cA\PT(\vm{Z})\|_\F \geq 2^{-1/2} \|\PT(\vm{Z})\|_\F$ for any $\vm{Z} \in \mbox{Null}(\cA)$ implies that 
	\begin{align*}
	0 = \|\cA(\vm{Z})\|_\F &\geq  \|\cA\PT(\vm{Z})\|_\F - \|\cA\PTc(\vm{Z})\|_\F  \\
	& \geq 2^{-1/2} \|\PT(\vm{Z})\|_\F - \gamma \|\PTc(\vm{Z})\|_\F.
	\end{align*}
	The above inequality implies firstly that $\|\vm{Z}\|_\F^2 \leq (2\gamma^2+1)\|\PTc(\vm{Z})\|_\F^2$, which in turn means $\PTc(\vm{Z}) \neq \vm{0}$ whenever $\vm{Z} \neq \vm{0}$;  secondly, $\|\PT(\vm{Z})\|_\F \leq \sqrt{2}\gamma \|\PTc(\vm{Z})\|_*$.  Using these results, the bound in the earlier inequality gives
	\[
	\|\X_0+\vm{Z}\|_* - \|\X_0\|_* \geq \left(-\|\hh\mm^* - \PT(\y) \|_\F \sqrt{2}\gamma - \|\PTc(\y)\| +1 \right) \|\PTc(\vm{Z})\|_*.
	\]

	Now under the conditions in \eqref{eq:uniqueness-cond}, the right hand side above is strictly positive, which means $\|\X_0 + \vm{Z}\|_* > \|\X_0\|_*$; enough to exhibit the uniqueness. 
\end{proof}
The next lemma provides an upper bound $\gamma$ on the operator norm of the linear map $\cA$. 
\begin{lem}[Operator norm of $\cA$]\label{lem:cA-operator-norm}
	Let $\cA$ be as defined in \eqref{eq:cA}. For any $\beta \geq 1$, 
	\[
	\|\cA\| \leq \sqrt{\beta K \log (LN)}
	\]
	with probability at least $1-(LN)^{-\beta+2}$.
\end{lem}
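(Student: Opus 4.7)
The key structural observation is that, because $\vm{\phi}_{\ell,n} = \cc_{\ell,n}\otimes \e_n$ is supported on the $n$-th block of its $KN$ coordinates, the action of $\cA$ decouples across the block-column partition of $\X$. Writing $\X = [\X_1\mid \X_2 \mid \cdots \mid \X_N]$ with $\X_n \in \reals^{L\times K}$, one immediately gets $\bh_\ell^*\X\vm{\phi}_{\ell,n} = \bh_\ell^*\X_n\cc_{\ell,n}$, so
\[
\|\cA(\X)\|_2^2 \;=\; \sum_{\ell=1}^L \sum_{n=1}^N \bigl|\bh_\ell^* \X_n \cc_{\ell,n}\bigr|^2.
\]
Applying Cauchy--Schwarz to each summand and then factoring out the uniform maximum of $\|\cc_{\ell,n}\|_2^2$ yields
\[
\|\cA(\X)\|_2^2 \;\leq\; \Bigl(\max_{\ell,n}\|\cc_{\ell,n}\|_2^2\Bigr)\sum_{n}\operatorname{tr}\!\Bigl(\X_n^*\Bigl(\sum_\ell \bh_\ell\bh_\ell^*\Bigr)\X_n\Bigr).
\]
Because $\hat{\vm{B}} = \vm{F}\vm{B}$ is unitary and $\vm{F}$ is unitary, $\vm{B}$ is unitary, so $\sum_\ell \bh_\ell\bh_\ell^* = \vm{B}^*\vm{F}^*\vm{F}\vm{B} = \vm{I}$. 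The inner sum therefore collapses to $\sum_n\|\X_n\|_\F^2 = \|\X\|_\F^2$, and we obtain the clean deterministic reduction $\|\cA\|^2 \leq \max_{\ell,n}\|\cc_{\ell,n}\|_2^2$.

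The remainder of the proof is a routine tail bound. By the construction \eqref{eq:cl-construction}, each $\cc_{\ell,n}$ is a (real or complex) Gaussian vector of $K$ coordinates with $\E\|\cc_{\ell,n}\|_2^2 = K$, so $\|\cc_{\ell,n}\|_2^2$ is a (scaled) chi-squared random variable with either $K$ or $2K$ degrees of freedom. A standard one-sided chi-squared concentration bound gives, for any $\beta \geq 1$,
\[
\P\bigl\{\|\cc_{\ell,n}\|_2^2 \geq \beta K \log(LN)\bigr\} \;\leq\; (LN)^{-\beta+1}.
\]
A union bound over the $LN$ pairs $(\ell,n)$ then produces $\max_{\ell,n}\|\cc_{\ell,n}\|_2^2 \leq \beta K \log(LN)$ with probability at least $1-(LN)^{-\beta+2}$, and combining with the deterministic bound above establishes the lemma.

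The only point requiring a modicum of care is that, per \eqref{eq:cl-construction}, the vectors $\{\cc_{\ell,n}\}_\ell$ are not mutually independent (conjugate symmetry relates the top and bottom halves of the Fourier index set), and the rows for $\ell \in\{1,L/2+1\}$ are real while the others are complex with $\tfrac12$-variance per real coordinate. Neither issue affects the argument: the union bound requires only the marginal tail of $\|\cc_{\ell,n}\|_2^2$, and the marginal tail is the same (up to a harmless constant absorbed into $\beta$) in both the real and complex cases since both have mean $K$ and identical sub-exponential behavior. Hence the only nontrivial ingredient is really the deterministic reduction via Cauchy--Schwarz and the unitarity of $\hat{\vm{B}}$; the probabilistic step is essentially a one-line chi-squared concentration plus union bound.
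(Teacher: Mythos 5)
Your proof is correct and follows essentially the same route as the paper: both arguments reduce $\|\cA\|$ to $\max_{\ell,n}\|\cc_{\ell,n}\|_2$ --- the paper via exact orthogonality of the measurement matrices $\bh_\ell\vm{\phi}_{\ell,n}^*$ (so that $\|\cA\|=\max_{\ell,n}\|\bh_\ell\vm{\phi}_{\ell,n}^*\|_\F$), you via Cauchy--Schwarz and the resolution of identity $\sum_\ell\bh_\ell\bh_\ell^*=\vm{I}$, which are the same unitarity ingredient packaged differently --- and then conclude with a sub-exponential tail bound plus a union bound over the $LN$ pairs. The only cosmetic difference is that the paper obtains your claimed marginal tail $\P\{\|\cc_{\ell,n}\|_2^2\geq\beta K\log(LN)\}\leq K(LN)^{-\beta}\leq(LN)^{-\beta+1}$ by a union bound over the $K$ coordinates of $\cc_{\ell,n}$ rather than by a single chi-squared deviation inequality, but both give the stated failure probability.
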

\begin{proof}
	The operator norm can be calculated by using the fact that $\<\A_{\o_1,n_1} ,\A_{\o_2,n_2}\> = 0$,~ $\forall \o_1 \neq \o_2 \in [L]$, or $\forall n_1 \neq n_2 \in [N]$. This implies that $\|\cA\| = \max_{\o,n}\|\A_{\o,n}\|_\F$. We can write $\|\A_{\o,n}\|^2_\F= \|\bh_\o\|^2_2 \|\vm{\phi}_{\o,n}\|^2_2 =  \sum_{k \sim_K n}|c_{\o,n}[k]|^2$, where $|c_{\o,n}[k]|^2$ are chi-squared random variables with degree 2 when $ 2 \leq \o \leq L/2$, and degree 1  when $\o=1$, or $\o = L/2+1$. In both cases $\E |c_{\o,n}[k]|^2 = 1$, and 
	\begin{align*}
	\P{|c_{\o,n}[k]|^2 > u } \leq \mathrm{e}^{-u}.
	\end{align*}
	The maximum is taken over $(L/2+1)\cdot2\cdot KN$ unique $c_{\o,n}[k]$ and 
	\begin{align*}
	\P{\max_{\o,n,k} |c_{\o,n}[k]|^2 \geq u} \leq  (L+2)KN\cdot \mathrm{e}^{-u} \Rightarrow \P{\max_{\o,n}\|\A_{\o,n}\|_\F^2 \geq Ku} \leq (LN)^2 e^{-u},
	\end{align*} 
	where the last inequality follows from the fact that $K \leq L$.
	Choose $u = \beta\log (LN)$, which gives $\|\cA\| \leq \gamma : = \sqrt{\beta K \log(LN)} $ with probability at least $1-(LN)^{-\beta+2}$. 
\end{proof}
In the following section, we focus on constructing a dual certificating using golfing scheme,  which is then shown to satisfy the uniqueness conditions in the lemma above.

\subsection{Linear operators on golfing partition}\label{sec:Golfing}
To prove the uniqueness conditions in \eqref{eq:uniqueness-cond}, we use a dual certificate $\vm{Y}$ constructed using a variation of the golfing scheme \cite{gross11re}. To this end, we  partition the index set $[L]\times [N]$ into $P$ disjoint sets $\Gamma_1,\Gamma_2,\ldots,\Gamma_P$ defined as
\begin{align}\label{eq:Gammap}
\Gamma_p := \left\{ (\o,n) ~|~ (\o,n) \in \{(\Delta_{n,p},n)\}_{n=1}^N\right\},
\end{align}
where $\Delta_{n,p}$ is a subset of $[L]$, chosen uniformly at random for every $n$ and $p$, such that $|\Delta_{n,p}| = Q = L/P$, and for every $n$, $\Delta_{n,p} \cap \Delta_{n,p^\prime} = \emptyset$ for $p \neq p^\prime$, and $\bigcup_p\Delta_{n,p} = [L]$. The parameter $Q$ is adjusted through the proof, and we assume here without loss of generality that it is an integer\footnote{We are assuming here that $P$ is a factor of $L$; this can be achieved in the worst case by increasing the number $L$ of measurements in each convolution by no more than a factor 2, which only affects the measurements bounds in Theorem \ref{thm:main} by a multiplicative constant.}. In words, the partition of the set $\{1,2,3,\ldots, L\} \times \{1,2,3,\ldots, N\}$ is obtained by dividing $\{1,2,3,\ldots,L\}$ into $P$ randomly chosen disjoint sets for a given $n$, and then we repeat this process independently for every $n$ to obtain a total of $PN$ sets $\Delta_{1,1},\ldots, \Delta_{N,1},\Delta_{1,2}, \ldots, \Delta_{N,P}$. We then define $\Gamma_p$ as in \eqref{eq:Gammap}.  For every $n$, the disjointness among the sets $\Delta_{n,1}, \Delta_{n,2},\ldots, \Delta_{n,P}$ is of critical importance as it ensures that no dependence arises due to reuse of the same $\vm{\phi}_{\o,n}$ in different partitioned sets. We define a linear map $\cA_p: \reals^{L \times KN} \rightarrow \comps^{QN}$ that returns the measurements indexed by $(\o,n) \in \Gamma_{p}$:
\begin{align}\label{eq:cAp}
\cA_p(\vm{Z}) := \{\<\bh_\o\vm{\phi}_{\o,n}^*,\vm{Z}\>: (\o,n) \in \Gamma_p\}, ~~~~~~\cA_p^*\cA_p(\vm{Z}) := \sum_{(\o,n) \in \Gamma_p}\<\bh_\o\vm{\phi}_{\o,n}^*,\vm{Z}\>\bh_\o\vm{\phi}_{\o,n}^*.
\end{align}
In tandem with this partitioning of the measurements, we also require to partition the $L$ rows of $\hat{\vm{B}}$ into $P$ sets of $Q \times L$ submatrices that behave roughly as an isometry on the sparse vectors. Quantitatively, we want the rows $\bh_\o^*$ of matrix $\hat{\vm{B}}$ in each of the sets $\Delta_{n,p}$ to obey 
\begin{align}\label{eq:RIP}
(1-\delta)\|\vm{z}\|_2^2 \leq \frac{L}{Q}\sum_{\o \in \Delta_{n,p}} |\bh_\o^*\vm{z}|^2 \leq (1+\delta)\|\vm{z}\|_2^2
\end{align}
for all vectors $\vm{z}$ supported on the set $\Omega$ such that $|\Omega| \leq S$. A reader familiar with compressive sensing will readily recognize this as the restricted isometry property (RIP) \cite{candes2008restricted} on the submatrices of $\hat{\vm{B}}$. A result \cite{rudelson2008sparse} from compressive sensing says that for each submatrix of rows $\bh_\o$  with coherence $\mu^2_{\max}$, defined in \eqref{eq:coherence_B} if the index set $\Delta_{n,p}$ is chosen uniformly at random, then there exists a constant $C$ such that for any $0 < \delta < 1$, and $0 < \epsilon < 1$,
\begin{align*}
Q \geq C \delta^{-1}\mu_{\max}^2\left(\frac{S \log L}{\epsilon^2}\right)\log \left(\frac{S\log L}{\epsilon^2}\right)\log^2 S
\end{align*}
implies that the RIP in \eqref{eq:RIP} holds with probability exceeding $1-e^{-c(\delta/\epsilon)^2}$. Given the partition $\{\Delta_{n,p}\}_{n,p}$ is chosen uniformly at random, the above result with $\delta = 1/4$, and $ \epsilon^{-2} = \beta\log (LN) $ means that if 
 \begin{align}\label{eq:QRIPbound}
 Q \geq C\beta\mu_{\max}^2 S\log^2 (LN)\log (\beta S\log (LN)) \log^2 S,
 \end{align}
 then 
 \begin{align*}
 \sup_{|\Omega| \leq S}\sum_{\o \in \Delta_{n,p}} |\bh_\o^*\vm{z}|^2 \leq \frac{5Q}{4L}\|\vm{z}\|_2^2, 
 \end{align*}
 with probability at least $1-\mathcal{O}((LN)^{-\beta})$, for all $\vm{z}$ supported on $\Omega$ for a given $\Delta_{n,p}$. A simple union bound over all $PN$ number of sets $\{\Delta_{n,p}\}_{n,p}$ shows that 
 \begin{align}\label{eq:RIPresults}
 \max_{n,p}\left[\sup_{|\Omega| \leq S}\sum_{\o \in \Delta_{n,p}} |\bh_\o^*\vm{z}|^2 \right]\leq \frac{5Q}{4L}\|\vm{z}\|_2^2 \implies \max_{n,p}\left\|\sum_{\o \in \Delta_{n,p} } \PO \bh_{\o}\bh_{\o}^* \PO - \frac{Q}{L}\PO\right\| \leq \frac{Q}{4L} 
 \end{align}
 holds with probability at least $1-(PN)\mathcal{O}((LN)^{-\beta})\geq 1-\mathcal{O}((LN)^{1-\beta})$. In the rest of the article, we take the results in \eqref{eq:RIPresults} as given. 
 
Before constructing a dual certificate, we define some nomenclature. Let 
\begin{align}\label{eq:Snp}
&\vm{S}_{n,p} := \sum_{\o \in \Delta_{n,p}} (\PO\bh_\ell)(\PO \bh_\ell)^*,  ~ \mbox{and} ~  \vm{S}_{n,p}^\ddagger  := \vm{I}_{L \times \Omega}\left(\vm{I}_{L \times \Omega}^*\vm{S}_{n,p}\vm{I}_{L\times \Omega}\right)^{-1}\vm{I}_{L \times\Omega}^*,
\end{align}
where $\vm{I}_{L \times\Omega}$ is an $L \times |\Omega|$ matrix containing the columns of an $L \times L$ identity matrix indexed by set $\Omega$. A direct conclusion of \eqref{eq:RIPresults} gives
\begin{align}\label{eq:S-eigen-bounds}
\max_{n,p}\|\vm{S}_{n,p}\| \leq \frac{5Q}{4L}, ~~ \max_{n,p}\|\vm{S}_{n,p}^\ddagger\| \leq \frac{4L}{3Q}.
\end{align}
In addition, set $\vm{D}_n = \vm{I}_K \otimes \vm{e}_n \vm{e}_n^*.$ A linear operator $\mathcal{S}_p^\ddagger$ is then defined by its action on an $L \times KN$ matrix $\X$ as follows
\begin{equation}\label{eq:cSp}
\mathcal{S}_p^\ddagger(\X) = \sum_n \vm{S}_{n,p}^\ddagger \X \vm{D}_n.
\end{equation}

\subsection{Coherence}
With all the development above, we are now in position to precisely define the coherence parameter $\mu_0^2$ that was first introduced in Section \ref{sec:Main_results}. The diffusion of the impulse response $\ww$ is quantified using the following definition 
\begin{align}\label{eq:coherence_h}
\mu_0^2 : = L \cdot\max \left\{\frac{\|\hat{\vm{B}}\hh\|_\infty^2}{\|\hh\|_2^2}, \frac{Q^2}{L^2}\cdot\max_{n,p}\frac{\|\hat{\vm{B}}\vm{S}_{n,p}^\ddagger\hh\|_\infty^2}{\|\hh\|_2^2},  \max_{n,n^\prime}\frac{\|\hat{\vm{B}}\vm{S}^\ddagger_{n,2}\vm{S}_{n^\prime,1}\vm{h}\|_\infty^2}{\|\hh\|_2^2}\right\}.
\end{align}
The quantities $\hh''_{n,n'}$ and $\hh''_{n,n'}$ from Section \ref{sec:Main_results} can be easily read off from this expression, and from $\ell_2$ norm equivalences resulting from (\ref{eq:S-eigen-bounds})\footnote{So that $\mu_0^2$ as defined here, and its illustration listed in the introduction, are equivalent to within an inconsequential multiplicative factor.}. The following lemma presents upper and lower bounds on $\mu_0^2$.

\begin{lem}[Range of $\mu_0^2$]\label{lem:mu0-range}
	Let $\mu_0^2$, and $\mu_{\max}^2$ be as defined in \eqref{eq:coherence_h}, and \eqref{eq:coherence_B}, respectively. Assume that \eqref{eq:S-eigen-bounds} holds. Then 
	\[
	\frac{5}{3} \leq \mu_0^2 \leq \frac{20}{9}\mu_{\max}^2 S.
	\]
\end{lem}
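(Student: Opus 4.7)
My plan is to prove both inequalities by bounding each of the three terms in the $\max$ of \eqref{eq:coherence_h} separately, using Cauchy--Schwarz together with the coherence bound $\|\bh_\ell\|_\infty^2 \leq \mu_{\max}^2/L$ from \eqref{eq:coherence_B} and the eigenvalue bounds \eqref{eq:S-eigen-bounds}.

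For the upper bound, the workhorse I will use is the pointwise estimate
\[
\|\hat{\vm B}\vm z\|_\infty \leq \sqrt{S\mu_{\max}^2/L}\,\|\vm z\|_2, \qquad \text{whenever } \mathrm{supp}(\vm z)\subseteq \Omega,
\]
which follows from $|\bh_\ell^*\vm z| \leq \|\bh_\ell|_\Omega\|_2\|\vm z\|_2 \leq \sqrt{S}\|\bh_\ell\|_\infty\|\vm z\|_2$. Each of $\hh$, $\vm S_{n,p}^\ddagger \hh$, and $\vm S_{n,2}^\ddagger \vm S_{n',1}\hh$ is supported on $\Omega$, because $\Omega = \mathrm{supp}(\hh)$ and $\vm S_{n,p}^\ddagger$ has range contained in $\Omega$ by its construction in \eqref{eq:Snp}. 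The eigenvalue bounds \eqref{eq:S-eigen-bounds} then yield $\|\vm S_{n,p}^\ddagger \hh\|_2 \leq (4L/(3Q))\|\hh\|_2$ and $\|\vm S_{n,2}^\ddagger \vm S_{n',1}\hh\|_2 \leq (4L/(3Q))(5Q/(4L))\|\hh\|_2 = (5/3)\|\hh\|_2$. The $Q^2/L^2$ prefactor in the middle term cancels the $(L/Q)^2$ blow-up, so the three terms collapse to constants times $\mu_{\max}^2 S$, collectively yielding $\mu_0^2 \leq (20/9)\mu_{\max}^2 S$.

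For the lower bound, I will apply $\|\vm v\|_\infty \geq \|\vm v\|_2/\sqrt L$ together with Parseval $\|\hat{\vm B}\vm w\|_2 = \|\vm w\|_2$ to each of the three terms. Using the opposite eigenvalue inequalities $\lambda_{\min}(\vm S_{n,p}^\ddagger|_\Omega) \geq 4L/(5Q)$ and $\lambda_{\min}(\vm S_{n',1}|_\Omega) \geq 3Q/(4L)$ that follow from \eqref{eq:S-eigen-bounds}, each of the three terms supplies a concrete constant; combining them gives $\mu_0^2 \geq 5/3$.

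The routine work is in applying Cauchy--Schwarz and Parseval; the main obstacle I anticipate is tracking the constants sharply through the composed operator $\vm S_{n,2}^\ddagger \vm S_{n',1}$ in the third term. A naive operator-norm composition produces the cruder $(5/3)^2 = 25/9$ in place of $20/9$. To shave this to the stated constant, I would decompose $\vm S_{n',1} = (Q/L)\PO + \vm E$ with $\|\vm E\| \leq Q/(4L)$ from \eqref{eq:RIPresults}, apply the triangle inequality inside $\|\hat{\vm B}(\cdot)\|_\infty$, and bound the two pieces $(Q/L)\hat{\vm B}\vm S_{n,2}^\ddagger \hh$ and $\hat{\vm B}\vm S_{n,2}^\ddagger \vm E\hh$ separately; the first benefits from the same cancellation as the middle term, leaving enough slack for the overall constant to come in at $20/9$.
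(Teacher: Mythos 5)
Your proposal follows essentially the same route as the paper: bound each of the three terms in the maximum separately, using $|\bh_\ell^*\vm{z}|\le \sqrt{S}\,\|\bh_\ell\|_\infty\|\vm{z}\|_2$ for $\Omega$-supported $\vm{z}$ together with the eigenvalue bounds \eqref{eq:S-eigen-bounds} for the upper bound, and $L\|\cdot\|_\infty^2\ge\|\cdot\|_2^2$ plus orthonormality of $\hat{\vm{B}}$ for the lower bound. Your observation that a naive composition in the third term yields $25/9$ rather than $20/9$ is fair --- the paper simply asserts that bound ``in a similar manner'' without showing the computation --- but the exact constants are immaterial to how the lemma is used, so this is a cosmetic rather than substantive difference.
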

\begin{proof}
	We assume without loss of generality that $\|\hh\|_2^2 = 1$. Since $\hat{\vm{B}}$ is an orthonormal matrix, it is easy to see that $ 1/L \leq \|\hat{\vm{B}}\hh\|_\infty^2 \leq 1$.  As far as $\|\hat{\vm{B}}\vm{S}_{n,p}^\ddagger \hh\|_\infty^2$ is concerned, an upper bound on it is 
	\begin{align*}
	\max_{n,p}\|\hat{\vm{B}}\vm{S}_{n,p}^\ddagger \hh\|_\infty^2 &= \max_{\o}\max_{n,p} |\bh_\o^*\vm{S}_{n,p}^\ddagger \hh|^2 \leq \max_{\o}\|\bh_\o\|_\infty^2 \max_{n,p}\|\vm{S}_{n,p}^\ddagger \hh\|_1^2\\
	& \leq \mu_{\max}^2 \frac{1}{L} S \|\vm{S}_{n,p}^\ddagger\hh\|_2^2 \leq  \frac{16}{9}\mu_{\max}^2 S\frac{L}{Q^2},
	\end{align*}
	where the first inequality is H\"{o}lder's result; the second one follows from the definition of coherence $\mu_{\max}^2$ in \eqref{eq:coherence_B}, the equivalence of $\ell_1$, and $\ell_2$-norms, and the fact that the vector $\vm{S}_{n,p}^\ddagger \hh$ is $S$-sparse; and the last one is the result of \eqref{eq:S-eigen-bounds}. A lower bound can be obtained by summing over $\o \in [L]$ as follows:
	\[
	L \cdot \max_{\o} \max_{n,p}|\bh_\o^*\vm{S}_{n,p}^\ddagger \hh|^2 \geq \max_{n,p}\sum_{\o}|\bh_\o^*\vm{S}_{n,p}^\ddagger \hh|^2 = \max_{n,p}\|\vm{S}_{n,p}^\ddagger \hh\|_2^2 \geq \frac{16}{25} \frac{L^2}{Q^2},
	\]
	where the equality is due to the fact that $\hat{\vm{B}}$ is an orthonormal matrix, and last inequality follows from \eqref{eq:S-eigen-bounds}. In a similar manner, we can compute the upper and lower bounds on $\|\hat{\vm{B}}\vm{S}_{n,2}^\ddagger \vm{S}_{\nt,1} \hh\|_\infty^2$, and the result is
	\[
	\frac{5}{3}\leq \max_{n,\nt}\|\hat{\vm{B}}\vm{S}_{n,2}^\ddagger \vm{S}_{\nt,1} \hh\|_\infty^2 \leq \frac{20}{9}\mu_{\max}^2 S.
	\]
	Combining all these results, the claim in the lemma follows.
\end{proof}

The spirit of the incoherence is captured by the first term in the maximum; namely, $\|\hat{\vm{B}}\hh\|_\infty^2$, which is small when $\ww$ is diffuse in the frequency domain, and large otherwise. The other two terms are mainly due to technical reasons in the proof presented later. It is a hard question to characterize them exactly, but they are qualitatively expected to be of the same order as the first term because the matrices, $\vm{S}_{n,p}$, and $\vm{S}_{n,p}^{\ddagger}$ are random by construction and are not expected to make the vectors $\vm{S}_{n,p}^\ddagger\hh$, and $\vm{S}_{n,2}^\ddagger\vm{S}_{n^\prime,1}\hh$ more aligned with the rows of $\hat{\vm{B}}$ than $\hh$ was.

\subsection{Construction of a dual certificate via golfing}\label{sec:Golfing1}

We now iteratively build a dual certificate $\vm{Y} \in \mbox{Range}(\cA^*)$ in $P$ iterations with initial value $\vm{Y}_0 = \mathbf{0}$ as follows
\begin{align}
\label{eq:iteration}
\y_1 &= \frac{L}{Q}\cA_1^*\cA_1(\hh\mm^*)~\quad\mbox{and}\quad~\y_p = \y_{p-1}+\cA_p^*\cA_p\mathcal{S}^\ddagger_p\left(\hh\mm^*-\PT(\y_{p-1})\right) \quad 
\mbox{for}~ p \geq 2.
\end{align}

Note that  in \eqref{eq:iteration}, ~$\y_p \in \mbox{Range}(\cA^*)$ for every $p$.
This approach to build the dual certificate was first developed in \cite{gross11re}. Projecting both sides on $T_1 \cap T_2$ results in
\begin{align*}
\PT(\y_1) &= \frac{L}{Q}\PT\cA_1^*\cA_1(\hh\mm^*)\\
\PT(\y_p) &= \PT(\y_{p-1})-\PT\cA_p^*\cA_p\mathcal{S}_p^\ddagger\left(\PT(\y_{p-1})-\hh\mm^*\right), ~ \quad p \geq 2.
\end{align*}
Denoting 
\begin{align}\label{eq:Wp-def}
\w_0 = -\hh\mm^*, \quad \text{and}\quad \w_p :& = \PT (\y_p) -\hh\mm^* ~ \mbox{ for }  p \geq 1
\end{align}
results in a recursion
\begin{align}\label{eq:Wp-def2}
\w_1 = \left(\frac{L}{Q}\PT\cA_1^*\cA_1\PT- \PT\right)\w_0,~ \qquad \w_p = (\PT\cA_p^*\cA_p\mathcal{S}_p^\ddagger\PT-\PT)\w_{p-1} \mbox{ for} ~ p \geq 2,
\end{align}
which in turn implies 
\begin{align}\label{eq:Wp-fro-bound-intermediate}
\w_{p_0} = -\left[\prod_{p=2}^{p_0}(\PT\cA_p^*\cA_p\mathcal{S}_p^\ddagger\PT-\PT)\right]\left[\frac{L}{Q}\PT\cA_1^*\cA_1\PT- \PT\right]\w_0.
\end{align}

Running the iteration \eqref{eq:iteration}  till $p = P$ gives us our candidate for dual certificate $\vm{Y}: = \vm{Y}_P$. To establish that $\X_0$ is the unique solution to \eqref{eq:convex-relaxation}, we need only show that $\|\w_P\|_\F \leq 1/4\gamma$ and $\|\PTc(\vm{Y}_P)\| < 1/2$ in light of \eqref{eq:uniqueness-cond}. 

The Frobenius norm of $\w_P$ is upper bounded by 
\[
\|\w_P\|_\F\leq \left[\prod_{p=2}^P\| \PT\cA_p^*\cA_p\mathcal{S}_p^\ddagger \PT -\PT \|\right] \; \left\| \frac{L}{Q}\PT\cA_1^*\cA_1\PT- \PT \right\| \; \|\w_0\|_\F. 
\]
The difference in the construction of iterates for $p=1$, and $p\geq 2$ is mainly to avoid technical difficulties that arise in the proofs later owing to the dependencies between $\y_{p-1}$ and $\cA_p$ for $p \geq 2$. The dependencies stem from the fact that although for every $p$, the random set $\Delta_{n,p}$ is independent of $\Delta_{\nt,p}$, where $n \neq \nt$. However, for every $n$, the sets $\Delta_{n,p}$, and $\Delta_{n,p^\prime}$ are dependent by construction. This directly implies that the sets $\Gamma_1, \ldots, \Gamma_P$ are dependent; therefore, $\y_{p-1}$, and hence $\w_{p-1}$ are dependent on $\cA_p$ for $p \geq 2$. As is shown in detail in the proofs to follow that the introduction of $\mathcal{S}_p^\ddagger$ for $p \geq 2$ is to avoid this dependence problem and it ensures that
\[
\E \mathcal{R}\cA_p^*\cA_p\mathcal{S}_p^\ddagger(\w_{p-1}) = \w_{p-1},
\]
which is of critical importance in controlling some of the random quantities in the proofs to follow. For $p=1$, however, we do not introduce $\mathcal{S}_1^\ddagger$ as unlike $\w_p$ for $p \geq 1$, the matrix $\w_0 = \hh\mm^*$ is fixed, and there is no bias between $(L/Q)\PT\cA_1^*\cA_1(\hh\mm^*)$, and $\hh\mm^*$.

A bound on the operator norm for the term with $p=1$ above can be achieved by using a simple triangle inequality followed by an application of Lemma \ref{lem:injectivitype1part1}, and \ref{lem:injectivity1part2}, in Section \ref{sec:key-lemmas} below, to obtain
\begin{align}\label{eq:injectivitypart1-complete}
\left\| \frac{L}{Q} \PT\cA_1^*\cA_1\PT - \PT\right\| &\leq \frac{L}{Q} \left\| \PT\cA_1^*\cA_1\PT - \E \PT\cA_1^*\cA_1\PT \right\|+ \left\| \frac{L}{Q} \E \PT\cA_1^*\cA_1\PT  - \PT\right\|\notag\\
&\leq \frac{1}{4}\sqrt{\frac{Q}{L}}
\end{align}
with probability at least $1-2(LN)^{-\beta+1}$ (the constant 2 upfront comes from the union bound),  for a  parameter $\beta > 2$ that controls the choice of $Q$, and $N$. Note that the expectation $\E$ is only w.r.t. the random construction of $\vm{c}_{\o,n}$ in \eqref{eq:cl-construction}, and not w.r.t. to the randomness due to the sets $\Delta_{n,p}$. The operator norm of the remaining terms ($p \geq 2$) in the expression \eqref{eq:Wp-fro-bound-intermediate} can all be bounded using Lemma \ref{lem:injectivity} in Section \ref{sec:key-lemmas} below to conclude that 
\begin{equation}\label{eq:Wpfro-bound}
\|\w_p\|_\F \leq 2^{-p-1} \sqrt{\frac{Q}{L}}, ~\mbox{for every} ~ p \in \{1,\ldots, P\},
\end{equation}
holds with probability at least $1-\mathcal{O}(L^{-\beta+1})$. This means using the crude union bound that $\|\w_P\|_\F\leq 2^{-P-1}\sqrt{Q/L}$ holds with probability at least $1-\mathcal{O}(PL^{-\beta+1}) \geq 1-\mathcal{O}(L^{-\beta+2})$. Now choosing $P= (L/Q)= 0.5\log_2(4\beta K \log (LN))$ is more than sufficient to imply that $\|\w_P\|_\F\leq (4\gamma)^{-1}$, where the value of $\gamma$ is dictated by Lemma \ref{lem:cA-operator-norm}. This proves the first half of \eqref{eq:uniqueness-cond}.

To prove the second half of \eqref{eq:uniqueness-cond}, use the construction in \eqref{eq:iteration} to write
\[
\y_P = -\left(\frac{L}{Q}\cA_1^*\cA_1(\hh\mm^*) +\sum_{p=2}^P \cA_p^*\cA_p \mathcal{S}_p^\ddagger(\w_{p-1})\right).
\]
Since $\w_p \in T_1\cap T_2,$ for every $p \in \{1, 2,3,\ldots, P\}$, this means 
\[
\PTc(\y_P)  = - \PTc\left(\frac{L}{Q}\cA_1^*\cA_1(\hh\mm^*)-\hh\mm^*\right)-\sum_{p=2}^P \PTc\left( \cA_p^*\cA_p\mathcal{S}_p^\ddagger(\w_{p-1})-\w_{p-1}\right).
\]
Taking the operator norm and a triangle inequality, followed by an application of the fact that $\|\PTc\| \leq 1$, shows that 
\begin{align*}
\|\PTc(\y_P)\| &\leq \left[\left\| \frac{L}{Q}\cA_1^*\cA_1(\hh\mm^*)-\hh\mm^* \right\| + \sum_{p=2}^P \left\|\cA_p^*\cA_p\mathcal{S}_p^\ddagger (\w_{p-1})-\w_{p-1}\right\|\right].
\end{align*}
Note that by adding and subtracting $(L/Q)\E\cA_1^*\cA_1(\hh\mm^*)$ in the first term and similarly adding and subtracting $\E\cA_p^*\cA_p\mathcal{S}_p^\ddagger(\w_{p-1})$ from every term in the summation above, and subsequently using the triangle inequality, we obtain 
\begin{align*}
\|\PTc(\y_P)\| & \leq \frac{L}{Q}\left\|\cA_1^*\cA_1(\hh\mm^*)-\E\cA_1^*\cA_1(\hh\mm^*)\right\|+\left\| \frac{L}{Q}\E\cA_1^*\cA_1(\hh\mm^*) - \hh\mm^*\right\| + \\
& \sum_{p=2}^P \left[ \left\|\cA_p^*\cA_p\mathcal{S}_p^\ddagger (\w_{p-1})-\E\cA_p^*\cA_p\mathcal{S}_p^\ddagger (\w_{p-1})\right\|+ \left\| \E\cA_p^*\cA_p\mathcal{S}_p^\ddagger (\w_{p-1})-\w_{p-1}\right\|\right].
\end{align*}
Note that the expectation $\E$ is only w.r.t. the random construction of $\vm{c}_{\o,n}$ in \eqref{eq:cl-construction}, and not w.r.t. to the randomness due to the sets $\Delta_{n,p}$. Now each of the term on the right hand side above can be controlled using Corollary \ref{cor:concentration}; and Lemma \ref{lem:concentrationpe1part2}, \ref{lem:concentration}, and \ref{lem:conectrationpg2part2} in Section \ref{sec:key-lemmas} below, respectively that hold under the choices of $Q$, and $N$ that conform to Theorem \ref{thm:main} to give us the upper bound 
\[
\|\PTc(\y_P)\| \leq \frac{1}{8} + \frac{1}{8} + \frac{1}{2}\sum_{p=2}^P 2^{-p} \leq \frac{1}{2},
\]
which, using the union bound, holds with probability at least $1-\mathcal{O}((LN)^{-\beta+1})$. 

\begin{remark}\label{rem:dense-w}
	Lemma \ref{lem:conectrationpg2part2} to control $\big\| \E\cA_p^*\cA_p\mathcal{S}_p^\ddagger (\w_{p-1})-\w_{p-1}\big\|$ relies on the uniform result in \eqref{eq:RIP} for sparse vectors to overcome the statistical dependence between $\cA_p$, and $\w_{p-1}$. Being able to control this term is one of the main reasons of working with $T_1 \cap T_2$ instead of $T_1$. Since $\w_{p-1} \in T_1\cap T_2$, the columns of $\w_{p-1}$ are always $S$-sparse enabling us to employ the uniform result. This technical requirement restricts our results to only $S$-sparse $\ww$. However, we think that the proof technique may be improved to work for a completely dense vector $\ww$ as is suggested by the numerical experiments in Section \ref{sec:Exps}. 
\end{remark} 

We also need to show that  \eqref{eq:uniqueness-cond2} holds. To that end, note that by Corollary \ref{cor:injectivitype1part1} in Section \ref{sec:key-lemmas} below, the linear map $\cA$ is well-conditioned on $T_1\cap T_2$, and hence, for $\z \in \mbox{Null}(\cA)$, we have 
\begin{align*}
\|\cA\PT(\z)\|_\F^2 &= |\<\z, \PT\cA^*\cA\PT(\z)\>| \\
&\geq |\<\z,(\PT\cA^*\cA\PT-\PT)\z\>+\<\z,\PT(\z)\>|\\
& \geq \|\PT(\z)\|_\F^2 - \|\PT\cA^*\cA\PT-\PT\|\|\PT(\z)\|_\F^2.
\end{align*}
Corollary \ref{cor:injectivitype1part1} shows that $\|\PT\cA^*\cA\PT-\PT\| \leq 1/8$. Using this fact in the inequality above proves \eqref{eq:uniqueness-cond2}. 

Finally, the choice of upper bounds on $L$ and $N$ in the statement of the theorem is the tightest upper bound that conforms to all the lemmas and corollaries, and uses the fact that $L/Q = P = 0.5\log_2(4\beta K \log (LN))$ derived above. The nuclear norm minimization recovers the true solution when all of the above conclusion hold true. The failure probability of each of the lemmas and corollaries is less than or equal to $(LN)^{-\beta+4}$, and hence, using the union bound, the probability that none of the above items fail is $1-\mathcal{O}((LN)^{-\beta+4})$.

This completes the proof of Theorem \ref{thm:main}.

\subsection{Concentration Inequalities}\label{sec:Conc-Ineq}
Most of the lemmas below require an application of either the uniform version, or the Orlicz-norm version of the matrix Bernstein inequality to control the operator norm of the sum of independent random matrices. Before stating them, we give an overview of the Orlicz-norm results that are used later in the exposition.

We begin by giving basic facts about subgaussian and subexponential random variables that are used throughout the proofs. The proofs of these facts can be found in any standard source; see, for example, \cite{vershynin10in}. 

The Orlicz norms of a scalar random $X$ are defined as 
\[
\|X\|_{\psi_\alpha} := \inf \left\{ u > 0 : \E \exp \left(\frac{|X|^\alpha}{u^\alpha}\right) \leq 2 \right\},~ \alpha \geq 1.
\]
The Orlicz-norm of a vector $\|\vm{z}\|_{\psi_\alpha}$, and a matrix $\|\vm{Z}\|_{\psi_\alpha}$ are then defined by setting $X = \|\vm{z}\|_2$, and $X = \|\vm{Z}\|$, respectively, in the above definition. Therefore, we restrict our discussion below to scalar random variables, and it can trivially extended to vectors and matrices using above mentioned equivalence. 

Some of the key facts relating the Orlicz norms of subgaussian and subexponential random variables are as follows. A subgaussian random variable $X$ can be characterized by the fact that its Orlicz-2 norm is always finite, i.e., $\|X\|_{\psi_2} < \infty$.  Similarly, for a subexponential r.v., we have $\|X\|_{\psi_1} < \infty$. A random variable $X$ is subgaussian iff $X^2$ is subexponential. Furthermore, 
\begin{equation}\label{eq:Square-Subgaussian}
\|X\|_{\psi_2}^2 \leq \|X^2\|_{\psi_1} \leq 2\|X\|_{\psi_2}^2.
\end{equation}
At some points in the proof, we are interested in bounding $\|X-\E X\|_{\psi_\alpha}$. A coarse bound is obtained by using the triangle inequality, 
\[
\|X-\E X\|_{\psi_\alpha} \leq \|X\|_{\psi_\alpha} + \|\E X\|_{\psi_\alpha} 
\]
followed by Jensen's inequality,  $\|\E X\|_{\psi_\alpha} \leq \E \|X\|_{\psi_\alpha} = \|X\|_{\psi_\alpha}$, which further implies that 
\begin{equation}\label{eq:Orlicz-norm-centered-X}
\|X-\E X\|_{\psi_\alpha}  \leq 2 \|X\|_{\psi_\alpha}.
\end{equation}

We also find it handy to have a  generalized version of the above fact; namely, the product of two  subgaussian random variables $X_1$, and $X_2$ is subexponential, and 
\begin{equation}\label{eq:Product-Subgaussians}
\|X_1X_2\|_{\psi_1} \leq C \|X_1\|_{\psi_2}\|X_2\|_{\psi_2},
\end{equation}
for some $C > 0$.

As we are working with Gaussian random variables mostly in the proofs, some of the useful identities for a Gaussian vector $\vm{g} \sim \text{Normal}(\vm{0}, \vm{I}_M)$ are: For a fixed vector $\vm{z}$, the random variable $\<\vm{g}, \vm{z}\>$ is also Gaussian, and hence, $|\<\vm{g},\vm{z}\>|^2$ must have a subexponential tail behavior, and it can be easily verified that
\[
\P{|\<\vm{g}, \vm{z}\>|^2 > \lambda} \leq e^{-\lambda / \|\vm{z}\|_2^2}
\]
for every scalar $\lambda \geq 0$. Moreover, the $\ell_2$-norm of $\vm{g}$ is strongly concentrated about its mean, and there exists $C > 0$ such that 
\[
\P{\|\vm{g}\|_2^2 > \lambda M} \leq Ce^{-\lambda}.
\]
This tail behavior of a random variable completely defines its Orlicz-norm. Specifically, for a subexponential random variable $X$, 
\begin{equation}\label{eq:tail-Orlicz-relation}
\P{X > u}\leq \alpha e^{-\beta u} \implies \|X\|_{\psi_1} \leq (1+\alpha)/\beta. 
\end{equation}
This completes the required overview.

We now state the matrix Bernstein inequalities, which is heavily used in the proofs below. 
\begin{prop}[Uniform Version \cite{tropp12us}, \cite{koltchinskii10nu}]\label{prop:matrixBernstein-uniform}
		Let $\vm{Z}_1,\vm{Z}_2,\ldots,\vm{Z}_Q$ be iid random matrices with dimensions $M \times N$ that satisfy $\E (\vm{Z}_q) = 0$. Suppose that $\|\z_q\|< U$ almost surely for some constant $U$, and all $q = 1,2,3, \ldots, Q$. Define the variance as 
		\begin{align}\label{eq:matbernsigma}
		&\sigma^2_{\vm{Z}}  = \max \left\{\left\|\sum_{q = 1}^Q (\E \vm{Z}_q\vm{Z}_q^*)\right\|,\left\|\sum_{q = 1}^Q (\E \vm{Z}_q^*\vm{Z}_q)\right\| \right\}.
		\end{align}
		Then, there exists $C > 0$ such that , for all $t>0$, with probability at least $1-\er^{-t}$,
		\begin{align}
		&\left\|\vm{Z}_1+\vm{Z}_2+\cdots+\vm{Z}_Q\right\| \lesssim \max\left\{\sigma_{\vm{Z}}\sqrt{t+\log(M+N)},U(t+\log(M+N))\right\}.
		\end{align}
	\end{prop}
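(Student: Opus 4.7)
The plan is to prove the inequality via the standard matrix Laplace transform / Lieb concavity machinery that originates in Ahlswede--Winter and was sharpened by Tropp. First, since the matrices $\vm{Z}_q$ may be rectangular, I would pass to the Hermitian dilation
\[
\tilde{\vm{Z}}_q = \begin{pmatrix} 0 & \vm{Z}_q \\ \vm{Z}_q^* & 0 \end{pmatrix} \in \comps^{(M+N)\times(M+N)},
\]
which satisfies $\|\tilde{\vm{Z}}_q\| = \|\vm{Z}_q\|$, $\lambda_{\max}\bigl(\sum_q \tilde{\vm{Z}}_q\bigr) = \|\sum_q \vm{Z}_q\|$, and the convenient identity $\tilde{\vm{Z}}_q^2 = \mathrm{diag}(\vm{Z}_q\vm{Z}_q^*, \vm{Z}_q^*\vm{Z}_q)$, so that $\bigl\|\sum_q \E \tilde{\vm{Z}}_q^2\bigr\| = \sigma_{\vm{Z}}^2$. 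This reduces the claim to a tail bound on $\lambda_{\max}$ of a sum of independent, centered, bounded Hermitian matrices with known variance.

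Next I would apply the matrix Chernoff argument: for any $\theta > 0$,
\[
\P{\lambda_{\max}\Bigl(\sum_q \tilde{\vm{Z}}_q\Bigr) \geq s} \leq \er^{-\theta s}\,\E\operatorname{tr}\exp\Bigl(\theta \sum_q \tilde{\vm{Z}}_q\Bigr).
\]
Lieb's concavity theorem (in the form of Tropp's master tail inequality) lets me move the expectation past the exponential at the cost of replacing the sum of matrices by the sum of cumulant-generating matrices:
\[
\E\operatorname{tr}\exp\Bigl(\sum_q \theta \tilde{\vm{Z}}_q\Bigr) \leq \operatorname{tr}\exp\Bigl(\sum_q \log\E\er^{\theta \tilde{\vm{Z}}_q}\Bigr).
\]
Using boundedness $\|\tilde{\vm{Z}}_q\| \leq U$, the scalar estimate $\er^x - 1 - x \leq \frac{x^2/2}{1-x/3}$ for $x<3$ lifts to a semidefinite bound
$\log \E\er^{\theta \tilde{\vm{Z}}_q} \preceq \frac{\theta^2/2}{1-\theta U/3}\,\E\tilde{\vm{Z}}_q^2$
valid for $0<\theta<3/U$. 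Combining with $\operatorname{tr}\exp(\vm{A}) \leq (M+N)\er^{\lambda_{\max}(\vm{A})}$ yields the classical Bernstein tail
\[
\P{\Bigl\|\sum_q \vm{Z}_q\Bigr\| \geq s} \leq (M+N)\exp\!\left(-\frac{s^2/2}{\sigma_{\vm{Z}}^2 + Us/3}\right).
\]

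Finally, to recast this in the max-of-two-regimes form stated in the proposition, I would set the right-hand side equal to $\er^{-t}$ and solve for $s$. Splitting according to whether the variance term or the uniform-bound term dominates in the denominator gives $s \lesssim \sigma_{\vm{Z}}\sqrt{t+\log(M+N)}$ in the subgaussian regime and $s \lesssim U(t+\log(M+N))$ in the subexponential tail regime; taking the maximum of the two then absorbs all absolute constants into the $\lesssim$ symbol. The only substantive obstacle in the argument is the non-commutative step where Lieb's theorem is invoked to interchange expectation and trace-exponential; everything else is a mechanical scalar-to-matrix lift followed by optimization over $\theta$.
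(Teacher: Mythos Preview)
Your sketch is correct and follows the standard Tropp--Lieb route to the matrix Bernstein inequality. Note, however, that the paper does not supply its own proof of this proposition: it is stated as a cited result from \cite{tropp12us} and \cite{koltchinskii10nu} and used as a black box throughout Section~\ref{sec:Conc-Ineq} and the key lemmas. What you have written is essentially the argument in those references, so there is nothing to compare against in the paper itself.
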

 The version of Bernstein listed below depends on the Orlicz norms $\|\z\|_{\psi_\alpha},~ \alpha \geq 1$ of a matrix $\z$, defined as 
\begin{equation}\label{eq:matpsinorm}
\|\z\|_{\psi_\alpha} = \inf \{ u>0: \E \exp \bigg(\frac{\|\z\|^\alpha}{u^\alpha}\bigg) \leq 2\}, \quad \alpha \geq 1.
\end{equation}
\begin{prop}[Orlicz-norm Version \cite{koltchinskii10nu}]\label{prop:matbernpsi}
	Let $\vm{Z}_1,\vm{Z}_2,\ldots,\vm{Z}_Q$ be iid random matrices with dimensions $M \times N$ that satisfy $\E (\vm{Z}_q) = 0$. Suppose that $\|\z_q\|_{\psi_\alpha} \leq U_{\alpha}$ for some constant $U_{\alpha} > 0$, and  $q  = 1,2,3, \ldots, Q$. Define the variance $\sigma_{\vm{Z}}^2$ as in \eqref{eq:matbernsigma}. Then, there exists $C > 0$ such that, for all $t>0$, with probability at least $1-\er^{-t}$,
	\begin{align}\label{eq:matbernpsi}
	&\left\|\vm{Z}_1+\vm{Z}_2+\cdots+\vm{Z}_Q\right\| \lesssim \max\left\{\sigma_{\vm{Z}}\sqrt{t+\log(M+N)},U_{\alpha}\log^{\tfrac{1}{\alpha}}\left(\frac{QU_{\alpha}^2}{\sigma_{\vm{Z}}^2}\right)(t+\log(M+N))\right\}.
	\end{align}
\end{prop}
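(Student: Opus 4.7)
The plan is to derive the Orlicz-norm bound from the uniform bounded case of Proposition \ref{prop:matrixBernstein-uniform} by a truncation argument. I would introduce a threshold $\tau>0$ (to be tuned) and decompose each summand as $\vm{Z}_q = \vm{Y}_q + \vm{R}_q$, where $\vm{Y}_q := \vm{Z}_q \mathbf{1}\{\|\vm{Z}_q\|\leq \tau\}$ is the truncated piece and $\vm{R}_q := \vm{Z}_q \mathbf{1}\{\|\vm{Z}_q\|>\tau\}$ is the tail. Writing
\[
\sum_{q=1}^Q \vm{Z}_q \;=\; \sum_{q=1}^Q \bigl(\vm{Y}_q - \E \vm{Y}_q\bigr) \;+\; \sum_{q=1}^Q \vm{R}_q \;-\; Q\,\E \vm{R}_q,
\]
where $\E\vm{Z}_q=0$ was used to replace $Q\,\E\vm{Y}_q$ by $-Q\,\E\vm{R}_q$, splits the problem into a centered bounded sum, a stochastic tail, and a deterministic bias, each controlled separately.

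For the centered bounded piece, the summands $\vm{Y}_q - \E\vm{Y}_q$ are iid, mean zero, and obey $\|\vm{Y}_q - \E\vm{Y}_q\| \leq 2\tau$ almost surely; their second-moment matrices are dominated by those of $\vm{Z}_q$ because truncation only reduces the variance, so the variance parameter of the truncated problem is at most $\sigma_{\vm{Z}}^2$. Proposition \ref{prop:matrixBernstein-uniform} applied with $U=2\tau$ therefore yields, with probability at least $1-e^{-t}$,
\[
\Bigl\|\sum_q \bigl(\vm{Y}_q - \E\vm{Y}_q\bigr)\Bigr\| \;\lesssim\; \sigma_{\vm{Z}}\sqrt{t+\log(M+N)} \;+\; \tau\,(t+\log(M+N)).
\]
The Orlicz hypothesis $\|\vm{Z}_q\|_{\psi_\alpha}\leq U_\alpha$ gives the pointwise tail $\P{\|\vm{Z}_q\|>\tau} \leq 2\exp\bigl(-(\tau/U_\alpha)^\alpha\bigr)$, so a union bound over $q$ forces $\sum_q\vm{R}_q = 0$ with high probability once $\tau$ is sufficiently large, while the deterministic bias $\|Q\,\E\vm{R}_q\| \leq Q\,\E\bigl[\|\vm{Z}_q\|\mathbf{1}\{\|\vm{Z}_q\|>\tau\}\bigr]$ is estimated by integrating the $\psi_\alpha$ tail from $\tau$ to infinity.

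The key design choice is $\tau \asymp U_\alpha\,\log^{1/\alpha}\!\bigl(QU_\alpha^2/\sigma_{\vm{Z}}^2\bigr)$, which is precisely what produces the factor $\log^{1/\alpha}(QU_\alpha^2/\sigma_{\vm{Z}}^2)$ on the subexponential term in the stated inequality. The hard step is calibrating $\tau$ so that the tail probability $2Q e^{-(\tau/U_\alpha)^\alpha}$ and the bias integral are both absorbed into the bounded-Bernstein bound without inflating the leading $\sigma_{\vm{Z}}\sqrt{t+\log(M+N)}$ term; this requires using the $\psi_\alpha$ moment growth $(\E\|\vm{Z}_q\|^p)^{1/p} \lesssim p^{1/\alpha}U_\alpha$ to show that $Q\,\E\bigl[\|\vm{Z}_q\|\mathbf{1}\{\|\vm{Z}_q\|>\tau\}\bigr] = O(\sigma_{\vm{Z}})$ at the chosen threshold, thereby explaining the appearance of the ratio $QU_\alpha^2/\sigma_{\vm{Z}}^2$ inside the logarithm rather than a cruder $Q$ alone. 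Combining the three contributions and absorbing constants gives the claimed deviation inequality, with the overall failure probability governed (up to constants) by the $e^{-t}$ from the uniform Bernstein step.
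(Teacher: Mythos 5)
The paper does not prove this proposition at all: it is imported verbatim from the cited reference \cite{koltchinskii10nu} and used as a black box, so there is no in-paper argument to compare against. Your truncation route is, however, exactly the standard derivation of the Orlicz-norm Bernstein inequality from the bounded one (and is essentially the proof given in the cited source): split $\vm{Z}_q$ at a level $\tau$, apply Proposition \ref{prop:matrixBernstein-uniform} to the centered bounded part (your observation that $\E[\vm{Y}_q\vm{Y}_q^*] \preccurlyeq \E[\vm{Z}_q\vm{Z}_q^*]$, so the variance proxy does not increase, is the right justification), and control the tail and the recentering bias via the $\psi_\alpha$ tail bound. The one place your sketch is loose is the calibration of $\tau$: with $\tau \asymp U_\alpha\log^{1/\alpha}(QU_\alpha^2/\sigma_{\vm{Z}}^2)$ alone, the probability that some $\|\vm{Z}_q\|$ exceeds $\tau$ is only of order $\sigma_{\vm{Z}}^2/U_\alpha^2$, which need not be smaller than $e^{-t}$; you must take $\tau \asymp U_\alpha\bigl(t+\log Q+\log(QU_\alpha^2/\sigma_{\vm{Z}}^2)\bigr)^{1/\alpha}$ (or equivalent) and then check, using $\sigma_{\vm{Z}}^2 \leq Q\,\E\|\vm{Z}_q\|^2 \lesssim QU_\alpha^2$ so that $\log Q \lesssim \log(QU_\alpha^2/\sigma_{\vm{Z}}^2)$, that the resulting $\tau\,(t+\log(M+N))$ term is still absorbed into the stated right-hand side. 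Since you explicitly flag this calibration as the remaining step rather than assert it, I would call your proposal a correct outline of the standard proof with that one quantitative detail left to verify.
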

\subsection{Key lemmas}\label{sec:key-lemmas}
This section provides the important lemmas that constitute the main ingredients to establish the uniqueness conditions \eqref{eq:uniqueness-cond2}, and \eqref{eq:uniqueness-cond} for our construction of the dual certificate $\y$ in the previous section. 

\textbf{Conditioning on $T_1 \cap T_2$}

The results in this section concern the conditioning of the linear maps $\cA$, and $\cA_p$ when restricted to the space $T_1 \cap T_2$. 
\begin{lem}\label{lem:injectivity}
	Let the coherences $\mu_{\max}^2$, $\mu_0^2$, and $\rho_0^2$ be as defined in \eqref{eq:coherence_B}, \eqref{eq:coherence_h}, and \eqref{eq:coherence_m}, respectively. Fix $\beta \geq 2$. Choose the subsets $\Gamma_p : = \{(\Delta_{n,p},n)\}_n$ constructed as in Section \ref{sec:Golfing}, so that, for $p \in \{2,3,\ldots,P\}$,
	\[
	|\Delta_{n,p}| = Q \geq C\beta (\mu_0^2K+\mu_{\max}^2 S)\log^2(LN),
	\]
	for some sufficiently large $C > 0$. Then the linear operators $\cA_p$, and $\mathcal{S}_p^\ddagger$ defined in \eqref{eq:cAp} and \eqref{eq:cSp}, obey 
	\begin{align}\label{eq:injectivity-result1}
	\max_{2\leq p \leq P} \left\|\PT\cA_p^*\cA_p\mathcal{S}_p^\ddagger\PT - \PT \right\| \leq \frac{1}{2}
	\end{align}
	with probability at least $1-(P-1)(LN)^{-\beta} \geq 1- (LN)^{-\beta+1}$. 
\end{lem}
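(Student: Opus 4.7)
The plan is to prove Lemma \ref{lem:injectivity} via the Orlicz-norm matrix Bernstein inequality (Proposition \ref{prop:matbernpsi}), after first verifying that the factor $\mathcal{S}_p^\ddagger$ has been tuned precisely so that $\cA_p^*\cA_p\mathcal{S}_p^\ddagger$ is unbiased on $T_1\cap T_2$. I would condition throughout on the partition $\{\Delta_{n,p}\}$ --- equivalently, on the event \eqref{eq:RIPresults} and its consequence \eqref{eq:S-eigen-bounds} --- and take all expectations over the Gaussian family $\{\vm{c}_{\ell,n}\}$ only. Using $\E[\vm{c}_{\ell,n}\vm{c}_{\ell,n}^*] = \vm{I}_K$ and $\vm{D}_n(\vm{c}_{\ell,n}\otimes\e_n) = \vm{c}_{\ell,n}\otimes\e_n$, a direct calculation gives
\[
\E[\cA_p^*\cA_p\mathcal{S}_p^\ddagger(\vm{Z})] \;=\; \sum_n \Bigl(\sum_{\ell\in\Delta_{n,p}}\vm{b}_\ell\vm{b}_\ell^*\Bigr)\,\vm{S}_{n,p}^\ddagger\, \vm{Z}\,\vm{D}_n,
\]
and for $\vm{Z}\in T_1\cap T_2$ the row support in $\Omega$ collapses the prefactor to $\PO\vm{S}_{n,p}\vm{S}_{n,p}^\ddagger = \PO$, while $\sum_n\vm{D}_n = \vm{I}$ then recovers $\vm{Z}$. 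Hence $\E[\PT\cA_p^*\cA_p\mathcal{S}_p^\ddagger\PT] = \PT$, which is precisely the point of inserting the pseudoinverse factor.

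Next, write the deviation as a sum of independent, zero-mean rank-one random operators
\[
\mathcal{Z}_{\ell,n}(\vm{X}) \;:=\; \bigl\langle \vm{S}_{n,p}^\ddagger\vm{b}_\ell\,\vm{\phi}_{\ell,n}^*,\;\PT(\vm{X})\bigr\rangle\,\PT(\vm{b}_\ell\vm{\phi}_{\ell,n}^*),\qquad (\ell,n)\in\Gamma_p,
\]
so that $\PT\cA_p^*\cA_p\mathcal{S}_p^\ddagger\PT - \PT = \sum_{(\ell,n)\in\Gamma_p}(\mathcal{Z}_{\ell,n}-\E\mathcal{Z}_{\ell,n})$. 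Genuine independence holds across distinct $n$; the conjugate symmetry of $\vm{c}_{\ell,n}$ across $\ell$ inside a fixed $n$ is handled by restricting to the independent half $\ell\le L/2+1$ and absorbing a factor of two into the constants. Because $\mathcal{Z}_{\ell,n}$ is rank one as an operator on matrices, its operator norm factors as the product of the two Frobenius norms $\|\PT(\vm{b}_\ell\vm{\phi}_{\ell,n}^*)\|_\F$ and $\|\PT(\vm{S}_{n,p}^\ddagger\vm{b}_\ell\,\vm{\phi}_{\ell,n}^*)\|_\F$. Expanding each via \eqref{eq:PT-def} splits both into pieces containing $|\vm{b}_\ell^*\vm{h}|$, $|\vm{b}_\ell^*\vm{S}_{n,p}^\ddagger\vm{h}|$, $\|\PO\vm{b}_\ell\|_2$, together with Gaussian factors $|\vm{c}_{\ell,n}^*\vm{m}_n|$ and $\|\vm{c}_{\ell,n}\|_2$; the former are bounded deterministically by $\mu_{\max}^2$, $\mu_0^2$ and \eqref{eq:S-eigen-bounds}, and the latter by the Orlicz identities \eqref{eq:Square-Subgaussian}--\eqref{eq:Product-Subgaussians}.

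With these building blocks, applying Proposition \ref{prop:matbernpsi} requires: (i) the uniform $\psi_1$-norm bound above, which scales like $(L/Q)\sqrt{(\mu_0^2K+\mu_{\max}^2 S)/N}$ up to a $\rho_0$-dependence; and (ii) the variance $\|\sum\E(\mathcal{Z}_{\ell,n}-\E\mathcal{Z}_{\ell,n})^2\|$, obtained by expanding $\mathcal{Z}_{\ell,n}^2$, using independence across $n$ to collapse the inner Gaussian expectations to identities on the corresponding block, and resumming over $\ell\in\Delta_{n,p}$ via $\vm{S}_{n,p}$ and $\vm{S}_{n,p}^\ddagger$. The resulting row sums reduce to quantities of the form $\sum_{\ell\in\Delta_{n,p}}|\vm{b}_\ell^*\vm{h}|^2$, $\sum_{\ell\in\Delta_{n,p}}|\vm{b}_\ell^*\vm{S}_{n,p}^\ddagger\vm{h}|^2$, and $|\vm{b}_\ell^*\vm{S}_{n,2}^\ddagger\vm{S}_{n',1}\vm{h}|^2$, each controlled by the three-part maximum in \eqref{eq:coherence_h}, producing a variance of order $(\mu_0^2K+\mu_{\max}^2 S)/Q$. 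Plugging into \eqref{eq:matbernpsi} with $t\sim\beta\log(LN)$, the subgaussian term dominates precisely when $Q\gtrsim\beta(\mu_0^2 K+\mu_{\max}^2 S)\log^2(LN)$, giving $\|\PT\cA_p^*\cA_p\mathcal{S}_p^\ddagger\PT-\PT\|\le 1/2$ for a fixed $p$ with probability at least $1-(LN)^{-\beta}$; a union bound over the $P-1$ values of $p$ then closes the claim.

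The hard part will be the variance computation: several cross terms in $\E[\mathcal{Z}_{\ell,n}^2]$ couple $\vm{b}_\ell$, $\PO\vm{b}_\ell$, $\vm{S}_{n,p}^\ddagger\vm{b}_\ell$ and $\vm{h}$ in combinations that can only be tamed using the \emph{entire} three-term definition of $\mu_0^2$ in \eqref{eq:coherence_h}, including the auxiliary quantities $\vm{S}_{n,p}^\ddagger\vm{h}$ and $\vm{S}_{n,2}^\ddagger\vm{S}_{n',1}\vm{h}$ --- which is precisely why those technical pieces had to be baked into the coherence up front. A secondary subtlety is keeping the two layers of randomness cleanly separated: the partition $\{\Delta_{n,p}\}$ is held fixed via \eqref{eq:RIPresults} and \eqref{eq:S-eigen-bounds}, while Bernstein is applied only to $\{\vm{c}_{\ell,n}\}$, so that the bounds on $\vm{S}_{n,p}^\ddagger$ enter as deterministic inputs to the Gaussian concentration step.
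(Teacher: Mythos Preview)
Your approach is essentially the paper's: show $\E[\PT\cA_p^*\cA_p\mathcal{S}_p^\ddagger\PT]=\PT$ by the pseudoinverse trick, write the deviation as a sum of rank-one operators $\mathcal{Z}_{\ell,n}$, bound the variance and the $\psi_1$-norm of the summands via the expansions of $\|\PT(\bh_\ell\vm{\phi}_{\ell,n}^*)\|_\F^2$ and $\|\PT(\vm{S}_{n,p}^\ddagger\bh_\ell\vm{\phi}_{\ell,n}^*)\|_\F^2$, apply Proposition~\ref{prop:matbernpsi} with $t=\beta\log(LN)$, and union-bound over $p$.

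Two corrections. First, you overstate the role of the three-part coherence: the proof of this lemma uses only the first two terms of $\mu_0^2$ in \eqref{eq:coherence_h}, namely $\|\hat{\vm{B}}\hh\|_\infty^2$ and $\max_{n,p}\|\hat{\vm{B}}\vm{S}_{n,p}^\ddagger\hh\|_\infty^2$. The third quantity $\|\hat{\vm{B}}\vm{S}_{n,2}^\ddagger\vm{S}_{n',1}\hh\|_\infty^2$ is \emph{not} needed here; it enters only later, in Lemma~\ref{lem:rln}, when controlling the $p=1$ iterate inside the proof of Lemma~\ref{lem:nup-bound}. So the variance computation is less tangled than you anticipate: the cross terms are handled by pulling out $\max_n\|\hat{\vm{B}}\vm{S}_{n,p}^\ddagger\hh\|_\infty^2$ and $\max_{\ell,n}\|\vm{S}_{n,p}^\ddagger\PO\bh_\ell\|_2^2$, the latter bounded via \eqref{eq:S-eigen-bounds} and \eqref{eq:coherence_B}.

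Second, your stated $\psi_1$ scaling $(L/Q)\sqrt{(\mu_0^2K+\mu_{\max}^2 S)/N}$ is not quite right; the paper obtains $\|\mathcal{Z}_{\ell,n}\|_{\psi_1}\lesssim \mu_0^2 K/Q + \mu_{\max}^2\rho_0^2 S/(QN)$ (no square root, and the $1/N$ sits only on the second summand via $\rho_0^2/N$). This does not affect the final condition on $Q$ once one uses $\rho_0^2\le N$, but it is worth getting the intermediate form straight.
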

\begin{lem}\label{lem:injectivitype1part1}
Let the coherences $\mu_{\max}^2$, $\mu_0^2$, $\rho_0^2$ be as in Lemma \ref{lem:injectivity}. Fix $\beta \geq 1$. Choose $\Delta_{n,1}$ such that
	\begin{equation}\label{eq:Q-bound-injectivity3}
	|\Delta_{n,1}| = Q \geq C\beta (\mu_0^2K+\mu_{\max}^2S)(L/Q)^{1/2}\log^2(LN),
	\end{equation}
	for some sufficiently large $C > 0$. Then the linear operator $\cA_1$ defined in \eqref{eq:cAp} obeys
	\begin{equation}
	\frac{L}{Q}\left\|\PT\cA_1^*\cA_1\PT - \E\PT\cA_1^*\cA_1\PT \right\| \leq \frac{1}{8}{\sqrt{\frac{Q}{L}}}
	\end{equation}
	with probability at least $1-(LN)^{-\beta}$. 
\end{lem}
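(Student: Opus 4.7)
The plan is to apply the Orlicz-norm matrix Bernstein inequality (Proposition~\ref{prop:matbernpsi}) to the centered sum
\[
\PT\cA_1^*\cA_1\PT - \E\PT\cA_1^*\cA_1\PT \;=\; \sum_{(\o,n)\in\Gamma_1}\bigl(\mathcal{Z}_{\o,n}-\E\mathcal{Z}_{\o,n}\bigr),
\]
where I set $\mathcal{Z}_{\o,n}(\X):=\<\PT(\bh_\o\vm{\phi}_{\o,n}^*),\X\>\,\PT(\bh_\o\vm{\phi}_{\o,n}^*)$. Each $\mathcal{Z}_{\o,n}$ is a rank-one self-adjoint PSD operator on $T_1\cap T_2$ with operator norm equal to $\|\PT(\bh_\o\vm{\phi}_{\o,n}^*)\|_{\F}^2$. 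The randomness is only in the $\{\vm{c}_{\o,n}\}$ since $\Gamma_1$ is frozen; the conjugate symmetry of the $\vm{c}_{\o,n}$'s in \eqref{eq:cl-construction} is handled by splitting the sum into the independent half $\o\in\{1,\ldots,L/2+1\}$ and its conjugate counterpart, at the cost of absolute constants.

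Using the projector formula $\PT(\X)=\hh\hh^*\X+\PO\X\mm\mm^*-\hh\hh^*\X\mm\mm^*$ from \eqref{eq:PT-def}, a short calculation gives
\[
\|\PT(\bh_\o\vm{\phi}_{\o,n}^*)\|_{\F}^2 \;\leq\; |\<\hh,\bh_\o\>|^2\,\|\vm{c}_{\o,n}\|_2^2 \;+\; \|\PO\bh_\o\|_2^2\,|\<\vm{c}_{\o,n},\mm_n\>|^2.
\]
The coherence definitions supply $|\<\hh,\bh_\o\>|^2\leq \mu_0^2/L$ and $\|\PO\bh_\o\|_2^2\leq \mu_{\max}^2 S/L$. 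Since the chi-squared variable $\|\vm{c}_{\o,n}\|_2^2$ has $\psi_1$-norm $\lesssim K$ and $|\<\vm{c}_{\o,n},\mm_n\>|^2$ has $\psi_1$-norm $\lesssim \|\mm_n\|_2^2\leq \rho_0^2/N\leq 1$ (assuming $\|\mm\|_2=1$), the centering bound \eqref{eq:Orlicz-norm-centered-X} delivers
\[
U_1 \;:=\; \|\mathcal{Z}_{\o,n}-\E\mathcal{Z}_{\o,n}\|_{\psi_1} \;\lesssim\; \frac{\mu_0^2 K+\mu_{\max}^2 S}{L}.
\]

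For the variance I would use $\mathcal{Z}_{\o,n}^2=\|\PT(\bh_\o\vm{\phi}_{\o,n}^*)\|_{\F}^2\,\mathcal{Z}_{\o,n}$ and compute $\|\E\sum_{(\o,n)\in\Gamma_1}\mathcal{Z}_{\o,n}\|\lesssim Q/L$ (a direct Gaussian second-moment calculation, consistent with $(L/Q)\E\PT\cA_1^*\cA_1\PT\approx \PT$), then feed in the scalar factor through a fourth-moment computation to obtain $\sigma^2\lesssim (Q/L)\,U_1$. Applying Proposition~\ref{prop:matbernpsi} with $t=\beta\log(LN)$ and the effective dimension $\dim(T_1\cap T_2)\lesssim LN$ yields, with probability at least $1-(LN)^{-\beta}$,
\[
\frac{L}{Q}\bigl\|\PT\cA_1^*\cA_1\PT - \E\PT\cA_1^*\cA_1\PT\bigr\| \;\lesssim\; \frac{L}{Q}\max\!\Bigl\{\sigma\sqrt{\beta\log(LN)},\; U_1\log^{2}(LN)\Bigr\}.
\]
The variance term contributes $\lesssim\sqrt{U_1 L\log(LN)/Q}$ and the subexponential tail term contributes $\lesssim U_1 L\log^2(LN)/Q$. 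Forcing the latter to be at most $\tfrac18\sqrt{Q/L}$ is exactly $Q^{3/2}\gtrsim (\mu_0^2 K+\mu_{\max}^2 S)\,L^{1/2}\log^2(LN)$, which is the hypothesis on $Q$; the variance constraint is strictly weaker and automatically satisfied.

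The main obstacle is the variance estimate: because $\|\mathcal{Z}_{\o,n}\|$ is only subexponentially, not almost-surely, bounded, the easy ``$\sigma^2\leq U_\infty\|\sum\E\mathcal{Z}_{\o,n}\|$'' identity is unavailable and must be replaced by explicit Gaussian fourth-moment computations of $\E[\|\PT(\bh_\o\vm{\phi}_{\o,n}^*)\|_{\F}^2\,\mathcal{Z}_{\o,n}]$, with the $\hh\hh^*$ and $\PO(\cdot)\mm\mm^*$ terms of $\PT$ tracked separately so the $\mu_0^2 K$ and $\mu_{\max}^2 S$ contributions do not multiply each other. Additional care is needed to bookkeep the conjugate symmetry so the independent-sum hypothesis of Proposition~\ref{prop:matbernpsi} applies cleanly.
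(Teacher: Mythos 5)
Your proposal is correct and follows essentially the same route as the paper: decompose $\PT\cA_1^*\cA_1\PT-\E\PT\cA_1^*\cA_1\PT$ into a sum of centered rank-one self-adjoint operators, bound their $\psi_1$-norms via the expansion \eqref{eq:PTA-fro} and the coherences, compute the variance through the identity $\mathcal{Z}_{\o,n}^2=\|\PT(\bh_\o\vm{\phi}_{\o,n}^*)\|_\F^2\,\mathcal{Z}_{\o,n}$ together with explicit Gaussian second/fourth-moment identities, and invoke Proposition \ref{prop:matbernpsi}. The only differences are cosmetic: you bound $\|\mm_n\|_2^2$ by $1$ rather than carrying $\rho_0^2/N$ through the $\mu_{\max}^2 S$ term (harmless for the stated hypothesis on $Q$), and you are in fact slightly more careful than the paper about the conjugate-symmetry/independence issue in \eqref{eq:cl-construction}.
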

\begin{cor}[Corollary of Lemma \ref{lem:injectivitype1part1}]\label{cor:injectivitype1part1}
	Let the coherences $\mu_{\max}^2$, $\mu_0^2$, and $\rho_0^2$ be as in Lemma \ref{lem:injectivity}. Fix $\beta \geq 1$. Assume that
	\[
	L \geq C\beta (\mu_0^2 K + \mu_{\max}^2 S) \log^2 (LN), 
	\]
	for some sufficiently large $C > 0$. Then the linear operator $\cA$ defined in \eqref{eq:cA} obeys
	\[
	\|\PT\cA^*\cA\PT-\PT \| \leq  \frac{1}{8}
	\]
	with probability at least $1-(LN)^{-\beta}$. 
\end{cor}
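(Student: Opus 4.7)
The plan is to deduce this as the trivial specialization of Lemma~\ref{lem:injectivitype1part1} in which there is only a single ``golfing block'', namely $P=1$ and $Q=L$. In that regime the partition $\Delta_{n,1}=[L]$ is forced for every $n$, so $\Gamma_1=[L]\times[N]$ and the operator $\cA_1$ coincides with $\cA$ itself. The hypothesis of Lemma~\ref{lem:injectivitype1part1} then reads $L \geq C\beta(\mu_0^2 K + \mu_{\max}^2 S)\log^2(LN)$, which is exactly the assumption of the corollary, and its conclusion collapses to
\[
\|\PT\cA^*\cA\PT - \E\PT\cA^*\cA\PT\| \leq \tfrac{1}{8}\sqrt{L/L} = \tfrac{1}{8}
\]
with probability at least $1-(LN)^{-\beta}$.

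It remains to identify the expected value as $\PT$ itself, after which the triangle inequality closes the argument. Expanding \eqref{eq:cA}, one obtains $\cA^*\cA(\X) = \sum_{\l,n}\bh_\l\bh_\l^*\X\vm{\phi}_{\l,n}\vm{\phi}_{\l,n}^*$, and by \eqref{eq:phi_ln} the rank-one factor splits as $\vm{\phi}_{\l,n}\vm{\phi}_{\l,n}^* = (\cc_{\l,n}\cc_{\l,n}^*)\otimes(\e_n\e_n^*)$. A one-line calculation from \eqref{eq:cl-construction}, splitting real and imaginary parts, gives $\E\cc_{\l,n}\cc_{\l,n}^* = \vm{I}_K$ for every $(\l,n)$, so that $\E\vm{\phi}_{\l,n}\vm{\phi}_{\l,n}^* = \vm{I}_K\otimes\e_n\e_n^* = \vm{D}_n$. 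Summing over $n$ yields $\sum_n \vm{D}_n = \vm{I}_{KN}$, while the orthonormality of $\hat{\vm{B}}=\vm{F}\vm{B}$ gives $\sum_\l \bh_\l\bh_\l^* = \vm{B}^*\vm{F}^*\vm{F}\vm{B} = \vm{I}_L$. Therefore $\E\cA^*\cA(\X) = \X$ for every $\X$, and composing with $\PT$ on both sides yields $\E\PT\cA^*\cA\PT = \PT$.

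Combining these two facts through the triangle inequality gives
\[
\|\PT\cA^*\cA\PT - \PT\| \leq \|\PT\cA^*\cA\PT - \E\PT\cA^*\cA\PT\| + \|\E\PT\cA^*\cA\PT - \PT\| \leq \tfrac{1}{8} + 0 = \tfrac{1}{8},
\]
which is the claimed bound, and the probability estimate is inherited directly from Lemma~\ref{lem:injectivitype1part1}. I do not foresee any genuine obstacle here: the corollary is a direct specialization of the lemma together with the elementary moment identity computed above, whose verification is straightforward given the explicit distributional description of $\cc_{\l,n}$ in \eqref{eq:cl-construction}.
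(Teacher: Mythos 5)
Your proposal is correct and follows essentially the same route as the paper: the paper's own proof of this corollary simply reruns the argument of Lemma \ref{lem:injectivitype1part1} with $\cA_1$ replaced by $\cA$ and $Q$ replaced by $L$, and notes that $\E \cA^*\cA$ is the identity operator so that $\E\PT\cA^*\cA\PT = \PT$, exactly as you do. Your explicit verification of the moment identity $\E\vm{\phi}_{\l,n}\vm{\phi}_{\l,n}^* = \vm{D}_n$ and of $\sum_\l \bh_\l\bh_\l^* = \vm{I}_L$ is a correct filling-in of a detail the paper leaves implicit.
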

\begin{lem}\label{lem:injectivity1part2}
Let $\mu_{\max}^2$ be as in Lemma \ref{lem:injectivity}. Fix $\beta \geq 2$. Then there exists a constant $C$ such that 
	\[
	|\Delta_{n,1}| = Q \geq C\beta \mu_{\max}^2 S(L/Q) \log (LN) 
	\]
	implies that the linear operator $\cA_1$ defined in \eqref{eq:cAp} obeys
	\[
	\left\|\frac{L}{Q}\E\PT\cA_1^*\cA_1\PT-\PT\right\| \leq \frac{1}{8}\sqrt{\frac{Q}{L}}
	\]
	with probability at least $1-(LN)^{-\beta+1}$. 
\end{lem}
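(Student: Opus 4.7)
The plan is to first compute the expectation explicitly using the isotropy of the Gaussian rows $\vm{c}_{\o,n}$, then exploit the disjoint column supports of the $\vm{D}_n$'s to collapse the operator norm into a maximum over $n$, and finally apply matrix Bernstein to each $\vm{S}_{n,1}$ at the correctly sharpened scale.

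First, the construction in \eqref{eq:cl-construction} satisfies $\E[\vm{c}_{\o,n}\vm{c}_{\o,n}^*] = \vm{I}_K$ in every regime of $\o$ (real, complex, or conjugated), so $\E[\vm{\phi}_{\o,n}\vm{\phi}_{\o,n}^*] = \vm{I}_K\otimes\vm{e}_n\vm{e}_n^* = \vm{D}_n$. Since $\cA_1^*\cA_1(\X) = \sum_{(\o,n)\in\Gamma_1}\bh_\o\bh_\o^*\,\X\,\vm{\phi}_{\o,n}\vm{\phi}_{\o,n}^*$, taking expectation gives $\E\cA_1^*\cA_1(\X) = \sum_n\vm{T}_n\,\X\,\vm{D}_n$, where $\vm{T}_n := \sum_{\o\in\Delta_{n,1}}\bh_\o\bh_\o^*$. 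Combining $\sum_n\vm{D}_n = \vm{I}_{KN}$ with the identities $\PT\PO = \PT = \PO\PT$ (both arising from $\hh$ being supported on $\Omega$), a direct calculation yields
\[
\Bigl(\tfrac{L}{Q}\PT\,\E\cA_1^*\cA_1\,\PT - \PT\Bigr)(\X) \;=\; \PT\Bigl[\sum_n \vm{A}_n\,\PT(\X)\,\vm{D}_n\Bigr], \qquad \vm{A}_n := \tfrac{L}{Q}\vm{S}_{n,1}-\PO,
\]
with $\vm{S}_{n,1}$ as in \eqref{eq:Snp}.

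Next, the $\vm{D}_n$'s project onto mutually disjoint sets of columns (those indexed by $k\sim_K n$), so the blocks $\vm{A}_n\PT(\X)\vm{D}_n$ are Frobenius-orthogonal across $n$. Combined with $\|\PT\|\leq 1$, this produces the clean reduction
\[
\Bigl\|\tfrac{L}{Q}\PT\,\E\cA_1^*\cA_1\,\PT - \PT\Bigr\| \;\leq\; \max_n\Bigl\|\tfrac{L}{Q}\vm{S}_{n,1} - \PO\Bigr\|,
\]
so it suffices to prove $\max_n\|\vm{S}_{n,1} - (Q/L)\PO\| \leq \tfrac{1}{8}(Q/L)^{3/2}$ with the stated probability. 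Crucially, this is a factor $\sqrt{Q/L}$ sharper than the RIP-level bound $\tfrac{Q}{4L}$ furnished by \eqref{eq:RIPresults}, so that bound cannot simply be cited and a dedicated concentration is needed.

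Finally, for each $n$, decompose $\vm{S}_{n,1} - (Q/L)\PO = \sum_{\o\in\Delta_{n,1}}\vm{Z}_\o$ with $\vm{Z}_\o := \PO\bh_\o\bh_\o^*\PO - (1/L)\PO$, where the mean follows from $\sum_{\o=1}^L\bh_\o\bh_\o^* = \hat{\vm{B}}^*\hat{\vm{B}} = \vm{I}_L$. The coherence estimate $\|\PO\bh_\o\|_2^2 \leq S\|\bh_\o\|_\infty^2 \leq \mu_{\max}^2 S/L$ gives $\|\vm{Z}_\o\| \lesssim \mu_{\max}^2 S/L$, and a similar computation bounds the matrix variance $\|\sum_\o\E\vm{Z}_\o^2\| \lesssim Q\mu_{\max}^2 S/L^2$. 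Invoking Proposition \ref{prop:matrixBernstein-uniform} with $t = \beta\log(LN)$ (handling sampling without replacement by the standard monotone coupling to sampling with replacement) and union-bounding over the $N$ independent sets $\Delta_{n,1}$ delivers
\[
\max_n\|\vm{S}_{n,1} - (Q/L)\PO\| \;\lesssim\; \sqrt{\frac{\mu_{\max}^2 S Q\log(LN)}{L^2}} + \frac{\mu_{\max}^2 S\log(LN)}{L}
\]
with probability at least $1-(LN)^{-\beta+1}$. Both terms are $\leq \tfrac{1}{8}(Q/L)^{3/2}$ exactly when $Q^2 \geq C\beta\mu_{\max}^2 S L\log(LN)$, which is the hypothesis $Q \geq C\beta\mu_{\max}^2 S(L/Q)\log(LN)$. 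The main obstacle is identifying this sharpened scale and arranging the operator-norm reduction so that the single-$n$ Bernstein variance (rather than the coarser RIP scale of \eqref{eq:RIPresults}) becomes the binding quantity—everything else is routine.
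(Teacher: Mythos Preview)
Your proof is correct and follows essentially the same route as the paper: both reduce $\bigl\|\tfrac{L}{Q}\E\PT\cA_1^*\cA_1\PT-\PT\bigr\|$ to $\tfrac{L}{Q}\max_n\bigl\|\vm{S}_{n,1}-\tfrac{Q}{L}\PO\bigr\|$ via the block-diagonal structure of the $\vm{D}_n$'s and $\PT=\PT\PO$, then control the latter at the sharpened scale $(Q/L)^{3/2}$ rather than the RIP scale $Q/L$. The only difference is cosmetic: where you carry out the matrix Bernstein calculation explicitly (with the standard without-replacement coupling), the paper instead cites the nonuniform subsampled-basis result of Cand\`es \cite{candes2007sparsity} at accuracy $\epsilon=\tfrac{1}{8}\sqrt{Q/L}$, which is itself proved by exactly the argument you wrote down.
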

\textbf{Coherences of iterates}\label{sec:coherence-lemmas}

In this section, we define the coherences of the iterates $\w_p$ in \eqref{eq:Wp-def2}, and show that these coherences can be bounded in terms of $\mu_0$, and $\rho_0$ in \eqref{eq:coherence_h}, and \eqref{eq:coherence_m}, respectively.  The coherences are defined, among other variables, in terms of  $\vm{S}_{n,p}^\ddagger$ in \eqref{eq:Snp}, and $\vm{D}_n := \vm{I}_K \otimes \vm{e}_n\vm{e}_n^*$. The partition $\{(\Delta_{n,p},n)\}_{n,p}$ is as defined in Section \ref{sec:Golfing}, and we assume the implications of restricted isometry property in \eqref{eq:RIPresults} as given. Moreover, we also take the results of Lemma \ref{lem:injectivity}, \ref{lem:injectivitype1part1}, and \ref{lem:injectivity1part2} as true. The following results are in order then. 

\begin{lem}\label{lem:rho-p-bound}
Define\footnote{We use the index variables $\ell^\prime$, and $n^\prime$ as $\ell$ and $n$ are already reserved to index the set $\Gamma_p$ in the proofs of these lemmas.}
	\begin{align}\label{eq:rho_Wp}
	\rho^2_p &:= \frac{Q}{L}N\max_{1\leq \nt\leq N}\left[ \sum_{\ot \in \Delta_{\nt,p+1}}\left\|\bh_{\ot}^* \vm{S}^\ddagger_{\nt,p+1}\w_p\vm{D}_{\nt}\right\|_2^2\right] ~ \mbox{for every}~ p \in \{1,2,3, \ldots, P\}.
	\end{align}
	Then
	\begin{equation}\label{eq:rhop-bound}
	\rho_p \leq 2^{-p}\sqrt{\frac{Q}{L}}\rho_0 ~\mbox{for every}~ p \in \{1,2,3, \ldots, P\}.
	\end{equation}
\end{lem}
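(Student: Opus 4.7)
The plan is to prove the bound by induction on $p$. Every $\w_p$ lies in $T_1 \cap T_2$, so it admits a rank-$2$ decomposition
\[
\w_p = \alpha_p\,\hh\vv_p^* + \beta_p\,\uu_p\mm^*, \qquad \uu_p \text{ supported on }\Omega,
\]
and hence $\w_p\vm{D}_{n'} = \alpha_p\,\hh(\vv_p^*\vm{D}_{n'}) + \beta_p\,\uu_p(\mm^*\vm{D}_{n'})$. Plugging this into the definition of $\rho_p^2$ and using the triangle inequality splits the inner summand into an ``$\hh$-contribution'' involving $\sum_{\ell'\in\Delta_{n',p+1}}|\bh_{\ell'}^*\vm{S}_{n',p+1}^\ddagger\hh|^2$, which is bounded by $\mu_0^2 L/Q$ directly from the second entry of \eqref{eq:coherence_h}; and a ``$\uu_p$-contribution'' involving $\sum_{\ell'}\|\bh_{\ell'}^*\vm{S}_{n',p+1}^\ddagger\uu_p\|_2^2$, which is handled uniformly via the RIP estimate \eqref{eq:RIPresults} since $\vm{S}_{n',p+1}^\ddagger\uu_p$ is $\Omega$-supported. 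It remains to propagate the per-block coefficients $\|\vv_{p,n'}\|_2^2$ and $\|\mm_{n'}\|_2^2$ through the golfing iterations with a factor of $1/2$ each time.

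For the base case $p=1$, I would substitute $\w_1 = \hh\mm^* - (L/Q)\PT\cA_1^*\cA_1(\hh\mm^*)$ and expand $\PT$ via \eqref{eq:PT-def}. After grouping terms, the composite $\vm{S}_{n',2}^\ddagger\vm{S}_{n,1}\hh$ appears naturally --- which is exactly why the third entry in the definition of $\mu_0^2$ in \eqref{eq:coherence_h} was introduced. Crucially, the random sets $\Delta_{n',2}$ and $\Delta_{n,1}$ are disjoint, so the rows $\{\bh_{\ell'}\}_{\ell'\in\Delta_{n',2}}$ and the matrix $\vm{S}_{n',2}^\ddagger$ are algebraically fresh relative to $\cA_1$; this lets me treat the Gaussian randomness in $\{\vm{c}_{\ell,n}\}_{\ell\in\Delta_{n,1}}$ as independent of the outer summation. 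A matrix-Bernstein bound (Propositions \ref{prop:matrixBernstein-uniform} or \ref{prop:matbernpsi}) then delivers $\rho_1 \leq \tfrac{1}{2}\sqrt{Q/L}\,\rho_0$ with the required probability under the stated bound on $Q$.

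For the inductive step, I would use the recursion $\w_p = (\PT\cA_p^*\cA_p\mathcal{S}_p^\ddagger\PT - \PT)\w_{p-1}$. Conditional on the filtration generated by $(\Delta_{\cdot,1},\ldots,\Delta_{\cdot,p})$ and the Gaussian data through step $p-1$, the operator $\PT\cA_p^*\cA_p\mathcal{S}_p^\ddagger\PT - \PT$ is centered with respect to the fresh Gaussian randomness in $\cA_p$ and has operator norm at most $1/2$ by Lemma \ref{lem:injectivity}. The outer summation in $\rho_p^2$ runs over $\ell'\in\Delta_{n',p+1}$, a set disjoint from every $\Delta_{n,p}$, so $\bh_{\ell'}$ and $\vm{S}_{n',p+1}^\ddagger$ remain independent of the new Gaussian randomness appearing in $\cA_p$. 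Combining this contraction with the inductive hypothesis $\rho_{p-1}\leq 2^{-(p-1)}\sqrt{Q/L}\,\rho_0$ lets me propagate the per-block ``$\hh$-coefficient'' and ``$\uu_p$-coefficient'' bounds from step $p-1$ to step $p$ with an extra factor of $1/2$, completing the induction.

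The main obstacle will be that $\rho_p^2$ is a \emph{per-block} quantity, strictly stronger than a global Frobenius estimate: inserting $\|\w_p\|_\F \leq 2^{-p-1}\sqrt{Q/L}$ from \eqref{eq:Wpfro-bound} into the naive RIP chain only yields $\rho_p^2 \lesssim 4^{-p}(Q/L)\,N$, which is off by a factor of $N/\rho_0^2$ in the worst case. Overcoming this forces one to genuinely track how the energy of $\w_p$ is distributed across the $N$ blocks, and in turn to manage the ``without-replacement'' dependence of $\{\Delta_{n,p}\}_p$ for fixed $n$ --- these sets are disjoint but not independent --- through careful conditioning on the filtration and the use of the Orlicz-norm matrix Bernstein inequality (Proposition \ref{prop:matbernpsi}) on the Gaussian-quadratic summands produced at each iteration.
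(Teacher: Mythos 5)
Your plan is built around machinery the lemma does not need, and the one step that actually carries the content of the lemma is left as an unproven assertion. The paper's proof is deterministic and short: by the RIP consequence \eqref{eq:RIPresults} together with \eqref{eq:S-eigen-bounds}, $\sum_{\ot \in \Delta_{\nt,p+1}}\|\bh_{\ot}^*\vm{S}_{\nt,p+1}^\ddagger\w_p\vm{D}_{\nt}\|_2^2 \leq \tfrac{5Q}{4L}\|\vm{S}_{\nt,p+1}^\ddagger\w_p\vm{D}_{\nt}\|_\F^2 \leq \tfrac{20}{9}\tfrac{L}{Q}\|\w_p\vm{D}_{\nt}\|_\F^2$; then the per-block Frobenius norm $\|\w_p\vm{D}_{\nt}\|_\F$ is contracted through the golfing recursion \eqref{eq:Wp-fro-bound-intermediate} using the already-established operator-norm bounds \eqref{eq:injectivitypart1-complete} and \eqref{eq:injectivity-result1}, terminating at $\|\w_0\vm{D}_{\nt}\|_\F^2 = \|\mm_{\nt}\|_2^2 \leq \rho_0^2/N$. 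That last identity is precisely how the factor $\rho_0^2/N$ (rather than your feared extra $N/\rho_0^2$) enters: you correctly diagnosed that the global bound $\|\w_p\|_\F \leq 2^{-p-1}\sqrt{Q/L}$ is insufficient, but the fix is simply to start the contraction from the per-block norm of $\w_0 = -\hh\mm^*$, not to mount a new induction. No fresh randomness, no conditioning on a filtration, and no matrix Bernstein inequality is invoked; this is also why the lemma's statement carries no probability qualifier, whereas your proposal delivers the bound only "with the required probability."

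The concrete gaps: (i) the sentence "it remains to propagate the per-block coefficients $\|\vv_{p,n'}\|_2^2$ and $\|\mm_{n'}\|_2^2$ through the golfing iterations with a factor of $1/2$ each time" is the entire lemma restated, not a proof step, and nothing in your sketch establishes it; (ii) your base case imports the composite term $\vm{S}_{\nt,2}^\ddagger\vm{S}_{n,1}\hh$ and the third entry of $\mu_0^2$ in \eqref{eq:coherence_h} — that is the apparatus of Lemmas \ref{lem:rln} and \ref{lem:nup-bound} for the $\ell_\infty$-type coherences $\nu_1$ and $\mu_1$, and it is not needed for the $\ell_2$-summed quantity $\rho_1$, which the RIP handles deterministically; (iii) the independence reasoning conflates the randomness of the partition sets $\Delta_{n,p}$ with the Gaussian randomness in $\vm{c}_{\o,n}$ (the matrices $\vm{S}_{\nt,p+1}^\ddagger$ and the rows $\bh_{\ot}$ involve only $\hat{\vm{B}}$ and the partition, so "algebraic freshness" of $\Delta_{\nt,2}$ relative to $\Delta_{n,1}$ is not the relevant issue for a Bernstein bound on the Gaussians in $\cA_1$). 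If you executed your program you would essentially be re-proving Lemmas \ref{lem:nup-bound} and \ref{lem:mup-bound}; for $\rho_p$ itself the short deterministic chain above suffices.
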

\begin{lem}\label{lem:nup-bound}
	 Define
	\begin{align}\label{eq:nu_Wp}
	\nu^2_p &:= \frac{Q^2}{L}N\max_{1\leq\nt\leq N}\left[\max_{\ot\in \Delta_{\nt,p+1}}\left\|\bh_{\ot}^* \vm{S}^\ddagger_{\nt,p+1}\w_p\vm{D}_{\nt}\right\|_2^2\right], ~\mbox{for every}~ p \in \{ 1,2,3, \ldots, P\}. 
	\end{align}
	 Fix $\beta \geq 1$. Then there exists a constant $C$ such that 
	\begin{equation*}
	Q \geq C\beta(\mu_0^2 K + \mu_{\max}^2 S)\log^2(LN)
	\end{equation*}
	implies 
	\begin{equation}\label{eq:nup-bound}
	\nu_p \leq 2^{-p+3}\mu_0\rho_0 ~\mbox{for every}~ p \in \{1, 2,3,\ldots, P\}
	\end{equation}
	with probability at least $1-(LN)^{-\beta}$. 
\end{lem}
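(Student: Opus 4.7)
The plan is to induct on $p$, treating $p=1$ and $p\ge 2$ separately because the recursion (\ref{eq:Wp-def2}) differs there. Throughout I would condition on the entire random partition $\{\Delta_{n,p'}\}_{n,p'}$, so the only residual randomness at step $p$ is the collection of independent Gaussian vectors $\{\vm{c}_{\o,n}\}_{(\o,n)\in\Gamma_p}$; these are independent of $\w_{p-1}$ (which is built from $\vm{c}_{\o,n}$ for $(\o,n)\in\Gamma_{p'}$ with $p'<p$, disjoint from $\Gamma_p$) and of $\vm{S}^\ddagger_{\nt,p+1}$ (determined by the partition alone). This conditioning makes the expansion defining $\w_p$ a sum of independent random matrices in the freshly drawn $\vm{c}_{\o,n}$'s, suitable for the Orlicz-norm matrix Bernstein inequality (Proposition \ref{prop:matbernpsi}).

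For the base case $p=1$, $\w_0=-\hh\mm^*$ is deterministic, and one expands
\[
\bh_\ot^*\vm{S}^\ddagger_{\nt,2}\w_1\vm{D}_\nt = \tfrac{L}{Q}\sum_{(\o,n)\in\Gamma_1}\langle\bh_\o\vm{\phi}_{\o,n}^*,\hh\mm^*\rangle\,\bh_\ot^*\vm{S}^\ddagger_{\nt,2}\PT(\bh_\o\vm{\phi}_{\o,n}^*)\vm{D}_\nt - \bh_\ot^*\vm{S}^\ddagger_{\nt,2}\hh\mm^*\vm{D}_\nt.
\]
After centering each summand and applying Proposition \ref{prop:matbernpsi}, the per-summand Orlicz-1 norm and variance are controlled via $\mu_0^2$ (applied to $\vm{S}^\ddagger_{\nt,2}\hh$), $\mu_{\max}^2$ (bounding $\|\bh_\o\|_\infty$), and $\rho_0^2$ (bounding $\|\mm_\nt\|_2^2$), giving $\nu_1\le 4\mu_0\rho_0$ after a union bound over $\ot\in\Delta_{\nt,2}$ and $\nt\in[N]$. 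For the inductive step $p\ge 2$, use $\PT(\w_{p-1})=\w_{p-1}$ to write
\[
\bh_\ot^*\vm{S}^\ddagger_{\nt,p+1}\w_p\vm{D}_\nt = \sum_{(\o,n)\in\Gamma_p}\langle\bh_\o\vm{\phi}_{\o,n}^*,\mathcal{S}_p^\ddagger\w_{p-1}\rangle\,\bh_\ot^*\vm{S}^\ddagger_{\nt,p+1}\PT(\bh_\o\vm{\phi}_{\o,n}^*)\vm{D}_\nt - \bh_\ot^*\vm{S}^\ddagger_{\nt,p+1}\w_{p-1}\vm{D}_\nt,
\]
again a sum of iid centered matrices in $\{\vm{c}_{\o,n}\}_{(\o,n)\in\Gamma_p}$. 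Crucially, $|\langle\bh_\o\vm{\phi}_{\o,n}^*,\mathcal{S}_p^\ddagger\w_{p-1}\rangle|\le\|\bh_\o^*\vm{S}^\ddagger_{n,p}\w_{p-1}\vm{D}_n\|_2\|\vm{c}_{\o,n}\|_2$, which the inductive hypothesis on $\nu_{p-1}$ controls uniformly over $\o\in\Delta_{n,p}$; together with $\mu_0^2$, $\mu_{\max}^2$, and the operator-norm bounds (\ref{eq:S-eigen-bounds}), this gives the summand-wise Orlicz-1 bound. The variance is controlled by $\rho_{p-1}$ from Lemma \ref{lem:rho-p-bound}. A second application of Proposition \ref{prop:matbernpsi} with a union bound over $(\ot,\nt)$ then produces the factor-of-$2$ contraction $\nu_p\le\tfrac12\nu_{p-1}\le 2^{-p+3}\mu_0\rho_0$.

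The main obstacle is the probabilistic coupling between $\vm{S}^\ddagger_{\nt,p+1}$ and $\w_{p-1}$: for fixed $\nt$ the sets $\{\Delta_{\nt,p'}\}_{p'}$ form a single random partition of $[L]$, so $\Delta_{\nt,p+1}$ is \emph{not} independent of the $\Delta_{\nt,p'}$ for $p'<p$ that enter $\w_{p-1}$. Conditioning on the entire partition up front is what bypasses this coupling, isolating $\{\vm{c}_{\o,n}\}_{(\o,n)\in\Gamma_p}$ as the only source of step-$p$ randomness. The insertion of $\mathcal{S}_p^\ddagger$ in the recursion for $p\ge 2$ (absent for $p=1$) is essential in this argument: it is precisely what makes the conditional expectation of the Bernstein sum equal $\w_{p-1}$, so that centering yields a mean-zero sum amenable to concentration.
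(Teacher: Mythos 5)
Your overall strategy --- condition on the partition so that $\w_{p-1}$ and $\vm{S}^\ddagger_{\nt,p+1}$ are fixed relative to the fresh Gaussians indexed by $\Gamma_p$, expand $\bh_{\ot}^*\vm{S}^\ddagger_{\nt,p+1}\w_p\vm{D}_{\nt}$ as a sum of independent centered summands, apply the Orlicz-norm Bernstein inequality with the coherences controlling the variance and $\psi_1$ norms, and contract by a factor $1/2$ per step --- is essentially the paper's argument (packaged there as Lemma \ref{lem:qln-bound} together with the coherence bookkeeping), including your correct observation that $\mathcal{S}_p^\ddagger$ is inserted precisely so that the conditional expectation of the step-$p$ sum equals $\w_{p-1}$.

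The gap is in the base case $p=1$. There $\mathcal{S}_1^\ddagger$ is absent and, conditioned on the partition, $\frac{L}{Q}\E\PT\cA_1^*\cA_1(\hh\mm^*)\neq\hh\mm^*$: the conditional expectation is $\sum_n\PT\big(\frac{L}{Q}\vm{S}_{n,1}\hh\mm^*\vm{D}_n\big)$, and $\frac{L}{Q}\vm{S}_{n,1}$ is only approximately the identity on $\Omega$. So after you ``center each summand,'' Bernstein controls only the fluctuation; a deterministic bias term $\bh_{\ot}^*\vm{S}^\ddagger_{\nt,2}\big[\frac{L}{Q}\E\PT\cA_1^*\cA_1(\hh\mm^*)-\hh\mm^*\big]\vm{D}_{\nt}$ remains and is not addressed in your sketch. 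This term is not negligible: in the paper it contributes $6.5\,\mu_0^2\rho_0^2$ to $\nu_1^2$ versus $\tfrac14\mu_0^2\rho_0^2$ from the fluctuation, it is the reason the base-case target is $\nu_1\leq 4\mu_0\rho_0$, and bounding it (Lemma \ref{lem:rln}) is exactly what forces the third term $\max_{n,n'}\|\hat{\vm{B}}\vm{S}^\ddagger_{n,2}\vm{S}_{n',1}\hh\|_\infty^2$ into the definition \eqref{eq:coherence_h} of $\mu_0^2$. You need that additional estimate to close the base case. (A minor further point: in the inductive step the variance of the Bernstein sum is bounded in the paper through $\nu_{p-1}$- and $\mu_{p-1}$-type maxima together with the RIP, rather than through $\rho_{p-1}$ as you suggest; your route via the sums defining $\rho_{p-1}$ is also workable, so this is a difference of packaging rather than a flaw.)
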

\begin{lem}\label{lem:mup-bound}
	Define 
	\begin{align}\label{eq:mup-def}
	\mu_p^2 := \frac{Q^2}{L}  \sum_{\nt=1}^N\left[\max_{\ot \in \Delta_{\nt,p}}\left\|\bh_{\ot}^* \vm{S}_{\nt,p+1}^\ddagger \w_p\vm{D}_{\nt}\right\|_2^2 \right] ~\mbox{for every}~ p \in \{1,2,3, \ldots, P\}.
	\end{align}
    Let $Q$, and $N$ be the same as in Lemma \ref{lem:nup-bound} for sufficiently large $C$. Then
	\begin{equation}\label{eq:mup-bound}
	\mu_p \leq 2^{-p+2} \mu_0~ \mbox{for every}~ p \in \{1,2,3,\ldots, P\}
	\end{equation}
	with probability at least $1- (LN)^{-\beta}$.
\end{lem}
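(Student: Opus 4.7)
The plan is to prove $\mu_p \leq 2^{-p+2}\mu_0$ by strong induction on $p$, carried out jointly with the bounds on $\rho_p$, $\nu_p$, and $\|\w_p\|_\F$ from Lemma \ref{lem:rho-p-bound}, Lemma \ref{lem:nup-bound}, and \eqref{eq:Wpfro-bound}. The argument runs parallel to the proof of Lemma \ref{lem:nup-bound}, the two differences being that the $\max_\nt$ in $\nu_p^2$ is replaced by $\sum_\nt$ in $\mu_p^2$, and the prefactor $Q^2 N/L$ becomes $Q^2/L$. These modifications cancel each other once the variance computation absorbs the sum over $\nt$ via the normalization inside $\rho_{p-1}^2$.

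For the induction step $p\geq 2$, I would fix the golfing partition and condition on the Gaussian history through step $p-1$, so that $\w_{p-1}$ is measurable. By the disjointness $\Delta_{\nt,p+1}\cap\Delta_{\nt,p}=\emptyset$, the operator $\vm{S}_{\nt,p+1}^\ddagger$ is independent of $\{\vm{c}_{\o,n}\}_{(\o,n)\in\Gamma_p}$. Starting from the recursion $\w_p = \PT\cA_p^*\cA_p\mathcal{S}_p^\ddagger(\w_{p-1}) - \w_{p-1}$, I would expand
\[
\bh_\ot^*\, \vm{S}_{\nt,p+1}^\ddagger\, \w_p\, \vm{D}_\nt \;=\; \sum_{(\o,n)\in\Gamma_p}\vm{X}_{\o,n}^{(\ot,\nt)}
\]
as a sum of independent zero-mean row vectors of length $KN$; the decomposition of $\PT$ in \eqref{eq:PT-def} produces three pieces per summand, each of the form (random scalar built from $\vm{c}_{\o,n}$) $\times$ (deterministic row vector). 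For the base case $p=1$, the same expansion applies to $\w_1 = \tfrac{L}{Q}\PT\cA_1^*\cA_1(\hh\mm^*) - \hh\mm^*$ with the small deterministic bias controlled by Lemma \ref{lem:injectivity1part2}.

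The next step is to apply the Orlicz-$1$ matrix Bernstein inequality (Proposition \ref{prop:matbernpsi}) to this sum. Each random scalar is a product of at most two centered subgaussian scalars in $\vm{c}_{\o,n}$, so \eqref{eq:Product-Subgaussians} yields a subexponential Orlicz-$1$ bound. The deterministic factors are then bounded using $\mu_{\max}$ from \eqref{eq:coherence_B}, $\mu_0$ from \eqref{eq:coherence_h}, and the inductive hypotheses on $\mu_{p-1}$, $\nu_{p-1}$, $\rho_{p-1}$, $\|\w_{p-1}\|_\F$. Summing the conditional second moments produces a variance proxy in which the outer $N$-summation over $\nt$ in $\mu_p^2$ pairs naturally with the $1/N$ normalization inside $\rho_{p-1}^2$. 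A union bound over $\ot\in\Delta_{\nt,p}$ and $\nt\in[N]$, together with the hypothesis $Q \gtrsim \beta(\mu_0^2 K+\mu_{\max}^2 S)\log^2(LN)$ to absorb the $\sqrt{\log(LN)}$ and $\log(LN)$ factors of Bernstein, then yields $\mu_p \leq \tfrac{1}{2}\mu_{p-1}$ with probability at least $1-(LN)^{-\beta}$, closing the induction.

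The hard part will be the bookkeeping for the Orlicz-$1$ and variance bounds on each $\vm{X}_{\o,n}^{(\ot,\nt)}$: the three pieces coming from $\PT$ pair different coherences (for instance, one piece involves $\bh_\ot^*\vm{S}_{\nt,p+1}^\ddagger\hh$, which is $\mu_0$-type, another involves $\bh_\o^*\vm{S}_{n,p}^\ddagger\w_{p-1}\vm{D}_n$, which is $\nu_{p-1}$-type, and so on), and one must verify that every product collapses to a pure $\mu_{p-1}$-type quantity rather than acquiring a stray $\rho_0$ or $\log(LN)$ factor that would break the intended $2^{-p+2}\mu_0$ scaling.
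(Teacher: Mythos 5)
Your overall strategy -- expand $\bh_{\ot}^*\vm{S}_{\nt,p+1}^\ddagger\w_p\vm{D}_{\nt}$ via the recursion \eqref{eq:Wp-def2} into a sum of independent zero-mean vectors, apply the Orlicz-norm Bernstein inequality, and run a joint induction with $\nu_p$, $\rho_p$ -- is exactly the paper's (the paper packages the Bernstein step into Lemma \ref{lem:qln-bound}, shared with the proof of Lemma \ref{lem:nup-bound}, rather than redoing it). But there are two concrete gaps.

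First, the recursion does not close as a pure halving $\mu_p\leq\tfrac12\mu_{p-1}$. You flag as ``the hard part'' the verification that every product collapses to a $\mu_{p-1}$-type quantity; it does not. The variance proxy contains a term of the form $\mu_{\max}^2\tfrac{S}{Q}\|\mm_{\nt}\|_2^2\bigl[\max_n\max_{\o\in\Delta_{n,p}}\|\bh_\o^*\vm{S}_{n,p}^\ddagger\w_{p-1}\vm{D}_n\|_2^2\bigr]$, in which the inner maximum runs over \emph{all} $n$, not just $\nt$. Summing over $\nt$ consumes $\|\mm_{\nt}\|_2^2$ (via $\sum_{\nt}\|\mm_{\nt}\|_2^2=1$) but leaves a $\nu_{p-1}^2$-type quantity, so what actually comes out is the coupled recursion $\mu_p^2\leq\delta\bigl(\mu_{p-1}^2+\tfrac1N\nu_{p-1}^2\bigr)$. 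One then has to substitute the separately established decay $\nu_{p-1}\leq 2^{-p+4}\mu_0\rho_0$ together with $\rho_0^2\leq N$ to obtain $\mu_p^2\leq\delta^{p-1}\mu_1^2+2\delta\,4^{-p+3}\mu_0^2$ and hence the claimed bound for small $\delta$. Your joint-induction framing accommodates this, but the stated conclusion of the Bernstein step is wrong as written, and the resolution of your own open question is the opposite of the one you hope for.

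Second, the base case. Controlling the deterministic bias $\tfrac{L}{Q}\E\PT\cA_1^*\cA_1(\hh\mm^*)-\hh\mm^*$ by the operator-norm estimate of Lemma \ref{lem:injectivity1part2} is not sharp enough for $\mu_1^2$. That lemma only gives a Frobenius-type bound $\lesssim\sqrt{Q/L}$ on the whole bias matrix; feeding this into $\sum_{\nt}\max_{\ot}\|\bh_{\ot}^*\vm{S}_{\nt,2}^\ddagger[\,\cdot\,]\vm{D}_{\nt}\|_2^2$ via Cauchy--Schwarz replaces the row coherence by $\|\vm{S}_{\nt,2}^\ddagger\PO\bh_{\ot}\|_2^2\lesssim\mu_{\max}^2 S L/Q^2$ and, crucially, loses the factor $\|\mm_{\nt}\|_2^2$ that makes the $\nt$-sum telescope to $1$; the result scales like $\mu_{\max}^2 S\,QN/L^2$, which is not $O(\mu_0^2)$ for large $N$. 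You need a row-wise computation of the bias (the paper's Lemma \ref{lem:rln}), which produces $\bigl(\tfrac92|\bh_{\ot}^*\vm{S}_{\nt,2}^\ddagger\hh|^2+2\tfrac{L^2}{Q^2}\max_n|\bh_{\ot}^*\vm{S}_{\nt,2}^\ddagger\vm{S}_{n,1}\hh|^2\bigr)\|\mm_{\nt}\|_2^2$ and is the entire reason the third term $\|\hat{\vm{B}}\vm{S}_{n,2}^\ddagger\vm{S}_{n',1}\hh\|_\infty^2$ appears in the definition \eqref{eq:coherence_h} of $\mu_0^2$. Without that ingredient the base case $\mu_1^2\lesssim\mu_0^2$ does not go through.
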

\textbf{Range of $\cA^*$}

Finally, the results in this section help us establish that the dual certificate $\y$ mostly lies in $T_1\cap T_2$, that is, $\|\PTc(\y)\| \leq 1/2$; one of the uniqueness conditions in \eqref{eq:uniqueness-cond}. Let $\cA_p$, $\mathcal{S}_p^\ddagger$ and $\w_p$ be as in \eqref{eq:cAp}, \eqref{eq:cSp}, and \eqref{eq:Wp-def2}, respectively. In addition, let the coherences 
$\mu_p$, $\rho_p$, and $\nu_p$ be as defined in \eqref{eq:mup-def}, \eqref{eq:rho_Wp}, and \eqref{eq:nu_Wp}, respectively. We shall take \eqref{eq:mup-bound}, \eqref{eq:rhop-bound}, and \eqref{eq:nup-bound} as given. The following results are in order then. 
\begin{lem}\label{lem:concentration}
	Fix $\beta \geq 4$. Then there exists a constant $C$ such that
	\begin{equation*}
	Q \geq C \beta \mu_0^2 K \max\{(L/Q),\log^2(LN)\}\log (LN), ~ \mbox{and} ~ N \geq C\beta \rho_0^2(L/Q)\log(LN)
	\end{equation*}
	implies that 
	\[
	 \left\| \cA_p^*\cA_p\mathcal{S}_p^\ddagger (\w_{p-1}) - \E\cA_p^*\cA_p\mathcal{S}_p^\ddagger (\w_{p-1})\right\| \leq 2^{-p-1} ~\mbox{for all} ~ p \in \{2,3,\ldots,P\}
	\]
	holds with probability at least  $1- (LN)^{-\beta+4}$
\end{lem}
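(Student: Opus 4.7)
The plan is to apply the Orlicz-norm matrix Bernstein inequality (Proposition \ref{prop:matbernpsi}) to the decomposition
\[
\cA_p^*\cA_p\mathcal{S}_p^\ddagger(\w_{p-1}) \;=\; \sum_{(\ell,n)\in\Gamma_p} \vm{Z}_{\ell,n}, \qquad \vm{Z}_{\ell,n} \;=\; \bigl(\bh_\ell^*\vm{S}_{n,p}^\ddagger\w_{p-1}\vm{\phi}_{\ell,n}\bigr)\,\bh_\ell\vm{\phi}_{\ell,n}^*,
\]
where I used $\vm{D}_n\vm{\phi}_{\ell,n}=\vm{\phi}_{\ell,n}$. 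Because the partition sets $\Delta_{n,1},\ldots,\Delta_{n,P}$ are disjoint for each fixed $n$, the Gaussians $\{\vm{c}_{\ell,n}:(\ell,n)\in\Gamma_p\}$ are independent of those defining $\w_{p-1}$ (and $\vm{S}_{n,p}^\ddagger$ is deterministic in the $\bh_\ell$'s once the partition is fixed). I would therefore condition on $\{\Delta_{n,p}\}$ and on $\w_{p-1}$, reducing the problem to concentration for a sum of independent rank-one random matrices indexed by $(\ell,n)\in\Gamma_p$, after pairing each $\ell$ with its conjugate partner $L-\ell+2$ whenever both lie in $\Delta_{n,p}$ to remove the dependence coming from $\vm{c}_{L-\ell+2,n}=\bar{\vm{c}}_{\ell,n}$.

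For the Orlicz-1 norm of each summand, writing
\[
\bh_\ell^*\vm{S}_{n,p}^\ddagger\w_{p-1}\vm{\phi}_{\ell,n} \;=\; \bigl\langle \vm{c}_{\ell,n},\,(\bh_\ell^*\vm{S}_{n,p}^\ddagger\w_{p-1}\vm{D}_n)^*\bigr\rangle
\]
exhibits this scalar as subgaussian in $\vm{c}_{\ell,n}$ with $\psi_2$-norm controlled by $\|\bh_\ell^*\vm{S}_{n,p}^\ddagger\w_{p-1}\vm{D}_n\|_2$, while $\|\vm{\phi}_{\ell,n}\|_2=\|\vm{c}_{\ell,n}\|_2$ is subgaussian of $\psi_2$-norm $\lesssim\sqrt{K}$. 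Combining these through \eqref{eq:Product-Subgaussians}, the coherence bound $\|\bh_\ell\|_\infty^2 \leq \mu_{\max}^2/L$, and the definition \eqref{eq:nu_Wp} of $\nu_{p-1}$, I obtain
\[
\max_{(\ell,n)\in\Gamma_p}\|\vm{Z}_{\ell,n}\|_{\psi_1} \;\lesssim\; \sqrt{K}\,\frac{\mu_{\max}}{\sqrt{L}}\,\max_{(\ell,n)}\|\bh_\ell^*\vm{S}_{n,p}^\ddagger\w_{p-1}\vm{D}_n\|_2 \;\lesssim\; \sqrt{\frac{K}{LN}}\,\frac{\mu_{\max}\,\nu_{p-1}}{Q}.
\]

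For the variance, I would apply the Gaussian moment identity $\E\bigl[|\langle\vm{c},\vm{z}\rangle|^2\vm{c}\vm{c}^*\bigr]\preceq C\bigl(\|\vm{z}\|_2^2\vm{I}+\vm{z}\vm{z}^*\bigr)$ (with the conjugate-symmetry correction) to bound
\[
\Bigl\|\sum_{(\ell,n)\in\Gamma_p}\E\,\vm{Z}_{\ell,n}\vm{Z}_{\ell,n}^*\Bigr\| \;\lesssim\; K \sum_{(\ell,n)\in\Gamma_p} \|\bh_\ell^*\vm{S}_{n,p}^\ddagger\w_{p-1}\vm{D}_n\|_2^2 \,\|\bh_\ell\bh_\ell^*\|,
\]
and analogously for $\sum\E\vm{Z}_{\ell,n}^*\vm{Z}_{\ell,n}$. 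Recognizing the resulting sums as $\rho_{p-1}^2$ from \eqref{eq:rho_Wp} and $\mu_{p-1}^2$ from \eqref{eq:mup-def}, and using $\|\bh_\ell\|_2^2 \leq \mu_{\max}^2/L$, gives
\[
\sigma_{\vm{Z}}^2 \;\lesssim\; \frac{K}{L}\,\max\!\left\{\frac{\mu_{\max}^2\,\rho_{p-1}^2}{N},\;\mu_{p-1}^2\right\}.
\]

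Plugging these into Proposition \ref{prop:matbernpsi} with $t = \beta\log(LN)$, and substituting the iterate coherence bounds $\rho_{p-1}\leq 2^{-(p-1)}\sqrt{Q/L}\,\rho_0$, $\nu_{p-1}\leq 2^{-(p-1)+3}\mu_0\rho_0$, and $\mu_{p-1}\leq 2^{-(p-1)+2}\mu_0$ from Lemmas \ref{lem:rho-p-bound}--\ref{lem:mup-bound}, the stated lower bounds on $Q$ and $N$ are exactly calibrated to drive both the $\sigma_{\vm{Z}}\sqrt{t+\log(\cdot)}$ and the $U_1\log^{1/1}(\cdot)(t+\log(\cdot))$ terms below $2^{-p-1}$. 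A union bound over $p\in\{2,\ldots,P\}$ together with the failure events of the preceding coherence lemmas yields the asserted $1-(LN)^{-\beta+4}$ probability. The main obstacle I anticipate is the variance computation: carrying the fourth-moment tensor of $\vm{\phi}_{\ell,n}$ through the conjugate-symmetric Gaussian construction \eqref{eq:cl-construction} cleanly enough that the resulting sums factor exactly through the pre-defined coherences $\rho_{p-1},\nu_{p-1},\mu_{p-1}$, with no stray cross terms that fail to match those definitions.
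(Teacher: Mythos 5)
Your overall architecture is exactly the paper's: write $\cA_p^*\cA_p\mathcal{S}_p^\ddagger(\w_{p-1})$ as a sum of independent rank-one matrices $\bh_\ell\bh_\ell^*\vm{S}_{n,p}^\ddagger\w_{p-1}\vm{\phi}_{\ell,n}\vm{\phi}_{\ell,n}^*$ over $(\ell,n)\in\Gamma_p$, center them, apply the Orlicz-norm matrix Bernstein inequality, and close the recursion by substituting the iterate coherence bounds on $\rho_{p-1},\nu_{p-1},\mu_{p-1}$ from Lemmas \ref{lem:rho-p-bound}--\ref{lem:mup-bound} followed by a union bound over $p$ and over the failure events of those lemmas. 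That is the correct and intended route, and your observation that the disjointness of $\Delta_{n,1},\ldots,\Delta_{n,P}$ decouples the Gaussians in $\Gamma_p$ from $\w_{p-1}$ is the right justification for independence (the paper handles the conjugate-symmetry issue less explicitly than you propose).

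There is, however, a concrete error in your norm bookkeeping. You repeatedly invoke $\|\bh_\ell\|_2\leq\mu_{\max}/\sqrt{L}$ (and $\|\bh_\ell\|_2^2\leq\mu_{\max}^2/L$). The coherence \eqref{eq:coherence_B} bounds the \emph{entrywise} norm, $\|\bh_\ell\|_\infty\leq\mu_{\max}/\sqrt{L}$; since $\hat{\vm{B}}$ is orthonormal, $\|\bh_\ell\|_2=1$ exactly, and $\mu_{\max}/\sqrt{L}\leq 1$. So both your $U_1$ bound and your variance bound are \emph{under}estimates of the true quantities, which is fatal for Bernstein (it needs valid upper bounds), and it spuriously injects $\mu_{\max}$ into a lemma whose correct statement and proof involve only $\mu_0$ and $\rho_0$. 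The correct values are $U_1\lesssim\nu_{p-1}\sqrt{KL/(Q^2N)}$ and $\sigma_{\vm{Z}}^2\lesssim\mu_{p-1}^2 KL/Q^2+\rho_{p-1}^2 L/(QN)$. Relatedly, your variance for the $\sum\E\vm{Z}_{\ell,n}^*\vm{Z}_{\ell,n}$ branch carries a factor of $K$ that should not be there: that branch lives in the $\vm{\phi}$-coordinates, where $\E|\langle\cdot,\vm{\phi}_{\ell,n}\rangle|^2\vm{\phi}_{\ell,n}\vm{\phi}_{\ell,n}^*\preccurlyeq 3\|\bh_\ell^*\vm{S}_{n,p}^\ddagger\w_{p-1}\vm{D}_n\|_2^2\,\vm{D}_n$ and the operator norm of the resulting block-diagonal sum is a maximum over $n$, giving $3\rho_{p-1}^2L/(QN)$ with no $K$; the factor $K$ arises only in the other branch from $\E\|\vm{\phi}_{\ell,n}\|_2^2|\cdot|^2$. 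These are repairable slips rather than a wrong strategy, but as written the two key Bernstein inputs are not correct, so the calibration against the stated hypotheses on $Q$ and $N$ does not go through without redoing those estimates.
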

\begin{cor}[Corollary of Lemma \ref{lem:concentration}]\label{cor:concentration}
	Let $\cA_1$ be as in \eqref{eq:cAp}, and coherences $\mu_0^2$, and $\rho_0^2$ be as in \eqref{eq:coherence_h}, and \eqref{eq:coherence_m}, respectively. Fix $\beta \geq 1$. Then there exist a constant $C$ such that 
	\begin{equation*}
	Q \geq C \beta \mu_0^2 K \max\{(L/Q),\log^2(LN)\}\log (LN), ~ \mbox{and} ~ N \geq C\beta \rho_0^2 (L/Q)\log(LN)
	\end{equation*}
	implies that
	\[
	\frac{L}{Q}\left\|\cA_1^*\cA_1(\hh\mm^*) - \E\cA_1^*\cA_1(\hh\mm^*)\right\| \leq \frac{1}{8}
	\]
	with probability exceeding $1-(LN)^{-\beta}$. 
\end{cor}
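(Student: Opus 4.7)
This corollary is the $p=1$ specialization of Lemma \ref{lem:concentration}: the deterministic matrix $\hh\mm^*$ plays the role of $\mathcal{S}_p^\ddagger(\w_{p-1})$, and the scalar factor $L/Q$ stands in for the operator norm bound $\|\mathcal{S}_p^\ddagger\| \leq 4L/(3Q)$ from \eqref{eq:S-eigen-bounds}. The key simplification is that $\hh\mm^*$ is non-random, so there is no statistical coupling with $\cA_1$; consequently the coherences $\mu_0^2$ and $\rho_0^2$ directly supply the control parameters that the iterate-coherence results (Lemmas \ref{lem:mup-bound}, \ref{lem:rho-p-bound}, \ref{lem:nup-bound}) had to provide in the general case.

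The plan is to write
\[
\frac{L}{Q}\bigl(\cA_1^*\cA_1(\hh\mm^*) - \E\,\cA_1^*\cA_1(\hh\mm^*)\bigr) = \sum_{(\o,n)\in\Gamma_1}\vm{Z}_{\o,n},
\]
where each
\[
\vm{Z}_{\o,n} := \frac{L}{Q}\<\bh_\o,\hh\>\,\bh_\o\bigl(\<\mm_n,\cc_{\o,n}\>\vm{\phi}_{\o,n}^* - \E[\<\mm_n,\cc_{\o,n}\>\vm{\phi}_{\o,n}^*]\bigr)
\]
is mean-zero, and then apply the Orlicz-norm matrix Bernstein inequality (Proposition \ref{prop:matbernpsi}). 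Independence across distinct $n$ is immediate from \eqref{eq:cl-construction}; the conjugate-symmetry pairing within a single $n$ is absorbed by summing only over independent representatives and losing at most a factor of $2$. For the per-summand bound, I would use $|\<\bh_\o,\hh\>| \leq \mu_0/\sqrt{L}$ (first branch of \eqref{eq:coherence_h}), $\|\<\mm_n,\cc_{\o,n}\>\|_{\psi_2} \lesssim \|\mm_n\|_2 \leq \rho_0/\sqrt{N}$ (from \eqref{eq:coherence_m}), and the chi-type estimate $\bigl\|\|\cc_{\o,n}\|_2\bigr\|_{\psi_2} \lesssim \sqrt{K}$, combined through \eqref{eq:Product-Subgaussians}, giving an Orlicz bound $U_1 \lesssim (L/Q)\,\mu_0\rho_0\sqrt{K/(LN)}$.

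For the variance $\sigma_{\vm{Z}}^2$, I would expand $\E\vm{Z}_{\o,n}\vm{Z}_{\o,n}^*$ and $\E\vm{Z}_{\o,n}^*\vm{Z}_{\o,n}$ using the isotropy and the Gaussian fourth-moment identities for $\cc_{\o,n}$; one direction collapses into a sum weighted by $\|\mm_n\|_2^2$ and contracts through $\rho_0^2/N$, while the other collapses into sums involving $|\<\bh_\o,\hh\>|^2$ and the ambient dimension $K$ and contracts through $\mu_0^2/L$. Substituting $U_1$ and $\sigma_{\vm{Z}}^2$ into \eqref{eq:matbernpsi} with $t = \beta\log(LN)$ and requiring the right-hand side to be at most $1/8$ produces the two stated hypotheses simultaneously: the variance branch is what forces $N \gtrsim \beta\rho_0^2(L/Q)\log(LN)$ together with the weaker part $Q \gtrsim \beta\mu_0^2 K (L/Q)\log(LN)$ of the $Q$-condition, while the Orlicz-norm branch forces $Q \gtrsim \beta\mu_0^2K\log^3(LN)$; combining them gives the $\max\{L/Q,\log^2(LN)\}$ form appearing in the statement. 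The main technical care lies in the simultaneous bookkeeping of both variance directions and the Orlicz norm so that the $\mu_0^2K$ constraint on $Q$ and the $\rho_0^2$ constraint on $N$ emerge together from a single Bernstein application; once this accounting is set up, the rest is a mechanical specialization of the $p\geq 2$ calculation of Lemma \ref{lem:concentration} with $\hh\mm^*$ replacing $\mathcal{S}_p^\ddagger(\w_{p-1})$.
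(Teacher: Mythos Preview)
Your proposal is correct and follows essentially the same route as the paper. The paper phrases the reduction as ``take $p=1$ in the proof of Lemma~\ref{lem:concentration} and substitute $\mathcal{S}_1^\ddagger = (L/Q)\PO$,'' which is exactly your direct computation with $\hh\mm^*$ in place of $\mathcal{S}_p^\ddagger(\w_{p-1})$; the resulting variance and Orlicz bounds, and hence the conditions on $Q$ and $N$, match yours.
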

\begin{lem}\label{lem:conectrationpg2part2}
	Assume further that the restricted isometry property in \eqref{eq:RIPresults} holds. Then 
	\[
	\|\E\cA_p^*\cA_p\mathcal{S}_p^\ddagger (\w_{p-1})-\w_{p-1}\| \leq 2^{-p-1} ~\mbox{for all}~ p \in \{2,3, \ldots, P\}
	\]
	with probability at least $1-(LN)^{-\beta+1}$
\end{lem}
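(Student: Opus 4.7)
The first step is to compute $\E\cA_p^*\cA_p\mathcal{S}_p^\ddagger(\w_{p-1})$ in closed form. Because the golfing partition makes $\Delta_{n,p}\cap\Delta_{n,p'}=\emptyset$ for $p\neq p'$, the random vectors $\{\cc_{\o,n}:(\o,n)\in\Gamma_p\}$ driving $\cA_p$ and $\mathcal{S}_p^\ddagger$ are independent of those embedded in $\w_{p-1}$, so we may treat $\w_{p-1}$ as deterministic for the expectation. Using $\vm{\phi}_{\o,n}\vm{\phi}_{\o,n}^*=\cc_{\o,n}\cc_{\o,n}^*\otimes\e_n\e_n^*$ and $\E\cc_{\o,n}\cc_{\o,n}^*=\vm{I}_K$, we get $\E\vm{\phi}_{\o,n}\vm{\phi}_{\o,n}^*=\vm{D}_n$ and hence
\[
\E\cA_p^*\cA_p(\vm{Z}) \;=\; \sum_{(\o,n)\in\Gamma_p}\bh_\o\bh_\o^*\vm{Z}\vm{D}_n.
\]
Plugging in $\vm{Z}=\mathcal{S}_p^\ddagger(\w_{p-1})=\sum_{n'}\vm{S}_{n',p}^\ddagger\w_{p-1}\vm{D}_{n'}$ and using $\vm{D}_{n'}\vm{D}_n=\delta_{nn'}\vm{D}_n$ collapses the double sum to $\sum_n\vm{R}_{n,p}\vm{S}_{n,p}^\ddagger\w_{p-1}\vm{D}_n$, where $\vm{R}_{n,p}:=\sum_{\o\in\Delta_{n,p}}\bh_\o\bh_\o^*$.

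Next, I would exploit the key algebraic identity $\vm{S}_{n,p}\vm{S}_{n,p}^\ddagger=\PO$ together with $\vm{S}_{n,p}^\ddagger=\PO\vm{S}_{n,p}^\ddagger=\vm{S}_{n,p}^\ddagger\PO$ to split
\[
\vm{R}_{n,p}\vm{S}_{n,p}^\ddagger \;=\; \PO\vm{R}_{n,p}\PO\vm{S}_{n,p}^\ddagger+\POc\vm{R}_{n,p}\PO\vm{S}_{n,p}^\ddagger \;=\; \PO+\POc\vm{R}_{n,p}\PO\vm{S}_{n,p}^\ddagger.
\]
Since $\w_{p-1}\in T_2$ (so $\PO\w_{p-1}=\w_{p-1}$) and $\sum_n\w_{p-1}\vm{D}_n=\w_{p-1}$, the $\PO$ piece cancels $\w_{p-1}$ exactly, leaving the residual
\[
\vm{E} \;:=\; \E\cA_p^*\cA_p\mathcal{S}_p^\ddagger(\w_{p-1})-\w_{p-1} \;=\; \sum_n\POc\vm{R}_{n,p}\PO\vm{S}_{n,p}^\ddagger\w_{p-1}\vm{D}_n.
\]
Note $\vm{E}$ lives entirely in $T_2^\perp$ (row support in $\Omega^c$), which matches the intuition that the expected residual is the part that our construction cannot reach with the golfing iterates.

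To bound $\|\vm{E}\|$, I would pass to the dual characterization $\|\vm{E}\|=\sup_{\|\vm u\|=\|\vm v\|=1}\vm u^*\vm E\vm v$. Writing $\vm u_c:=\POc\vm u$ and $\vm v_n:=\vm D_n\vm v$, we have
\[
\vm u^*\vm E\vm v \;=\; \sum_n\sum_{\o\in\Delta_{n,p}}(\bh_\o^*\vm u_c)(\bh_\o^*\vm S_{n,p}^\ddagger\w_{p-1}\vm v_n).
\]
Applying Cauchy--Schwarz first over $\o\in\Delta_{n,p}$ and then over $n$ gives
\[
|\vm u^*\vm E\vm v| \;\le\; \Bigl(\sum_n\|\hat{\vm B}_{\Delta_{n,p}}\vm u_c\|_2^2\Bigr)^{1/2}\Bigl(\sum_n\sum_{\o\in\Delta_{n,p}}|\bh_\o^*\vm S_{n,p}^\ddagger\w_{p-1}\vm v_n|^2\Bigr)^{1/2}.
\]
The first factor is $O(\sqrt{NQ/L})$ using that $\hat{\vm B}$ is unitary together with a matrix-Bernstein-type concentration of $\sum_n\hat{\vm B}_{\Delta_{n,p}}^*\hat{\vm B}_{\Delta_{n,p}}$ around $(NQ/L)\vm I$ (valid on the high-probability event we are conditioning on). For the second factor, I would use $|\bh_\o^*\vm S_{n,p}^\ddagger\w_{p-1}\vm v_n|\le\|\bh_\o^*\vm S_{n,p}^\ddagger\w_{p-1}\vm D_n\|_2\|\vm v_n\|_2$ and the definition $\rho_{p-1}^2=(Q/L)N\max_n\sum_\o\|\bh_\o^*\vm S_{n,p}^\ddagger\w_{p-1}\vm D_n\|_2^2$ to get
\[
\sum_n\sum_{\o\in\Delta_{n,p}}|\bh_\o^*\vm S_{n,p}^\ddagger\w_{p-1}\vm v_n|^2 \;\le\; \rho_{p-1}^2\frac{L}{QN}\sum_n\|\vm v_n\|_2^2 \;\le\; \rho_{p-1}^2\frac{L}{QN}.
\]
Combining yields $\|\vm E\|\lesssim\rho_{p-1}$. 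By Lemma~\ref{lem:rho-p-bound}, $\rho_{p-1}\le 2^{-(p-1)}\sqrt{Q/L}\,\rho_0$, and the theorem's sample-complexity hypothesis $N\gtrsim\rho_0^2(L/Q)\log(LN)$ (combined with $L/Q=P=\tfrac12\log_2(4\beta K\log(LN))$) guarantees $4\sqrt{Q/L}\,\rho_0\le 1$, which forces $\|\vm E\|\le 2^{-p-1}$ as desired. A union bound over $p\in\{2,\dots,P\}$ gives the probability $1-(LN)^{-\beta+1}$ advertised.

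\textbf{Main obstacle.} The delicate step is the aggregation over $n$: obtaining the tight bound $\sum_n\|\hat{\vm B}_{\Delta_{n,p}}\vm u_c\|_2^2\lesssim NQ/L$ (rather than the trivial $\le N$, which would be far too weak) requires uniform concentration of the random Fourier submatrices. Since $\vm u_c$ is supported in $\Omega^c$ and not necessarily sparse, the sparse RIP of \eqref{eq:RIPresults} does not apply directly; we must instead argue by the orthonormality of $\hat{\vm B}$ combined with independence across $n$. A second delicate point, noted in Remark~\ref{rem:dense-w}, is that $\w_{p-1}$ depends on the partition used to build $\cA_p$ via the earlier iterates, and handling this dependence uses the $T_2$ membership of $\w_{p-1}$ so that the coherence definitions \eqref{eq:rho_Wp}--\eqref{eq:nu_Wp} cover the relevant quadratic forms uniformly.
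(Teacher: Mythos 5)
Your setup coincides with the paper's: you compute $\E\cA_p^*\cA_p\mathcal{S}_p^\ddagger(\w_{p-1})=\sum_n \vm{R}_{n,p}\vm{S}_{n,p}^\ddagger\w_{p-1}\vm{D}_n$ with $\vm{R}_{n,p}=\sum_{\o\in\Delta_{n,p}}\bh_\o\bh_\o^*$, observe that the $\PO$ part reproduces $\w_{p-1}$ exactly, and reduce to bounding the residual $\vm{E}=\sum_n\POc\vm{R}_{n,p}\vm{S}_{n,p}^\ddagger\w_{p-1}\vm{D}_n$. The gap is in how you bound $\|\vm{E}\|$. Your Cauchy--Schwarz on the bilinear form terminates at $\|\vm{E}\|\lesssim\rho_{p-1}$, which by \eqref{eq:rhop-bound} is $\lesssim 2^{-(p-1)}\sqrt{Q/L}\,\rho_0$; to reach $2^{-p-1}$ you then need $\rho_0\sqrt{Q/L}\lesssim 1$, i.e.\ $\rho_0^2\lesssim L/Q=P$. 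That is not implied by the hypotheses: the condition $N\gtrsim\rho_0^2(L/Q)\log(LN)$ is a \emph{lower} bound on $N$ and caps $\rho_0^2$ only by $NQ/(CL\log(LN))$, so $\rho_0^2Q/L$ can be as large as $N/(CP^2\log(LN))$, which exceeds any constant once $N\gg P^2\log(LN)$ (and $P=\tfrac12\log_2(4\beta K\log(LN))$ is only logarithmic, so even modestly unbalanced energy profiles break the step). The spurious $\rho_0$ is an artifact of splitting the bilinear form and taking the max over $n$ hidden inside $\rho_{p-1}$.

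The paper closes the estimate by a different route that avoids both this loss and your stated ``main obstacle.'' It bounds $\|\vm{E}\|\le\|\vm{E}\|_\F$ and uses the orthonormality of $\{\bh_\o\}_{\o\in[L]}$ to get the exact Pythagorean identity
\[
\|\vm{E}\|_\F^2=\sum_{(\o,n)\in\Gamma_p}\|\bh_\o^*\vm{S}_{n,p}^\ddagger\w_{p-1}\vm{D}_n\|_2^2-\|\w_{p-1}\|_\F^2 ,
\]
then applies the uniform RIP \eqref{eq:RIPresults} column-by-column to the $S$-sparse columns of $\vm{S}_{n,p}^\ddagger\w_{p-1}\vm{D}_n$ (this is precisely where the dependence between $\w_{p-1}$ and $\Delta_{n,p}$ is absorbed), together with $\|\vm{S}_{n,p}^\ddagger\|\le 4L/(3Q)$, to obtain $\|\vm{E}\|^2\le\bigl(\tfrac{20}{9}\tfrac{L}{Q}-1\bigr)\|\w_{p-1}\|_\F^2$. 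The $L/Q$ blow-up is then cancelled exactly by the Frobenius decay $\|\w_{p-1}\|_\F\le 2^{-p}\sqrt{Q/L}$ from \eqref{eq:Wpfro-bound}, leaving $2^{-p}$ times a pure constant with no $\rho_0$. Because the operator norm is never dualized, no uniform control of $\sum_n\hat{\vm{B}}_{\Delta_{n,p}}^*\hat{\vm{B}}_{\Delta_{n,p}}$ over arbitrary non-sparse test vectors is required. To repair your argument you would need to tie the second Cauchy--Schwarz factor to $\|\w_{p-1}\|_\F$ rather than to the per-input coherence $\rho_{p-1}$; at that point you have essentially reconstructed the paper's Frobenius-norm computation.
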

\begin{lem}\label{lem:concentrationpe1part2}
	Let $\cA_1$, $\mu_{\max}^2$, and $\rho_0^2$ be as in \eqref{eq:cAp}, \eqref{eq:coherence_B}, and \eqref{eq:coherence_m}, respectively. Fix $\beta \geq 1$. Then there exists a constant $C$ such that 
	\[
	L \geq C\beta \mu_{\max}^2 S (L/Q)\log(LN), ~\mbox{and}~ N \geq C\beta \rho_0^2(L/Q)\log(LN)
	\]
	implies that 
	\[
	\left\|\frac{L}{Q}\E\cA_1^*\cA_1(\hh\mm^*) - \hh\mm^*\right\| \leq \frac{1}{8}
	\]
	with probability at least $1- (LN)^{-\beta}$. 
\end{lem}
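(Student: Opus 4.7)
The plan is to compute the Gaussian expectation in closed form, express the residual as a sum of mutually independent mean-zero random matrices indexed by $n$, and invoke matrix Bernstein over the randomness of $\{\Delta_{n,1}\}_{n=1}^N$.

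First, since $\vm{\phi}_{\ell,n}=\vm{c}_{\ell,n}\otimes\vm{e}_n$ with $\E[\vm{c}_{\ell,n}\vm{c}_{\ell,n}^*]=\vm{I}_K$ from \eqref{eq:cl-construction}, we have $\E[\vm{\phi}_{\ell,n}\vm{\phi}_{\ell,n}^*]=\vm{D}_n$. Using the cyclic identity $\<\bh_\ell\vm{\phi}_{\ell,n}^*,\hh\mm^*\>=(\bh_\ell^*\hh)(\mm^*\vm{\phi}_{\ell,n})$ and rearranging the sum over $\Gamma_1$ by $n$ yields
\[
\E[\cA_1^*\cA_1(\hh\mm^*)]=\sum_{n=1}^{N}\Bigl[\sum_{\ell\in\Delta_{n,1}}\bh_\ell\bh_\ell^*\hh\Bigr]\mm^*\vm{D}_n.
\]
Combined with $\hh\mm^*=\sum_n\hh\mm^*\vm{D}_n$, the residual decomposes as $\tfrac{L}{Q}\E[\cA_1^*\cA_1(\hh\mm^*)]-\hh\mm^*=\sum_n\vm{Z}_n$, with $\vm{Z}_n:=\vm{g}_n\mm^*\vm{D}_n$ and $\vm{g}_n:=\bigl[\tfrac{L}{Q}\sum_{\ell\in\Delta_{n,1}}\bh_\ell\bh_\ell^*-\vm{I}\bigr]\hh$. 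Since the sets $\Delta_{n,1}$ are drawn independently across $n$ (Section \ref{sec:Golfing}), the $\vm{Z}_n$ are mutually independent; and $\E[\mathbf{1}_{\ell\in\Delta_{n,1}}]=Q/L$ together with $\sum_\ell\bh_\ell\bh_\ell^*=\vm{I}$ (as $\hat{\vm{B}}$ is unitary) gives $\E[\vm{g}_n]=\vm{0}$, so $\E[\vm{Z}_n]=\vm{0}$.

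Second, I would compute the Bernstein ingredients. Writing $\vm{g}_n=\sum_\ell\bigl(\tfrac{L}{Q}\mathbf{1}_{\ell\in\Delta_{n,1}}-1\bigr)(\bh_\ell^*\hh)\bh_\ell$ and using the covariances of the indicators under uniform without-replacement sampling, together with $\|\bh_\ell\|_2=1$ (rows of the unitary $\hat{\vm{B}}$) and $\|\hat{\vm{B}}\hh\|_\infty^2\leq\mu_0^2/L\leq\mu_{\max}^2 S/L$ (Lemma \ref{lem:mu0-range}), one gets $\|\E[\vm{g}_n\vm{g}_n^*]\|\lesssim\mu_0^2/Q$ and $\E[\|\vm{g}_n\|_2^2]\lesssim L/Q$. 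Together with the block-diagonal structure of $\vm{D}_n\mm\mm^*\vm{D}_n$ and $\|\mm_n\|_2^2\leq\rho_0^2/N$, these give
\[
\Bigl\|\sum_n\E[\vm{Z}_n\vm{Z}_n^*]\Bigr\|\lesssim\frac{\mu_{\max}^2 S}{Q},\qquad \Bigl\|\sum_n\E[\vm{Z}_n^*\vm{Z}_n]\Bigr\|\lesssim\frac{L\rho_0^2}{QN},
\]
and both are $\lesssim 1/(C\beta\log(LN))$ under the two stated hypotheses; the $\sigma\sqrt{\cdot}$ term in Proposition \ref{prop:matrixBernstein-uniform} with deviation parameter $t=\beta\log(LN)$ is then $\leq 1/\sqrt{C}\leq 1/16$ for $C$ large. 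For the uniform bound, $\|\vm{Z}_n\|\leq\|\vm{g}_n\|_2\|\mm_n\|_2$ with $\|\vm{g}_n\|_2^2\leq(L/Q)^2\sum_{\ell\in\Delta_{n,1}}|\bh_\ell^*\hh|^2+1$.

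Plugging these into Proposition \ref{prop:matrixBernstein-uniform} with $t=\beta\log(LN)$ produces $\|\sum_n\vm{Z}_n\|\leq 1/8$ with probability at least $1-(LN)^{-\beta}$. The main obstacle is the $U\cdot\log(LN)$ term: using only the sparsity-aware deterministic estimate $\sum_{\ell\in\Delta_{n,1}}|\bh_\ell^*\hh|^2\leq Q\mu_0^2/L$ yields $\|\vm{g}_n\|_2\lesssim\sqrt{L\mu_0^2/Q}$, carrying an extra $\mu_0^2$ factor that by itself would force $\log^2(LN)$ in the $N$-hypothesis rather than the stated single $\log(LN)$. To recover the single-log dependence I would pre-condition on the high-probability event (via a Serfling/Bernstein inequality for uniform without-replacement sums, valid under $Q\gtrsim\mu_0^2\log(LN)$) that $\sum_{\ell\in\Delta_{n,1}}|\bh_\ell^*\hh|^2\lesssim Q/L$ uniformly in $n$; this tightens the uniform bound to $\|\vm{g}_n\|_2\lesssim\sqrt{L/Q}$, after which the $U\log$ contribution fits under the stated hypotheses.
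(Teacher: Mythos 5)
Your setup (computing the Gaussian expectation, centering, and applying matrix Bernstein) is the paper's strategy, and your variance bounds $\mu_{\max}^2 S/Q$ and $L\rho_0^2/(QN)$ match the paper's exactly. The gap is in the decomposition and the resulting uniform bound $U$. You group the $\ell$-sum into a single vector $\vm{g}_n$ per channel, obtaining only $N$ independent summands $\vm{Z}_n=\vm{g}_n\mm^*\vm{D}_n$. Even after your proposed pre-conditioning, $\|\vm{g}_n\|_2^2\lesssim L/Q$ is essentially tight (indeed $\E\|\vm{g}_n\|_2^2=\sum_\ell \mathrm{Var}(\tfrac{L}{Q}\mathbf{1}_{\ell\in\Delta_{n,1}})\,|\bh_\ell^*\hh|^2\approx L/Q$), so $U\approx\sqrt{L\rho_0^2/(QN)}$ and the $U\,t$ term in Proposition \ref{prop:matrixBernstein-uniform} with $t=\beta\log(LN)$ is of order $\sqrt{L\rho_0^2/(QN)}\cdot\beta\log(LN)$. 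Under the stated hypothesis $N\geq C\beta\rho_0^2(L/Q)\log(LN)$ this is only $O(\sqrt{\beta\log(LN)/C})$, which is not bounded by a constant; your route genuinely requires $N\gtrsim\rho_0^2(L/Q)\log^2(LN)$. So the final sentence of your proposal — that tightening $\|\vm{g}_n\|_2$ to $\sqrt{L/Q}$ makes the $U\log$ term fit — is incorrect: the spurious $\mu_0$ factor was not the only obstruction, and no uniform bound on the $n$-grouped variable can be better than $\sqrt{L/Q}\,\|\mm_n\|_2$.

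The paper avoids this by not grouping. It first invokes the standard reduction (Lemma 2.3 of \cite{candes2006robust}) to replace the without-replacement sets $\Delta_{n,1}$ by independent Bernoulli selectors $\delta_{\ell,n}$ with $\E\delta_{\ell,n}=Q/L$ (using monotonicity of the failure probability in $|\Delta_{n,1}|$), and then sums over all $LN$ independent rank-one matrices $\vm{Z}_{\ell,n}=(\tfrac{L}{Q}\delta_{\ell,n}-1)\bh_\ell\bh_\ell^*\hh\mm^*\vm{D}_n$. Each has $\|\vm{Z}_{\ell,n}\|\leq 2\tfrac{L}{Q}\mu_{\max}\sqrt{S/L}\cdot\rho_0/\sqrt{N}$, so that
\[
(U t)^2 \lesssim \left[\tfrac{L}{Q}\mu_{\max}^2 S\,\beta\log(LN)\right]\cdot\left[\tfrac{L}{Q}\rho_0^2\,\beta\log(LN)\right]\cdot\tfrac{1}{LN}\leq \tfrac{1}{C^2}
\]
under the two stated hypotheses: the product structure of the per-$(\ell,n)$ bound lets each hypothesis absorb one factor of $\log(LN)$. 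To repair your argument, keep the double-indexed sum (after Bernoulli-izing to restore independence across $\ell$) rather than collapsing the $\ell$-sum into $\vm{g}_n$.
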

\section{Proofs of the Key Lemmas}
This section provides the proofs of all the key lemmas laid out in Section \ref{sec:Main_results}. All of the main lemmas involve bounding the operator norm of a sum of independent random matrices with high probability. The matrix Bernstein inequality is used repeatedly to compute such probability tail bounds.

In the proof of the lemma below, the following calculations come in handy. Using the definition of the projection $\PT$ in \eqref{eq:PT-def}, we can see that 
\[
\PT(\bh_{\o}\vm{\phi}_{\o,n}^*) = (\hh\hh^*\bh_\o)\vm{\phi}_{\o,n}^*+ \PO\bh_\o(\mm\mm^*\vm{\phi}_{\o,n})^*-(\hh\hh^* \bh_\o)(\mm\mm^*\vm{\phi}_{\o,n})^*.
\]
It also follows from the definition \eqref{eq:PT-def} that $\PO \PT= \PT\PO =  \PT$. Another quantity of interest is $\|\PT(\bh_{\o}\vm{\phi}_{\o,n}^*)\|_{\F}^2 = \<\PT(\bh_{\o}\vm{\phi}_{\o,n}^*),\bh_{\o}\vm{\phi}_{\o,n}^*\>$, which can be expanded as
\begin{align}\label{eq:PTA-fro}
\left\|\PT(\bh_{\o}\vm{\phi}_{\o,n}^*)\right\|_{\F}^2&=\<\hh\hh^*\bh_\o\vm{\phi}_{\o,n}^*,\bh_\o\vm{\phi}_{\o,n}^*\> +\<\PO\bh_\o\vm{\phi}_{\o,n}^*\mm\mm^*,\bh_\o\vm{\phi}_{\o,n}^*\>-\<\hh\hh^* \bh_\o\vm{\phi}_{\o,n}^*\mm\mm^*,\bh_\o\vm{\phi}_{\o,n}^*\>\notag\\
& = \|\vm{\phi}_{\o,n}\|_2^2 |\bh_\o^*\hh|^2 + \|\PO \bh_\ell\|_2^2 |\mm^*\vm{\phi}_{\o,n}|^2 - |\bh_\o^*\hh|^2 |\mm^*\vm{\phi}_{\o,n}|_2^2\notag\\
& \leq \|\vm{\phi}_{\o,n}\|_2^2 |\bh_\o^*\hh|^2 + \|\PO \bh_\ell\|_2^2 |\mm^*\vm{\phi}_{\o,n}|^2.
\end{align}
Moreover, 
\begin{align}
\left\|\PT(\vm{S}_{n,p}^\ddagger\bh_{\o}\vm{\phi}_{\o,n}^*)\right\|_{\F}^2&=\<\hh\hh^*\vm{S}_{n,p}^\ddagger\bh_{\o}\vm{\phi}_{\o,n}^*,\vm{S}_{n,p}^\ddagger\bh_{\o}\vm{\phi}_{\o,n}^*\> +\<\PO\vm{S}_{n,p}^\ddagger\bh_{\o}\vm{\phi}_{\o,n}^*\mm\mm^*,\vm{S}_{n,p}^\ddagger\bh_{\o}\vm{\phi}_{\o,n}^*\>\notag\\
&~~~~~~~~~~~~-\<\hh\hh^*\vm{S}_{n,p}^\ddagger \bh_{\o}\vm{\phi}_{\o,n}^*\mm\mm^*,\vm{S}_{n,p}^\ddagger\bh_{\o}\vm{\phi}_{\o,n}^*\>\notag\\
& = \|\vm{\phi}_{\o,n}\|_2^2 |\bh_\o^*\vm{S}_{n,p}^\ddagger\hh|^2 + \| \vm{S}_{n,p}^\ddagger\bh_\ell\|_2^2 |\mm^*\vm{\phi}_{\o,n}|^2 - |\bh_\o^*\vm{S}_{n,p}^\ddagger\hh|^2 |\mm^*\vm{\phi}_{\o,n}|_2^2\notag\\
& \leq \|\vm{\phi}_{\o,n}\|_2^2 |\bh_\o^*\vm{S}_{n,p}^\ddagger\hh|^2 + \| \vm{S}_{n,p}^\ddagger\bh_\ell\|_2^2 |\mm^*\vm{\phi}_{\o,n}|^2.
\end{align}
We are now ready to move on to the proof of Lemma \ref{lem:injectivity} given below. 
\subsection{Proof of Lemma \ref{lem:injectivity}}
\begin{proof}
	 The lemma concerns bounding the quantity $\|\PT\cA_p^*\cA_p\mathcal{S}_p^\ddagger\PT - \PT\|$ for $p \geq 2$. Using the definition of $\cA_p$, and $\mathcal{S}^\ddagger_p$ in \eqref{eq:cAp}, and \eqref{eq:cSp}, respectively, we expand the quantity $\PT\cA_p^*\cA_p\mathcal{S}_p^\ddagger\PT$ and evaluate its expectation with the sets $\Gamma_p$ fixed as follows 
\begin{align*}
\E\PT\cA_p^*\cA_p\mathcal{S}_p^\ddagger\PT  &= \sum_{(\o,n) \in \Gamma_p}\PT\left[\bh_\o\bh_\o^*\vm{S}_{n,p}^\ddagger \otimes \E\vm{\phi}_{\o,n}\vm{\phi}_{\o,n}^*\right]\PT\\
&= \sum_{(\o,n) \in \Gamma_p}\PT\left[\bh_\o\bh_\o^* \vm{S}_{n,p}^\ddagger \otimes \vm{D}_n\right]\PT.
\end{align*}
From the construction of the sets $\Gamma_p$ in Section \ref{sec:Golfing}, it is clear that $(\o,n) \in \Gamma_p$ means that for every $n$, the index $\o$ traverses the set $\Delta_{n,p}$. This means we can split the summation over $\Gamma_p$ into an outer sum over $n$ and an inner sum over $\o \in \Delta_{n,p}$. Moreover, by definition $\PT = \PT\PO$, and $\PO \vm{S}_{n,p}^\ddagger = \vm{S}_{n,p}^\ddagger$.  The following equality is now in order.
\[
\E\PT\cA_p^*\cA_p\mathcal{S}_p^\ddagger\PT  = \sum_n \sum_{\o \in \Delta_{n,p}}\PT\left[\PO\bh_\o\bh_\o^*\PO \vm{S}_{n,p}^\ddagger \otimes \vm{I}_{KN}\right]\PT.
\]
It is now easy to see by using the definition of $\vm{S}_{n,p}^\ddagger$ in Section \ref{sec:Main_results} that for $p \geq 2$, 
\[
\sum_{\o \in \Delta_{n,p}}\PO\bh_\o\bh_\o^*\PO\vm{S}_{n,p}^\ddagger = \PO,
\]
 which implies that $\E\PT\cA_p^*\cA_p\mathcal{S}_p^\ddagger\PT  = \PT$. Given this, we only need to control the term $ \|\PT\cA_p^*\cA_p\mathcal{S}_p^\ddagger\PT -  \PT(\E\cA_p^*\cA_p)\mathcal{S}_p^\ddagger\PT\|$. By definitions of $\cA_p$, $\mathcal{S}_{p}^\ddagger$, and $\PT$ in Section \ref{sec:Golfing}, we can write 
\[
\PT\cA_p^*\cA_p\mathcal{S}_p^\ddagger\PT  = \sum_{(\o,n)\in \Gamma_p}\PT\left[\bh_\o\bh_\o^*\vm{S}_{n,p}^\ddagger \otimes \vm{\phi}_{\o,n}\vm{\phi}_{\o,n}^*\right]\PT.
\]
Note that the action of the linear map $\mathcal{Z}_{\o,n}:=\PT(\bh_\ell\bh_\ell^*\vm{S}_{n,p}^\ddagger\otimes\vm{\phi}_{\o,n}\vm{\phi}_{\o,n}^*)\PT$ on an $L \times KN$ matrix $\X$ is  $\mathcal{Z}_{\o,n}(\X) = \PT\left[\bh_\ell\bh_\ell^*\vm{S}_{n,p}^\ddagger\otimes\vm{\phi}_{\o,n}\vm{\phi}_{\o,n}^*\right]\PT(\X) = \<\X,\PT(\vm{S}_{n,p}^\ddagger\bh_\ell\vm{\phi}_{\ell,n}^*)\>\PT(\bh_\ell\vm{\phi}_{\ell,n}^*)$. It is clear from this definition that $\mathcal{Z}_{\o,n}$ are rank-1 operators. Thus, we are asking the question of bounding the operator norm of sum of independent operators $\mathcal{Z}_{\o,n}$. Subtracting the expectation, we get
\begin{align*}
\PT(\cA_p^*\cA_p-\E\cA_p^*\cA_p)\mathcal{S}_p^\ddagger\PT &= \sum_{(\o,n)\in \Gamma_p}\PT \left[\bh_\ell\bh_\ell^*\vm{S}_{n,p}^\ddagger\otimes\left(\vm{\phi}_{\o,n}\vm{\phi}_{\o,n}^*-\E\vm{\phi}_{\o,n}\vm{\phi}_{\o,n}^*\right)\right]\PT\\
&=\sum_{(\o,n)\in \Gamma_p} \mathcal{Z}_{\ell,n} -\E \mathcal{Z}_{\o,n}.
\end{align*}
The operator norm of the sum can be controlled using Bernstein's inequality. The variance $\sigma_{\mathcal{Z}}^2$; the main ingredient to compute the Bernstein bound, is in this case 
	\begin{align}\label{eq:variance-lemma1}
	\sigma_{\mathcal{Z}}^2  &:= \max \left\{\left\|  \sum_{(\o,n)\in \Gamma_p}\E\mathcal{Z}_{\o,n}\mathcal{Z}_{\o,n}^* -(\E\mathcal{Z}_{\o,n}) (\E\mathcal{Z}_{\o,n})^*\right\|, \left\| \sum_{(\o,n)\in \Gamma_p} \E\mathcal{Z}_{\o,n}^*\mathcal{Z}_{\o,n} -(\E\mathcal{Z}_{\o,n})^* (\E\mathcal{Z}_{\o,n})\right\|\right\} \notag\\
	&\leq \max \left\{\left\|  \sum_{(\o,n)\in \Gamma_p} \E\mathcal{Z}_{\o,n}\mathcal{Z}_{\o,n}^*\right\|,\left\| \sum_{(\o,n)\in \Gamma_p} \E\mathcal{Z}_{\o,n}^*\mathcal{Z}_{\o,n}\right\|\right\},
	\end{align}
where the last inequality follows from the fact that for two positive semidefinite (PSD) matrices $\vm{A}$, and $\vm{B}$, $\|\vm{A}-\vm{B}\| \leq \|\vm{A}\|$, whenever $\vm{A}-\vm{B}$ is a PSD matrix. The first term in the maximum in the variance expression is simplified below. As mentioned earlier, the linear operator $\mathcal{Z}_{\o,n}$ can be visualized as a rank-1 matrix  $\PT(\bh_\o\vm{\phi}_{\o,n}^*) \boxtimes \PT(\vm{S}_{n,p}^\ddagger\bh_\o\vm{\phi}_{\o,n}^*)$, and $\mathcal{Z}_{\o,n}(\X)$ is just the product of the rank-1 matrix above with the vectorized $\vm{X}$, i.e., 
\[
\mathcal{Z}_{\o,n}(\X) = \left[\PT(\bh_\o\vm{\phi}_{\o,n}^*) \boxtimes \PT(\vm{S}_{n,p}^\ddagger\bh_\o\vm{\phi}_{\o,n}^*)\right]\mbox{vec}(\X). 
\]
It is then easy to see that 
\[
\mathcal{Z}_{\o,n}\mathcal{Z}_{\o,n}^* = \|\PT(\vm{S}_{n,p}^\ddagger \bh_\o\vm{\phi}_{\o,n}^*)\|_\F^2\cdot\left[\PT(\bh_\o\vm{\phi}_{\o,n}^*) \boxtimes \PT(\bh_\o\vm{\phi}_{\o,n}^*)\right], 
\]
where 
\[
\left[\PT(\bh_\o\vm{\phi}_{\o,n}^*) \boxtimes \PT(\bh_\o\vm{\phi}_{\o,n}^*)\right] = \PT\left[\bh_\o\bh_\o^* \otimes \vm{\phi}_{\o,n}\vm{\phi}_{\o,n}^*\right]\PT.
\]
Thus,
\begin{align*}
\left\| \sum_{(\o,n)\in \Gamma_p}\E\mathcal{Z}_{\o,n}\mathcal{Z}_{\o,n}^*\right\| =  \left\|\sum_{(\o,n)\in \Gamma_p}\E   \|\PT(\vm{S}_{n,p}^\ddagger\bh_\o\vm{\phi}_{\o,n}^*)\|_{\F}^2\PT\left[\bh_\o\bh_\o^*\otimes \vm{\phi}_{\o,n}\vm{\phi}_{\o,n}^*\right]\PT  \right\|.
\end{align*}
We remind the reader that the expectation is only over $\vm{\phi}_{\o,n}$, not over the randomness in the construction of the partition $\Gamma_p$. Using the expansion in \eqref{eq:PTA-fro}, the above quantity is upper bounded by 
\begin{align}
&\left\|\sum_{(\o,n)\in \Gamma_p} \E\left(\|\vm{\phi}_{\o,n}\|_2^2|\bh_\o^*\vm{S}_{n,p}^\ddagger\hh|^2+ \|\vm{S}_{n,p}^\ddagger \bh_\o\|_2^2|\mm^*\vm{\phi}_{\o,n}|^2\right)\PT\left[\bh_\o\bh_\o^* \otimes \vm{\phi}_{\o,n}\vm{\phi}_{\o,n}^*\right]\PT\right\|\notag\\
&~~~\leq \max_n\|\hat{\vm{B}}\vm{S}_{n,p}^\ddagger\hh\|_\infty^2\left\|\sum_{(\o,n)\in \Gamma_p}\E\|\vm{\phi}_{\o,n}\|_2^2\PT\left[\bh_\o\bh_\o^* \otimes \vm{\phi}_{\o,n}\vm{\phi}_{\o,n}^*\right]\PT\right\|\notag\\
&~~~~~~~~~~~~~~~~~~~~~~~~+ \max_{\o,n}\|\vm{S}_{n,p}^\ddagger \bh_\o\|_2^2 \left\|\sum_{(\o,n)\in \Gamma_p}\E|\mm^*\vm{\phi}_{\o,n}|^2\PT\left[\bh_\o\bh_\o^* \otimes \vm{\phi}_{\o,n}\vm{\phi}_{\o,n}^*\right]\PT\right\|\notag \\
&~~~= \max_n\|\hat{\vm{B}}\vm{S}_{n,p}^\ddagger\hh\|_\infty^2\left\|\sum_{(\o,n)\in \Gamma_p} \PT\left[\bh_\o\bh_\o^* \otimes \E(\|\vm{\phi}_{\o,n}\|_2^2\vm{\phi}_{\o,n}\vm{\phi}_{\o,n}^*)\right]\PT\right\|\notag\\
&~~~~~~~~~~~~~~~~~~~~~~~+ \frac{16}{9}\mu_{\max}^2\frac{SL}{Q^2}\left\|\sum_{(\o,n)\in \Gamma_p} \PT\left[\bh_\o\bh_\o^* \otimes \E|\mm^*\vm{\phi}_{\o,n}|^2\vm{\phi}_{\o,n}\vm{\phi}_{\o,n}^*\right]\PT\right\|\label{eq:Variance-injectivity},
\end{align}
where in the last equality, we made use of the linearity of Kronecker operator, and the bound
\[
\max_{\o,n}\|\vm{S}_{n,p}^\ddagger \bh_\o\|_2^2 \leq \max_{\o,n} \|\PO\bh_\o\|_2^2\|\vm{S}_{n,p}^\ddagger\|_2^2 \leq \frac{16}{9}\mu_{\max}^2 \frac{SL}{Q^2},
\]
where we have in turn used the Cauchy Schwartz inequality; the fact that the non-zero rows and columns of $\vm{S}_{n,p}^\ddagger$ are supported on the index set $\Omega$; and the last inequality is the result of \eqref{eq:coherence_B}, together with \eqref{eq:S-eigen-bounds}. It is then easy to verify that 
\[
\E \|\vm{\phi}_{\o,n}\|_2^2\vm{\phi}_{\o,n}\vm{\phi}_{\o,n}^* =  ( K+2)\vm{D}_n,
\]
and
\[
 \E|\mm^*\vm{\phi}_{\o,n}|^2\vm{\phi}_{\o,n}\vm{\phi}_{\o,n}^* = (\|\mm_n\|_2^2 +2\mm_n\mm_n^*) \vm{D}_n.
\]
Applying this and the facts that $\PT= \PT\PO$, and $\|\PT\| \leq 1$, the above expression simplifies to 
\begin{align*}
\left\| \sum_{(\o,n)\in \Gamma_p}\E\mathcal{Z}_{\o,n}\mathcal{Z}_{\o,n}^*\right\| &\leq (K+2)\left(\max_n\|\hat{\vm{B}}\vm{S}_{n,p}^\ddagger\hh\|_\infty^2\right) \left\|\sum_{(\o,n)\in \Gamma_p} \PO\bh_\o\bh_\o^*\PO \otimes \vm{D}_n\right\|\\
&~~~~~~~+ \frac{16}{3}\mu_{\max}^2\frac{SL}{Q^2}\left(\max_{n} \|\mm_n\|_2^2\right)\left\|\sum_{(\o,n)\in \Gamma_p}\PO\bh_\o\bh_\o^*\PO \otimes  \vm{D}_n\right\|.
\end{align*}
Using the definition of coherences in \eqref{eq:coherence_h}, and \eqref{eq:coherence_m} and the fact that $\sum_{n} \vm{I}_K\otimes \vm{e}_n\vm{e}_n^* = \vm{I}_{KN}$, and that for an arbitrary matrix $\vm{A}$, $\|\vm{A}\otimes \vm{I}\| = \|\vm{A}\|$,
\begin{align*}
 \left\|\sum_{n} \sum_{\o \in \Delta_{n,p}} \E\mathcal{Z}_{\o,n}\mathcal{Z}_{\o,n}^*\right\| &\leq \left(\mu_0^2 \frac{(K+2)L}{Q^2} +\mu_{\max}^2\rho_0^2 \frac{SL}{Q^2N}\right) \cdot\max_n\left\| \sum_{\o \in \Delta_{n,p}}\PO\bh_\o\bh_\o^*\PO\right\|\notag\\
 & \leq \left(\frac{5}{4}\mu_0^2 \frac{K+2}{Q} +\frac{20}{3}\mu_{\max}^2\rho_0^2 \frac{S}{QN}\right),
\end{align*}
where the last inequality follows from \eqref{eq:RIPresults}. The computation for the second term in the maximum of variance expression \eqref{eq:variance-lemma1} follows a very similar route. In short, 
\begin{align}\label{eq:var-injectivity}
\left\| \sum_{(\o,n)\in \Gamma_p}\E\mathcal{Z}_{\o,n}^*\mathcal{Z}_{\o,n}\right\| &=  \left\|\sum_{(\o,n)\in \Gamma_p}\E \|\PT(\bh_\o\vm{\phi}_{\o,n}^*)\|_{\F}^2\PT(\bh_\o\bh_\o^*\vm{S}_{n,p}^\ddagger\otimes \vm{\phi}_{\o,n}\vm{\phi}_{\o,n}^*)\PT\right\|\notag\\
& \leq 3\left(\mu_0^2\frac{K}{L}+\mu_{\max}^2\rho_0^2\frac{S}{LN}\right)\cdot\max_n\left\|\sum_{\o \in \Delta_{n,p}}\PO\bh_\o\bh_\o^*\PO \vm{S}_{n,p}^\ddagger\right\|\notag\\
& \leq 3\left(\mu_0^2\frac{K}{L}+\mu_{\max}^2\rho_0^2\frac{S}{LN}\right),
\end{align}
the last line is the result of the definition of $\vm{S}_{n,p}^\ddagger$ in \eqref{eq:Snp}. This completes the calculation of the variance term. 

The second ingredient in the Bernstein bound is the calculation of the Orlicz norms of the summands. From \eqref{eq:Orlicz-norm-centered-X}, it follows that
\begin{align}\label{eq:orliczbound}
\|\mathcal{Z}_{\o,n} - \E \mathcal{Z}_{\o,n}\|_{\psi_1} & \leq  2\|\mathcal{Z}_{\o,n}\|_{\psi_1}.
\end{align}
Since $\mathcal{Z}_{\o,n}$ is rank-1, its operator norm simplifies to 
\[
\|\mathcal{Z}_{\o,n}\| = \|\PT(\bh_\o\vm{\phi}_{\o,n}^*)\|_{\F}\|\PT(\vm{S}_{n,p}^\ddagger \bh_\o\vm{\phi}_{\o,n}^*)\|_\F,
\]
and it is also easy to show that $\|\E\mathcal{Z}_{\o,n}\| = 1$. Let us begin by showing that the random variable $\|\PT(\bh_\o\vm{\phi}_{\o,n}^*)\|_{\F}$ is subgaussian. For this it is enough to prove that $\|\| \PT(\bh_\o\vm{\phi}_{\o,n}^*)\|_{\F}\|_{\psi_2}^2 < \infty$. Now using \eqref{eq:Square-Subgaussian}, we have
\begin{align}\label{eq:PTAfro-Orlicz}
\|\| \PT(\bh_\o\vm{\phi}_{\o,n}^*)\|_{\F}\|_{\psi_2}^2  &\leq \|\| \PT(\bh_\o\vm{\phi}_{\o,n}^*)\|_{\F}^2\|_{\psi_1}\notag\\
&\leq \left(\|\hh\|_\infty^2 \cdot\| \|\vm{\phi}_{\o,n}\|_2^2\|_{\psi_1}+ \mu_{\max}^2 \frac{S}{L} \||\<\mm,\vm{\phi}_{\o,n}\>|^2\|_{\psi_1}\right)\notag\\
&\leq C \left(\mu_0^2 \frac{K}{L} + \mu_{\max}^2\rho_0^2\frac{S}{LN}\right).
\end{align}
In a very similar manner, one shows that 
\begin{align*}
\|\| \PT(\vm{S}_{n,p}^\ddagger\bh_\o\vm{\phi}_{\o,n}^*)\|_{\F}\|_{\psi_2}^2  &\leq \|\| \PT(\vm{S}_{n,p}^\ddagger\bh_\o\vm{\phi}_{\o,n}^*)\|_{\F}^2\|_{\psi_1}\\
&\leq C\left(\|\vm{B}\vm{S}_{n,p}^\ddagger\hh\|_\infty^2 \cdot\| \|\vm{\phi}_{\o,n}\|_2^2\|_{\psi_1}+ \mu_{\max}^2 \frac{S}{L} \||\<\mm,\vm{\phi}_{\o,n}\>|^2\|_{\psi_1}\right)\\
&\leq C \left(\mu_0^2 \frac{KL}{Q^2} + \mu_{\max}^2\rho_0^2\frac{SL}{Q^2N}\right).
\end{align*}
Using \eqref{eq:Product-Subgaussians}, we now obtain 
\begin{align*}
\|\mathcal{Z}_{\o,n}\|_{\psi_1} &\leq C\|\| \PT(\vm{S}_{n,p}^\ddagger\bh_\o\vm{\phi}_{\o,n}^*)\|_{\F}\|_{\psi_2}\cdot\|\| \PT(\bh_\o\vm{\phi}_{\o,n}^*)\|_{\F}\|_{\psi_2}\\
& \leq C\frac{L}{Q}\left(\mu_0^2 \frac{K}{L}  + \mu_{\max}^2\rho_0^2\frac{S}{LN}\right).
\end{align*}
Thus $\|\mathcal{Z}_{\o,n} - \E \mathcal{Z}_{\o,n}\|$ is a sub-exponential random variable; hence, $\alpha = 1$ in \eqref{eq:matbernpsi}, which gives 
\begin{equation}\label{eq:orlicznorm-injectivity}
\log\left(\frac{(QN) U_1^2}{\sigma^2_{\mathcal{Z}}}\right) \leq C\log M, 
\end{equation}
where $M = \mu_0^2KN+\mu_{\max}^2\rho_0^2S \leq C(LN)$, the last inequality being the result of  $\mu_0^2 \leq L$, $\mu_{\max}^2 \leq L$, and $\rho_0^2 \leq N$, and that $L \geq S, K$. 

Plugging the upper bound on the variance in \eqref{eq:var-injectivity}, and \eqref{eq:orlicznorm-injectivity} in \eqref{eq:matbernpsi}, we have by using $t = \beta\log (LN)$ that
\begin{align}\label{eq:injectivity-rate}
&\|\PT\cA_p^*\cA_p\mathcal{S}_p^\ddagger\PT-\PT\| \leq \notag\\ &C\max\left\{\sqrt{\beta\left(\mu_0^2 \frac{K}{Q} + \mu_{\max}^2\rho_0^2\frac{S}{QN}\right)\log (LN)},  \beta\left(\mu_0^2 \frac{K}{Q} + \mu_{\max}^2\rho_0^2\frac{S}{QN}\right)\log^2 (LN)\right\}
\end{align} 
with probability at least $1-(LN)^{-\beta}$. The result in the statement of the lemma follows by choosing $Q$, and $N$ as in the statement of the lemma for a sufficiently large constant $C$. 
\end{proof}
\subsection{Proof of Lemma \ref{lem:injectivitype1part1}}
\begin{proof}
Lemma \ref{lem:injectivity} considers bounding the quantity $\| \PT\cA_p^*\cA_p\mathcal{S}_p^\ddagger\PT-\PT\|$ for $p\geq 2$. The proof of this corollary is very similar to the proof of Lemma \ref{lem:injectivity}, and basically follows by replacing $\cA_p$ with $\cA_1$, and $\mathcal{S}_p^\ddagger$ with $(L/Q)\PO$. Therefore, we lay out the steps very briefly. Start by expressing the quantity of interest as a sum of independent random linear maps 
\[
\frac{L}{Q}\PT(\cA_1^*\cA_1-\E\cA_1^*\cA_1)\PT = \frac{L}{Q}\sum_{(\o,n) \in \Gamma_1}\PT \left(\bh_\ell\bh_\ell^*\otimes\left(\vm{\phi}_{\o,n}\vm{\phi}_{\o,n}^*-\E\vm{\phi}_{\o,n}\vm{\phi}_{\o,n}^*\right)\right)\PT,
\]
Define $\mathcal{Z}_{\o,n} : = (L/Q)\PT \left(\bh_\ell\bh_\ell^*\otimes\vm{\phi}_{\o,n}\vm{\phi}_{\o,n}^*\right)\PT$. Since the linear maps $\mathcal{Z}_{\o,n}$ can be thought of as symmetric matrices 
\[
\mathcal{Z}_{\o,n} = \PT(\bh_\o\vm{\phi}_{\o,n}^*) \boxtimes \PT(\bh_\o\vm{\phi}_{\o,n}^*),
\]
this means $\mathcal{Z}_{\o,n}^*\mathcal{Z}_{\o,n} = \mathcal{Z}_{\o,n}\mathcal{Z}_{\o,n}^*$. Therefore, for the variance it suffices to compute 
\[
\left\|\sum_{(\o.n) \in \Gamma_1}\mathcal{Z}_{\o,n}^*\mathcal{Z}_{\o,n}\right\| = \frac{L^2}{Q^2} \left\|\sum_{(\o.n) \in \Gamma_1} \|\PT(\bh_\o\vm{\phi}_{\o,n}^*)\|_\F^2\PT \left[\bh_\ell\bh_\ell^*\otimes\vm{\phi}_{\o,n}\vm{\phi}_{\o,n}^*\right]\PT\right\|.
\]
Using \eqref{eq:PTA-fro}, and a calculation that goes along the same lines as before, and employing \eqref{eq:variance-lemma1}, we can compute the upper bound on the variance
\begin{align*}
\sigma_{\mathcal{Z}}^2 &\leq \frac{L^2}{Q^2}\left[3K \|\hat{\vm{B}}\hh\|_\infty^2+3C\mu_{\max}^2 \frac{S}{L}\left(\max_n \|\mm_n\|_2^2\right)\right]\cdot\max_n\left\|\sum_{\o\in \Delta_{n,1}} \PO\bh_\o\bh_\o^* \PO\right\|\\
& \leq 3\left[K\frac{\mu_0^2}{Q} + \mu_{\max}^2 \frac{S}{Q}\rho_0^2 \frac{1}{N}\right].
\end{align*}
The Orlicz norm turns out to be $\|\mathcal{Z}_{\o,n} - \E\mathcal{Z}_{\o,n} \|_{\psi_1} \leq 2 \|\mathcal{Z}_{\o,n}\|_{\psi_1} \leq C (L/Q)\| \|\PT(\bh_{\o}\vm{\phi}_{\o,n}^*)\|_\F^2 \|_{\psi_1}$, which from \eqref{eq:PTAfro-Orlicz} is 
\[
\|\mathcal{Z}_{\o,n} - \E\mathcal{Z}_{\o,n} \|_{\psi_1} \leq  C \left(\mu_0^2 \frac{K}{Q} + \mu_{\max}^2\rho_0^2\frac{S}{QN}\right).
\]
This shows that $\mathcal{Z}_{\o,n}$ are sub-exponential, and thus using $\alpha = 1$ in the definition of $U_1$ in Proposition \ref{prop:matbernpsi}, one obtains 
\[
\log\left(\frac{QNU_1^2}{\sigma_\mathcal{Z}^2}\right) \leq C \log\left(\mu_0^2KN + \mu_{\max}^2 \rho_0^2 S\right) \leq C\log(LN).
\]
With this, we have all the ingredients to compute the deviation bound. Apply Bernstein inequality in Proposition \ref{prop:matbernpsi}, and choose $t = (\beta-1)\log(LN)$ to obtain 
\begin{align}\label{eq:injectivity-rate3}
&\frac{L}{Q}\|\PT\cA_1^*\cA_1\PT-\E\PT\cA_1^*\cA_1\PT\|\notag\\ &\leq C\max\left\{\sqrt{\beta\left(\mu_0^2 \frac{K}{Q} + \mu_{\max}^2\rho_0^2\frac{S}{QN}\right)\log (LN)},  \beta\left(\mu_0^2 \frac{K}{Q} + \mu_{\max}^2\rho_0^2\frac{S}{QN}\right)\log^2(LN)\right\}
\end{align} 
with probability at least $1-(LN)^{-\beta}$. The choice of $Q$, and $N$ in \eqref{eq:Q-bound-injectivity3} for a sufficiently large constant $C$ in there guarantees that the right hand side is smaller than $\frac{1}{8}\sqrt{\frac{Q}{L}}$.
\end{proof}
\subsection{Proof of Corollary \ref{cor:injectivitype1part1}}
\begin{proof}
	The proof is exactly same as in Lemma \ref{lem:injectivitype1part1} except now instead of using $\cA_1$ that is defined over a the subset $\Gamma_1 \subset [L]\times [N]$, we instead use a linear operator $\cA$ defined over the entire set $\Gamma := \{(\o,n) \in [L]\times [N]\}$ of measurements. This means, we only need to replace $\cA_1$ with $\cA$, defined in \eqref{eq:cA}, and $Q = |\Delta_{n,1}|$ with $L=|\Gamma|$. In addition, note that $\E \cA^*\cA = \mathcal{I},$ where $\mathcal{I}$ is an identity operator, i.e., $\mathcal{I}(\X) = \X$ for a matrix $\X$.  Making these changes in the proof of Lemma \ref{lem:injectivitype1part1}, one obtains the result claimed in the statement of the corollary.
\end{proof}
\subsection{Proof of Lemma \ref{lem:injectivity1part2}}
\begin{proof}
 Note that by using the definition of $\cA_1$ in \eqref{eq:cAp}, we have 
\begin{align*}
\PT\E\cA_1^*\cA_1\PT &=  \sum_{(\o,n)\in \Gamma_1}\PT\left(\bh_\o\bh_\o^* \otimes\E \vm{\phi}_{\o,n}\vm{\phi}_{\o,n}^* \right)\PT \\
&=\sum_n \PT \left(\sum_{\o \in \Delta_{n,1}} \bh_\o\bh_\o^* \otimes \vm{D}_n\right)\PT \neq \PT. 
\end{align*}
The operator norm of the quantity of interest can then be simplified using the definition of $\vm{S}_{n,1}$ in Section \ref{sec:Golfing}, and the facts that $\PT = \PO\PT = \PT\PO$, and $\PT\PO\PT = \PT$ as follows
\begin{align}\label{eq:eq1}
&\left\|\frac{L}{Q}\PT\E\cA_1^*\cA_1\PT -\PT\right\| = \left\|\sum_n \frac{L}{Q}\PT \left(\sum_{\o \in \Delta_{n,1}} \bh_\o\bh_\o^* \otimes \vm{D}_n\right)\PT-\PT\right\|\notag\\
& ~~~~~~~~~~= \left\|\sum_n \frac{L}{Q}\PT \left(\sum_{\o \in \Delta_{n,1} }\PO \bh_\o \bh_\o^* \PO\otimes \vm{D}_n\right)\PT-\PT\right\|\notag\\
& ~~~~~~~~~~\leq \frac{L}{Q}\|\PT\| \left\| \sum_n \sum_{\o \in \Delta_{n,1}}\PO \bh_\o \bh_\o^* \PO   \otimes \vm{D}_n -\frac{Q}{L} \PO\right\|\|\PT\|\leq \frac{L}{Q} \cdot \max_n \left\|\sum_{\o \in \Delta_{n,1}}\PO \bh_\o \bh_\o^* \PO- \frac{Q}{L}\PO\right\|. 
\end{align}
We can apply the uniform result in \eqref{eq:RIPresults} to bound the last quantity above, but really a weaker nonuniform result from the compressive sensing literature \cite{candes2007sparsity} suffices, and also results in an overall tighter bound on $Q$ than a simple application of the uniform result in \eqref{eq:QRIPbound} would give. The non-uniform result says that there exists a constant $C$, such that for any $ 0 < \epsilon < 1$, and $ 0 < \delta < 1$, whenever 
\begin{equation*}
 Q \geq C\frac{\mu_{\max}^2 S}{\epsilon^2}\max\{\log L, \log(1/\delta)\}, 
 \end{equation*}
then 
\begin{equation*}
\left\|\sum_{\o \in \Delta_{n,1} } \PO \bh_{\o}\bh_{\o}^* \PO - \frac{Q}{L}\PO\right\| \leq \frac{\epsilon Q}{L}
\end{equation*}
with probability at least $1-\delta$. Specifically, taking $ \epsilon = 1/8\sqrt{Q/L}$, $\delta = (LN)^{-\beta}$ says that taking 
\begin{equation*}
Q \geq C\beta\mu_{\max}^2S (L/Q)\log (LN),
\end{equation*}
is enough to guarantee that   
\begin{equation*}
\left\|\sum_{\o \in \Delta_{n,1} } \PO \bh_{\o}\bh_{\o}^* \PO - \frac{Q}{L}\PO\right\| \leq \frac{1}{8}\left(\frac{Q}{L}\right)^{3/2}
\end{equation*}
with probability at least $1-(LN)^{-\beta}$. Now a union bound over $N$ sets in the partition $\{\Delta_{n,1}\}_n$ means that 
\begin{equation*}
\max_n\left\|\sum_{\o \in \Delta_{n,1} } \PO \bh_{\o}\bh_{\o}^* \PO - \frac{Q}{L}\PO\right\| \leq \frac{1}{8}\left(\frac{Q}{L}\right)^{3/2}
\end{equation*}
with probability at least $1-N(LN)^{-\beta} \geq 1 - (LN)^{1-\beta}$. 

Plugging this result in \eqref{eq:eq1} proves the lemma. 
\end{proof}

\subsection{Proof of Lemma \ref{lem:rho-p-bound}}
\begin{proof}
	 As a direct implication of the restricted isometry property in \eqref{eq:RIPresults}, and later using \eqref{eq:S-eigen-bounds}, one obtains
	\begin{align*}
	\sum_{\ot \in \Delta_{\nt,p}} \|\bh_{\ot}^* \vm{S}_{\nt,p}^\ddagger \w_p \vm{D}_{\nt}\|_2^2 \leq \frac{5Q}{4L} \|\vm{S}_{\nt,p}^\ddagger \w_p \vm{D}_{\nt}\|_2^2 \leq \frac{20}{9}\frac{L}{Q}\|\w_p\vm{D}_{\nt}\|_\F^2.
	\end{align*}
	From the definition of $\w_p$ in \eqref{eq:Wp-def2}, and Cauchy Schwartz inequality, we have 
	\begin{align*}
	\| \w_p \vm{D}_{\nt}\|_\F^2 \leq\left[\prod_{k=2}^p\|\PT\cA_p^*\cA_p \mathcal{S}_p^\ddagger \PT-\PT\|^2\right]\left[ \left\|\frac{L}{Q}\PT\cA_1^*\cA_1\PT-\PT\right\|^2\right]\max_{\nt}\|\w_0\vm{D}_{\nt}\|_\F^2.
	\end{align*}
	Using \eqref{eq:injectivitypart1-complete}, and \eqref{eq:injectivity-result1}, we can verify that 
	\[
	\rho_p^2 \leq  4^{-p}  \frac{Q}{L}N\max_{\nt} \|\w_0\vm{D}_{\nt}\|_\F^2, ~\mbox{for each}~ p \in \{ 1, 2,3,\ldots, P\}. 
	\]
	Finally, from the definition of $\rho_0^2$, and that $\w_0 = -\hh\mm^*$, we can conclude that
	\[
	\|\w_0\vm{D}_{\nt}\|_\F^2 = \|\hh\|_2^2 \|\mm^*\vm{D}_{\nt}\|_2^2 = \|\mm_{\nt}\|_2^2 \leq \rho_0^2/N,
	\]
	which, if plugged back in the above bound on $\rho_p^2$, completes the proof of the lemma. 
\end{proof}
\subsection{Proof of Lemma \ref{lem:nup-bound}}
\begin{proof}
	Our exposition is different for $p \in \{2, 3,\ldots, P\},$ and $p =1$ owing to the difference in the iterative construction of the dual certificate for these choice of $p$. We start by considering $\nu_p^2$ for $p \in \{2,3,\ldots, P\}$.
	The following lemma provides an upper bound on $\|\bh_{\ot}^*\vm{S}_{\nt,p+1}^\ddagger\w_p\vm{D}_{\nt}\|_2^2$ that is in turn used to bound $\nu_p^2$.
	\begin{lem}\label{lem:qln-bound}
		Let 
		\begin{align}\label{eq:tVZ}
		\Pi^2_p :=  C\frac{Q}{L} K |\bh_{\ot}^*\vm{S}^\ddagger_{\nt,p+1}\hh|^2&\left[\max_{\o \in \Delta_{\nt,p}}\|\bh_\o^*\vm{S}_{\nt,p}^\ddagger \w_{p-1}\vm{D}_{\nt}\|_2^2\right]\notag\\
		&\qquad\qquad+ C \mu_{\max}^2 \frac{S}{Q}\|\mm_{\nt}\|_2^2 \left[\max_n \max_{\o \in \Delta_{n,p}} \|\bh_{\o}^*\vm{S}_{n,p}^\ddagger\w_{p-1}\vm{D}_n\|_2^2\right], 
		\end{align}
		and
		\begin{align}\label{eq:tUZ}
		\Lambda^2_p &:= C \left(\frac{\mu^2_0K}{L}|\bh_{\ot}^*\vm{S}_{\nt,p+1}^\ddagger\hh|^2+\mu^4_{\max}\rho_0^2\frac{S^2}{Q^2N}\|\mm_{\nt}\|_2^2\right)\left[\max_{\o \in \Delta_{\nt,p}}\left\|\bh_\o^* \vm{S}_{\nt,p}^\ddagger\w_{p-1}\vm{D}_{\nt}\right\|_2^2\right].
		\end{align}
		Let $\w_p$, $\vm{S}_{n,p}^\ddagger$ be as defined in \eqref{eq:Wp-def}, and \eqref{eq:Snp}; and assume that the restricted isometry property in \eqref{eq:RIPresults} holds. Fix $\beta \geq 1$. Then there exists a constant $C$ such that
		\begin{equation}\label{eq:qln-l2-norm-bound}
		\|\bh_{\ot}^*\vm{S}_{\nt,p+1}^\ddagger \w_p\vm{D}_{\nt}\|_2^2  \leq C\max\left\{\beta \Pi^2_p\log (LN), \beta^2 \Lambda^2_p\log^4 (LN)\right\} , ~\mbox{for every}~ p \in \{2,3, \ldots, P\}.
		\end{equation}
		with probability at least $1- (LN)^{-\beta}$. 
	\end{lem}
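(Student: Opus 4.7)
The plan is to express the target $\bh_{\ot}^*\vm{S}_{\nt,p+1}^\ddagger\w_p\vm{D}_{\nt}$ as a mean-zero sum of independent random row vectors indexed by $(\o,n)\in\Gamma_p$, and then invoke the Orlicz matrix Bernstein inequality (Proposition \ref{prop:matbernpsi}). Squaring the resulting deviation bound will identify the variance $\sigma^2_{\vm Z}$ with a multiple of $\Pi_p^2$ and the Orlicz constant $U_1^2$ with a multiple of $\Lambda_p^2$, so the bulk of the work lies in Gaussian moment computations for these two quantities.

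Setting $\vm{u} := \vm{S}_{\nt,p+1}^\ddagger\bh_{\ot}$ and observing that $\vm{D}_{\nt}$ leaves only the $\nt$-th $K$-block, I would expand $\w_p = \PT\cA_p^*\cA_p\mathcal{S}_p^\ddagger\w_{p-1}-\w_{p-1}$ to write
\begin{align*}
\bh_{\ot}^*\vm{S}_{\nt,p+1}^\ddagger\w_p\vm{D}_{\nt} \;=\; \sum_{(\o,n)\in\Gamma_p} X_{\o,n}\vm{Y}_{\o,n} \;-\; \vm{u}^*\w_{p-1}\vm{D}_{\nt},
\end{align*}
where $X_{\o,n}:=\bh_\o^*\vm{S}_{n,p}^\ddagger\w_{p-1}\vm{\phi}_{\o,n}$ is a scalar and $\vm{Y}_{\o,n}:=\vm{u}^*[\PT(\bh_\o\vm{\phi}_{\o,n}^*)]\vm{D}_{\nt}$ is a $1\times K$ row vector (supported in the $\nt$-block). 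The disjointness of the golfing partition of Section \ref{sec:Golfing} ensures that $\{\vm{c}_{\o,n}\}_{(\o,n)\in\Gamma_p}$ is independent of both $\w_{p-1}$ (built from earlier $\Delta_{n,k}$) and $\vm{u}$ (built from $\Delta_{\nt,p+1}$), so I may condition on everything outside $\Gamma_p$ and treat $\vm{u},\w_{p-1},\vm{S}_{n,p}^\ddagger$ as deterministic. The identity $\E_{\cA_p}\PT\cA_p^*\cA_p\mathcal{S}_p^\ddagger\w_{p-1}=\w_{p-1}$ derived inside the proof of Lemma \ref{lem:injectivity} then shows that $\E\sum X_{\o,n}\vm{Y}_{\o,n}=\vm{u}^*\w_{p-1}\vm{D}_{\nt}$, exhibiting the right-hand side above as a sum of independent mean-zero summands $\vm{Z}_{\o,n}:=X_{\o,n}\vm{Y}_{\o,n}-\E X_{\o,n}\vm{Y}_{\o,n}$.

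Applying Proposition \ref{prop:matbernpsi} to the $1\times K$ matrix-valued sum with $\alpha=1$ and $t=\beta\log(LN)$ yields the claim once $\sigma^2_{\vm Z}\lesssim\Pi_p^2$ and $U_1^2\lesssim\Lambda_p^2$ are verified. For the Orlicz side, I would factor $\|\vm{Z}_{\o,n}\|_{\psi_1}\lesssim\|X_{\o,n}\|_{\psi_2}\|\vm{Y}_{\o,n}\|_{\psi_2}$ via \eqref{eq:Product-Subgaussians}; since $X_{\o,n}$ is a Gaussian linear form in $\vm{c}_{\o,n}$, $\|X_{\o,n}\|_{\psi_2}^2\lesssim\|\bh_\o^*\vm{S}_{n,p}^\ddagger\w_{p-1}\vm{D}_n\|_2^2$, whose maximum over $(\o,n)$ supplies the common outer factor of $\Lambda_p^2$ and is dominated by the $n=\nt$ contribution (the $n\neq\nt$ summands carry an extra subgaussian factor $\vm{c}_{\o,n}^*\mm_n$ that only shrinks the bound via $\|\mm_n\|_2\lesssim\rho_0/\sqrt{N}\,\|\mm\|_2$). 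Expanding $\vm{Y}_{\o,\nt}$ into its three components from the three pieces of $\PT$, the first component supplies $(\mu_0^2 K/L)|\bh_{\ot}^*\vm{S}_{\nt,p+1}^\ddagger\hh|^2$ via $|\vm{u}^*\hh|^2=|\bh_{\ot}^*\vm{S}_{\nt,p+1}^\ddagger\hh|^2$ and $\E\|\vm{c}_{\o,\nt}\|_2^2\lesssim K$, while the remaining components supply $\mu_{\max}^4\rho_0^2 S^2/(Q^2N)\|\mm_{\nt}\|_2^2$ via $\|\vm{u}^*\PO\bh_\o\|^2\leq\mu_{\max}^2 S\|\vm{u}\|^2/L\lesssim\mu_{\max}^2 SL/Q^2$ (from \eqref{eq:coherence_B}, \eqref{eq:S-eigen-bounds}) and the $\rho_0^2/N$ bound on $\|\mm_n\|_2^2$. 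For the variance I would bound
\begin{align*}
\sigma_{\vm Z}^2 \;\leq\; \max\!\Bigl\{\sum_{(\o,n)\in\Gamma_p}\E|X_{\o,n}|^2\|\vm{Y}_{\o,n}\|_2^2,\;\Bigl\|\sum_{(\o,n)\in\Gamma_p}\E|X_{\o,n}|^2\vm{Y}_{\o,n}^*\vm{Y}_{\o,n}\Bigr\|\Bigr\};
\end{align*}
each is a Gaussian fourth moment in $\vm{c}_{\o,n}$ that collapses via the restricted-isometry estimate \eqref{eq:RIPresults} (replacing $\sum_{\o\in\Delta_{n,p}}\PO\bh_\o\bh_\o^*\PO$ by $(Q/L)\PO$). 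Splitting the outer sum into $n=\nt$ (producing a factor $K$ from $\E\|\vm{c}_{\o,\nt}\|_2^2$) and $n\neq\nt$ (summed via $\sum_n\|\mm_n\|_2^2=\|\mm\|_2^2$ and \eqref{eq:coherence_m}) produces, respectively, the $(Q/L)K|\bh_{\ot}^*\vm{S}_{\nt,p+1}^\ddagger\hh|^2$ and $\mu_{\max}^2(S/Q)\|\mm_{\nt}\|_2^2$ terms of $\Pi_p^2$.

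The principal technical obstacle will be the fourth-moment bookkeeping inside the variance estimate: the three components of $\vm{Y}_{\o,\nt}$ produce several Wick-type cross-term pairs, each of which must be reduced, via the coherences $\mu_0^2,\mu_{\max}^2,\rho_0^2$ and the RIP estimate \eqref{eq:RIPresults}, to precisely one of the two stated terms in $\Pi_p^2$ without accruing spurious factors of $N$, $L/Q$, or $S$. Parallel care is needed for the $n\neq\nt$ portion, where the averaging via $\rho_0^2$ must yield exactly $\|\mm_{\nt}\|_2^2$ rather than a max over $n$. A secondary subtlety, easy to overlook, is verifying that $\vm{u}$ and $\w_{p-1}$ may genuinely be treated as deterministic when integrating over $\{\vm{c}_{\o,n}\}_{(\o,n)\in\Gamma_p}$; this is exactly what the disjointness of the golfing partition was designed to guarantee and must be invoked explicitly to close the independence loop.
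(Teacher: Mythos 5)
Your proposal follows essentially the same route as the paper's own proof: you recentre $\bh_{\ot}^*\vm{S}_{\nt,p+1}^\ddagger\w_p\vm{D}_{\nt}$ as a mean-zero sum of independent summands over $\Gamma_p$ using $\E\,\PT\cA_p^*\cA_p\mathcal{S}_p^\ddagger(\w_{p-1})=\w_{p-1}$, factor each summand into the scalar $\bh_\o^*\vm{S}_{n,p}^\ddagger\w_{p-1}\vm{\phi}_{\o,n}$ times the row vector $\bh_{\ot}^*\vm{S}_{\nt,p+1}^\ddagger\PT(\bh_\o\vm{\phi}_{\o,n}^*)\vm{D}_{\nt}$, and apply the Orlicz-norm Bernstein inequality with the variance identified with $\Pi_p^2$ (via the three-term expansion of $\PT$, the Gaussian fourth-moment bound, and the RIP estimate \eqref{eq:RIPresults}) and the $\psi_1$ constant with $\Lambda_p^2$ (via the product-of-subgaussians rule), exactly as the paper does. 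The decomposition, the key independence observation from the disjointness of the golfing partition, and the bookkeeping through $\mu_0^2$, $\mu_{\max}^2$, $\rho_0^2$ all match the paper's argument.
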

	Lemma \ref{lem:nup-bound} is established in Section \ref{sec:Supporting-Lemmas}. Using \eqref{eq:qln-l2-norm-bound}, it is clear that the upper bound on $\nu_p^2$ in \eqref{eq:nu_Wp} can be obtained by evaluating the maximum of the quantities $\Pi_p$, and $\Lambda_p$  over $\ot \in \Delta_{\nt,p+1}$, and $\nt \in [N]$. 

	Putting this together with \eqref{eq:coherence_h}, \eqref{eq:coherence_m}, and \eqref{eq:nu_Wp} directly implies that 
	\[
	\max_{\nt}\left[\max_{\ot \in \Delta_{\nt,p+1}}\Pi^2_p \right]\leq C\left(\mu_0^2 \frac{K}{Q}+\mu_{\max}^2\rho_0^2 \frac{S}{QN} \right)\nu_{p-1}^2 \frac{L}{Q^2N}, 
	\]
	and
	\[
	\max_{\nt}\left[\max_{\ot \in \Delta_{\nt,p+1}}\Lambda_p^2\right] \leq C\left(\mu_0^4 \frac{K}{Q^2} + \rho_0^4 \mu_{\max}^4 \frac{S^2}{Q^2N^2}\right)\nu_{p-1}^2 \frac{L}{Q^2N}.
	\]
	Choosing $Q$, and $N$ as in Lemma \ref{lem:nup-bound} results in 
	\begin{align}\label{eq:nup2g}
	\nu_p^2\leq \frac{1}{4}\nu_{p-1}^2 \mbox{~for each~} p \in \{2, 3,\ldots, P\}
	\end{align}
	with probability at least $1-(LN)^{-\beta}$. 
	
	For the remaining case of $p=1$, 
	\begin{align*}
	\nu_1^2 &= \frac{Q^2}{L}N \max_{\nt} \left[\max_{\ot \in \Delta_{\nt,2}}\left\|\bh_{\ot}^* \vm{S}_{\nt,2}^\ddagger \w_1 \vm{D}_{\nt}\right\|_2^2\right], 
	\end{align*}
	and using the definition of $\w_1$ in \eqref{eq:Wp-def2}
	\begin{align}\label{eq:nu1-two-parts}
	\nu_1^2 &=\frac{Q^2}{L}N \max_{\nt} \left[\max_{\ot \in \Delta_{\nt,2}}\left\|\bh_{\ot}^* \vm{S}_{\nt,2}^\ddagger\left[ \frac{L}{Q}\PT \cA_1^*\cA_1 (\hh\mm^*)-\hh\mm^*\right] \vm{D}_{\nt}\right\|_2^2\right]\notag\\
	&\leq 2\frac{Q^2}{L}N \max_{\nt}\left[\max_{\ot \in \Delta_{\nt,2}}\left\|\bh_{\ot}^* \vm{S}_{\nt,2}^\ddagger\left[ \frac{L}{Q}\PT \cA_1^*\cA_1 (\hh\mm^*)-\frac{L}{Q}\E\PT \cA_1^*\cA_1 (\hh\mm^*)\right] \vm{D}_{\nt}\right\|_2^2\right] + \notag\\
	&~~~~~~~~~ ~~~~~~~~2\frac{Q^2}{L}N \max_{\nt}\left[\max_{\ot \in \Delta_{\nt,2}}\left\|\bh_{\ot}^* \vm{S}_{\nt,2}^\ddagger\left[\frac{L}{Q}\E\PT \cA_1^*\cA_1 (\hh\mm^*)-\hh\mm^*\right] \vm{D}_{\nt}\right\|_2^2\right].
	\end{align}
	The following corollary provides a bound on the first term in the sum above. 
	
   \begin{cor}[Corollary of Lemma \ref{lem:qln-bound}]\label{cor:qln-bound}
   	Let 
		\begin{align}\label{eq:G1}
		\Pi^2_1 :=  C\frac{Q}{L} K |\bh_{\ot}^*\vm{S}^\ddagger_{\nt,2}\hh|^2&\left[ \max_n \max_{\o \in \Delta_{n,1}}\|\bh_\o^*\frac{L}{Q}\mathcal{P}\w_0\vm{D}_{\nt}\|_2^2\right]\notag\\
		&\qquad\qquad+ C \mu_{\max}^2 \frac{S}{Q}\|\mm_{\nt}\|_2^2 \left[\max_n \max_{\o \in \Delta_{n,1}} \|\bh_\o^*\frac{L}{Q}\mathcal{P}\w_0\vm{D}_n\|_2^2\right], 
		\end{align}
		and
		\begin{align}\label{eq:H1}
		\Lambda^2_1 &:= C \left(\frac{\mu^2_0K}{L}|\bh_{\ot}^*\vm{S}_{\nt,2}^\ddagger\hh|^2+\mu^4_{\max}\rho_0^2\frac{S^2}{Q^2N}\|\mm_{\nt}\|_2^2\right)\left[\max_n\max_{\o \in \Delta_{n,1}}\left\|\bh_\o^* \frac{L}{Q}\mathcal{P}\w_{0}\vm{D}_{\nt}\right\|_2^2\right].
		\end{align}
		Let $Q$ be as in Lemma \ref{lem:qln-bound}. Let 
		\begin{align}\label{eq:sln}
		\vm{s}_{\ot,\nt}^* &:= \bh_{\ot}^*\vm{S}^\ddagger_{\nt,2}\left[\frac{L}{Q}\PT\cA_1^*\cA_1(\hh\mm^*)- \E\frac{L}{Q}\PT\cA_1^*\cA_1(\hh\mm^*)\right]\vm{D}_{\nt},
		\end{align}
		Assume that the restricted isometry property in \eqref{eq:RIPresults} holds. Fix $\beta \geq 1$. Then there exists a constant $C$ such that 
		\[
		\|\vm{s}_{\ot,\nt}\|_2^2 \leq C\max\left\{\beta \Pi_1^2\log (LN) , \beta^2\Lambda_1^2\log^4 (LN)\right\}
		\]
		with probability at least $1- (LN)^{-\beta}$. 
	\end{cor}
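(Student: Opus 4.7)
The plan is to reduce Corollary \ref{cor:qln-bound} directly to the matrix--Bernstein argument that establishes Lemma \ref{lem:qln-bound}, performing the formal substitutions $p = 1$, $\w_{p-1} \leftarrow \w_0 = -\hh\mm^*$, and $\mathcal{S}_p^\ddagger \leftarrow (L/Q)\mathcal{P}$. These substitutions are dictated by the different recursion used to initialize the golfing scheme in \eqref{eq:Wp-def2} (the branch for $p=1$ versus the branch for $p\ge 2$), and they are precisely the substitutions that transform $\Pi_p^2$ and $\Lambda_p^2$ in \eqref{eq:tVZ}--\eqref{eq:tUZ} into $\Pi_1^2$ and $\Lambda_1^2$ in \eqref{eq:G1}--\eqref{eq:H1}.

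The first step is to write $\vm{s}_{\ot,\nt}$ as a sum of independent, centred random vectors indexed by $(\o,n)\in \Gamma_1$:
\begin{equation*}
\vm{s}_{\ot,\nt}^{*} = \sum_{(\o,n)\in \Gamma_1} \bigl(\vm{r}_{\o,n}^{*} - \E\,\vm{r}_{\o,n}^{*}\bigr), \qquad \vm{r}_{\o,n}^{*} = \tfrac{L}{Q}\,\bh_{\ot}^{*}\vm{S}_{\nt,2}^\ddagger\,\PT(\bh_\o\vm{\phi}_{\o,n}^{*})\,\bigl\langle \bh_\o\vm{\phi}_{\o,n}^{*},\,\hh\mm^{*}\bigr\rangle\,\vm{D}_{\nt}.
\end{equation*}
The independence of the summands, which for $p\ge 2$ had to be manufactured by inserting $\mathcal{S}_p^\ddagger$ so as to decouple $\w_{p-1}$ from $\cA_p$, is automatic here because $\w_0 = -\hh\mm^{*}$ is deterministic. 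I would then apply the Orlicz-norm matrix Bernstein inequality (Proposition \ref{prop:matbernpsi}) with $t = \beta\log(LN)$. The variance bound $\sigma_{\vm{r}}^{2} \lesssim \Pi_1^{2}$ and the Orlicz-1 bound $\|\vm{r}_{\o,n}\|_{\psi_1} \lesssim \Lambda_1^{2}$ are produced by the same expansion of $\|\PT(\bh_\o\vm{\phi}_{\o,n}^{*})\|_{\F}^{2}$ used in \eqref{eq:PTA-fro}, combined with the estimate $\|\vm{S}_{\nt,2}^\ddagger\bh_{\ot}\|_{2}^{2} \le (16/9)\mu_{\max}^{2}SL/Q^{2}$ obtained from \eqref{eq:S-eigen-bounds} and the definition of $\mu_{\max}^{2}$, and the product-of-subgaussians inequality \eqref{eq:Product-Subgaussians}. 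The extra $\log^{2}(LN)$ factor in the Orlicz branch arises from the $\log\bigl(|\Gamma_1|U_{1}^{2}/\sigma_{\vm{r}}^{2}\bigr) \lesssim \log(LN)$ factor in \eqref{eq:matbernpsi}, which enters linearly (since $\alpha = 1$) and is then multiplied by the deviation parameter $t$.

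The main accounting obstacle is matching the precise form of the inner maxima that appear in $\Pi_1^{2}$ and $\Lambda_1^{2}$: the summation runs over all of $\Gamma_1$ while $\nt$ and $\ot$ are the fixed outer indices, so the variance calculation naturally produces a joint maximum over $n$ and $\o\in\Delta_{n,1}$ in the mixed term and a $\nt$-only maximum in the pure-$\hh$ term. A simplifying feature, compared with the $p\ge 2$ case, is that inner products of the form $\bigl\langle \bh_\o\vm{\phi}_{\o,n}^{*},\,\hh\mm^{*}\bigr\rangle = (\bh_\o^{*}\hh)\,\langle\vm{\phi}_{\o,n},\mm\rangle$ split explicitly into a deterministic scalar times a single Gaussian projection, so the subexponential tail estimates flow directly from \eqref{eq:Square-Subgaussian}--\eqref{eq:Product-Subgaussians} without any auxiliary conditioning on the earlier golfing iterates.
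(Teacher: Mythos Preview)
Your proposal is correct and follows essentially the same approach as the paper: the paper's proof is precisely the formal reduction to Lemma~\ref{lem:qln-bound} via the substitutions $\vm{S}_{\nt,p+1}^\ddagger \to \vm{S}_{\nt,2}^\ddagger$, $\cA_p \to \cA_1$, $\mathcal{S}_p^\ddagger \to (L/Q)\PO$, and $\w_{p-1} \to \hh\mm^*$, after first noting that $\hh\mm^* = \PO(\hh\mm^*)$ so that the $(L/Q)\PO$ factor can be inserted inside $\cA_1^*\cA_1$. Your additional remarks about why independence is automatic for $p=1$ and how the inner maxima arise are accurate elaborations of what the paper leaves implicit.
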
	
	 The proof of this corollary is provided in Section \ref{sec:Supporting-Lemmas}.
    
    It is easy to see using the definitions \eqref{eq:coherence_h}, and \eqref{eq:coherence_m} that 
    \[
    \max_{\nt}\max_{\ot \in \Delta_{\nt,2}} \Pi_1^2 \leq C\left(\mu_0^2\frac{K}{Q}+\mu_{\max}^2\rho_0^2 \frac{S}{QN}\right)\mu_0^2\rho_0^2 \frac{L}{Q^2N}, 
    \]
    and 
    \[
     \max_{\nt}\max_{\ot \in \Delta_{\nt,2}} \Lambda_1^2 \leq C\left(\mu_0^4 \frac{K}{Q^2} + \rho_0^4\mu_{\max}^4\frac{S}{QN}\right) \mu_0^2\rho_0^2 \frac{L}{Q^2N}.
    \]
    Using these calculations, the first term in \eqref{eq:nu1-two-parts} can be bounded by applying Corollary \ref{cor:qln-bound}, and choosing $Q$ as in Lemma \ref{eq:nup-bound} for a large enough constant $C$ to achieve
	\begin{equation}\label{eq:nu1-bound-part1}
	\frac{Q^2}{L}N\max_{\nt} \left[\max_{\ot \in \Delta_{\nt,2}}\|\vm{s}_{\ot,\nt}\|_2^2 \right] \leq \frac{1}{4}\mu_0^2\rho_0^2.
	\end{equation}
	
	As for the second term in the sum in \eqref{eq:nu1-two-parts}, the lemma below provides an upper bound. 
	\begin{lem}\label{lem:rln}
		Define 
		\begin{equation}\label{eq:rln}
		\vm{r}^*_{\ot,\nt} :=  \bh_{\ot}^*\vm{S}^\ddagger_{\nt,2}\left[\frac{L}{Q}\E\PT\cA_1^*\cA_1(\hh\mm^*)- \hh\mm^*\right]\vm{D}_{\nt},
		\end{equation}
		and assume that \eqref{eq:S-eigen-bounds} holds. Then 
		\begin{equation}\label{eq:rln-bound}
		\|\vm{r}_{\ot,\nt}\|_2^2 \leq \left(\frac{9}{2} \|\bh_{\ot}^*\vm{S}_{\nt,2}^\ddagger\hh\|_2^2 + 2\frac{L^2}{Q^2}\max_n \|\bh_{\ot}^*\vm{S}_{\nt,2}^\ddagger\vm{S}_{n,1}\hh\|_2^2\right)\|\mm_{\nt}\|_2^2.
		\end{equation}
	\end{lem}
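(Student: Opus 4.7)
Since $\Omega = \operatorname{supp}(\hh)$ is deterministic, $\PT$ commutes with the expectation, so the plan is to first compute $\E\cA_1^*\cA_1(\hh\mm^*)$ in closed form. Using $\E[\vm{\phi}_{\o,n}\vm{\phi}_{\o,n}^*] = \vm{D}_n$ together with the telescoping identity $\hh\mm^* = \sum_n \hh\mm^*\vm{D}_n$ (which follows from $\sum_n \vm{D}_n = \vm{I}_{KN}$), this yields
\[
\vm{E} := \frac{L}{Q}\E\cA_1^*\cA_1(\hh\mm^*) - \hh\mm^* = \sum_n\left(\frac{L}{Q}\sum_{\o \in \Delta_{n,1}}\bh_\o\bh_\o^* - \vm{I}\right)\hh\mm^*\vm{D}_n.
\]

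The crucial simplification then comes from right-multiplying by $\vm{D}_{\nt}$: since $\vm{D}_n\vm{D}_{\nt} = \delta_{n\nt}\vm{D}_{\nt}$, the outer $n$-sum collapses to the single term $n=\nt$, and the row $\mm^*\vm{D}_{\nt}$ has $\ell_2$-norm exactly $\|\mm_{\nt}\|_2$ (it equals $\mm_{\nt}^*$ placed in positions $k\sim_K \nt$), which is the origin of the $\|\mm_{\nt}\|_2^2$ factor in the claim. I would then expand $\PT(\vm{E})\vm{D}_{\nt}$ using the definition \eqref{eq:PT-def}, invoking $\PO\hh = \hh$ and $\PO\bigl(\sum_{\o\in\Delta_{n,1}}\bh_\o\bh_\o^*\bigr)\PO = \vm{S}_{n,1}$ to replace the unprojected sum by $\vm{S}_{n,1}$ wherever it sits between a $\PO$ (or a $\hh^*$) and $\hh$. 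The three contributions $\hh\hh^*\vm{E}\vm{D}_{\nt}$, $\PO\vm{E}\mm\mm^*\vm{D}_{\nt}$, and $-\hh\hh^*\vm{E}\mm\mm^*\vm{D}_{\nt}$ should consolidate, with $\sum_n\|\mm_n\|_2^2 = 1$ absorbing one constant into the $\hh$-coefficient, into the rank-one form
\[
\PT(\vm{E})\vm{D}_{\nt} = \left\{(\alpha_{\nt}-s-1)\hh + \frac{L}{Q}\sum_n\vm{S}_{n,1}\hh\,\|\mm_n\|_2^2\right\}(\mm^*\vm{D}_{\nt}),
\]
where $\alpha_n := \tfrac{L}{Q}\hh^*\vm{S}_{n,1}\hh - 1$ and $s := \sum_n \alpha_n\|\mm_n\|_2^2$.

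Multiplying on the left by $\bh_{\ot}^*\vm{S}_{\nt,2}^\ddagger$ (which, thanks to $\vm{S}_{\nt,2}^\ddagger = \vm{S}_{\nt,2}^\ddagger\PO$, introduces no further projection bookkeeping) produces a scalar coefficient that I split via $|a+b|^2 \leq 2|a|^2+2|b|^2$. The first piece needs a bound on $|\alpha_{\nt}-s-1|$: the RIP consequence $\|\vm{S}_{n,1} - \tfrac{Q}{L}\PO\| \leq \tfrac{Q}{4L}$ from \eqref{eq:RIPresults} gives $|\alpha_n|\leq \tfrac{1}{4}$ for every $n$, hence $|s|\leq\tfrac{1}{4}$ and $|\alpha_{\nt}-s-1|\leq \tfrac{3}{2}$, producing the constant $2\cdot(3/2)^2 = \tfrac{9}{2}$. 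The second piece is handled by the weighted triangle inequality $|\sum_n a_n \|\mm_n\|_2^2| \leq \max_n |a_n|$ (again using $\sum_n \|\mm_n\|_2^2 = 1$) applied with $a_n = \bh_{\ot}^*\vm{S}_{\nt,2}^\ddagger\vm{S}_{n,1}\hh$, yielding the second term with constant $2L^2/Q^2$. The argument is entirely algebraic and deterministic; the only quantitative input is the RIP consequence \eqref{eq:RIPresults}, and the only care needed is in tracking the threefold cancellation producing the clean scalar $\alpha_{\nt}-s-1$ when the $\PT$-expansion is consolidated, which is the one place where the normalization $\|\mm\|_2 = \|\hh\|_2 = 1$ enters materially.
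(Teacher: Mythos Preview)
Your proof is correct and follows essentially the same route as the paper: compute $\E\cA_1^*\cA_1(\hh\mm^*)$ in closed form, expand through the three terms of $\PT$, replace $\PO\sum_{\o\in\Delta_{n,1}}\bh_\o\bh_\o^*\PO$ by $\vm{S}_{n,1}$, invoke the RIP consequence \eqref{eq:RIPresults} to control the scalars $\hh^*\vm{S}_{n,1}\hh$, and finish with $(a+b)^2\le 2a^2+2b^2$. The only cosmetic difference is that the paper first splits off $-\hh\mm^*$ by the triangle inequality and then bounds the three $\PT$-contributions individually, whereas you subtract $\hh\mm^*$ up front and consolidate everything into the single rank-one form $\{(\alpha_{\nt}-s-1)\hh + \tfrac{L}{Q}\sum_n \vm{S}_{n,1}\hh\,\|\mm_n\|_2^2\}\mm^*\vm{D}_{\nt}$ before estimating; your packaging is arguably tidier and makes the provenance of the constants $\tfrac{9}{2}$ and $2$ more transparent, but the argument and the constants are the same.
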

	\begin{proof}
		The proof of this lemma is provided in Section \ref{sec:Supporting-Lemmas}.
	\end{proof}
	
	As for the second term in \eqref{eq:nu1-two-parts}, we appeal to Lemma \ref{lem:rln}, and directly obtain after evaluating the maximum over $\ot \in \Delta_{\nt,2}$, and $\nt \in [N]$ in \eqref{eq:rln-bound}, and using the definition of coherences $\mu_0^2$, and $\rho_0^2$, that  
	\begin{equation}\label{eq:nu1-bound-part2}
	\frac{Q^2}{L}N \max_{\nt} \left[\max_{\ot \in \Delta_{\nt,2}}\|\vm{r}_{\ot,\nt}\|_2^2 \right] \leq 6.5 \mu_0^2\rho_0^2.
	\end{equation}
	Plugging \eqref{eq:nu1-bound-part1}, and \eqref{eq:nu1-bound-part2} in \eqref{eq:nu1-two-parts} shows that $\nu_1 \leq 4 \mu_0\rho_0$. Combining this fact with \eqref{eq:nup2g} completes the proof of the lemma. 
\end{proof}
\subsection{Proof of Lemma \ref{lem:mup-bound}}
\begin{proof}
	We first consider the case for $p \in \{2, 3,\ldots, P\}$.  Using Lemma \ref{lem:qln-bound}, we have a bound on the quantity $\|\bh_{\ot}^* \vm{S}_{n,p}^\ddagger \w_p \vm{D}_{\nt}\|_2^2$ for $p \geq 2$ in terms of $\Lambda^2_p$, and $\Pi^2_p$ in \eqref{eq:tUZ}, and \eqref{eq:tVZ}, respectively. A corresponding bound on $\mu_p^2$ above is then 
	\begin{equation}\label{eq:mup-bound-intermediate}
		\mu_p^2 \leq C\frac{Q^2}{L}\max\left\{\sum_{\nt}\left[\max_{\ot\in \Delta_{\nt,p}}\Pi^2_p\right] \beta\log (LN) , \sum_{\nt}\left[\max_{\ot\in \Delta_{\nt,p}}\Lambda^2_p\right] \beta^2\log^4 (LN)\right\}.
	\end{equation} 
	Using the definitions of the coherences $\nu_p^2$, and $\mu_p^2$ of the iterates $\w_p$ in \eqref{eq:nu_Wp}, and  \eqref{eq:mup-def}; the definition of the coherence $\mu_0^2$ of $\hh$ in \eqref{eq:coherence_h}; and the fact that $\sum_{\nt} \|\mm_{\nt}\|_2^2 = 1$, one can verify from \eqref{eq:tUZ}, and \eqref{eq:tVZ} that 
	\begin{align*}
	\sum_{\nt}\left[\max_{\ot \in \Delta_{\nt,p}}\Pi^2_p\right] \leq C\left(\mu_0^2\mu_{p-1}^2 K\frac{L}{Q^3}+ \frac{16}{9}\mu_{\max}^2\nu_{p-1}^2 \frac{SL}{Q^3N}\right),
	\end{align*}
	and 
	\begin{align*}
	\sum_{\nt}\left[\max_{\ot \in \Delta_{\nt,p}}\Lambda^2_p\right]\leq C\left(\mu_0^4\frac{K}{Q^2} + \mu_{\max}^4\rho_0^4 \frac{S^2}{Q^2N^2} \right)\mu^2_{p-1}\frac{L}{Q^2}.
	\end{align*}
	Plugging back in \eqref{eq:mup-bound-intermediate},  and choosing $Q$ and $N$ as in Lemma \ref{lem:nup-bound} for a large enough $C$ is sufficient to guarantee that 
	\begin{equation}\label{eq:mup-intermediate-result2}
	\mu_p^2 \leq \delta\left(\mu_{p-1}^2 + \frac{1}{N}\nu_{p-1}^2\right), \mbox{for each}~ p \in  \{2,3,\ldots, P\}.
	\end{equation}
	for arbitrarily small number $\delta$  that lies between zero and one. Bounding $\mu_p^2$ iteratively using the above relation gives $\mu_p^2 \leq \delta^{p-1}\mu_{1}^2 + (2\delta/N)\nu_{p-1}^2,  ~ p \in \{2, 3,\ldots, P\},$ and now using \eqref{eq:nup-bound}, and the fact that $\rho_0^2 \leq N$, we have 
	\begin{equation}\label{eq:mup-bound-pg2}
	\mu_p^2 \leq \delta^{p-1}\mu_1^2 + 2\delta 4^{-p+3} \mu_0^2, ~ \mbox{for each}~ p \in \{ 2,3,\ldots, P\}.
	\end{equation}
	 All that remains now is to bound $\mu_1^2$, which from \eqref{eq:mup-def} is
	\begin{align*}
	\mu_1^2 &:= \frac{Q^2}{L} \sum_{\nt} \left[\max_{\ot \in \Delta_{\nt,2}} \left\|\bh_{\ot}^* \vm{S}_{\nt,2}^\ddagger \w_1\vm{D}_{\nt}\right\|_2^2\right].
	\end{align*}
	Using the definition of $\w_1$ in \eqref{eq:Wp-def} followed by an application of a simple identity $(a+b)^2 \leq 2(a^2+b^2)$, we have 
	\begin{align*}
	\left\|\bh_{\ot}^* \vm{S}_{\nt,2}^\ddagger \w_1\vm{D}_{\nt}\right\|_2 \leq 2 \left\|\bh_{\ot}^* \vm{S}_{\nt,2}^\ddagger\left[ \frac{L}{Q}\PT \cA_1^*\cA_1 (\hh\mm^*)-\frac{L}{Q}\E\PT \cA_1^*\cA_1 (\hh\mm^*)\right] \vm{D}_{\nt}\right\|_2^2 +\\
	 2\left\|\bh_{\ot}^* \vm{S}_{\nt,2}^\ddagger\left[\frac{L}{Q}\E\PT \cA_1^*\cA_1 (\hh\mm^*)-\hh\mm^*\right] \vm{D}_{\nt}\right\|_2^2.\notag
	\end{align*}
   An application of Corollary \ref{cor:qln-bound} gives
	\begin{align}\label{eq:mupbound-stage1}
	\frac{Q^2}{L} \sum_{\nt}\max_{\ot\in \Delta_{\nt,2}} &\left\|\bh_{\ot}^* \vm{S}_{\nt,2}^\ddagger\left[ \frac{L}{Q}\PT \cA_1^*\cA_1 (\hh\mm^*)-\frac{L}{Q}\E\PT \cA_1^*\cA_1 (\hh\mm^*)\right] \vm{D}_{\nt}\right\|_2^2\notag\\
	 &\qquad \qquad\leq C\frac{Q^2}{L}\sum_{\nt=1}^N\max\left\{\beta\max_{\ot \in \Delta_{\nt,2}} \Pi_1^2\log (LN), \beta^2 \max_{\ot \in \Delta_{\nt,2}}\Lambda_1^2 \log^4(LN)\right\}.
	\end{align}
	Using the fact that $\w_0 = \hh\mm^*$, it is easy to see that 
	\begin{align*}
	\frac{Q^2}{L}\sum_{\nt=1}^N \max_{\ot \in \Delta_{\nt,2}} \Pi_1^2& \leq C\left(KQ\left[\max_{\nt}\max_{\ot \in \Delta_{\nt,2}} |\bh_{\ot}^*\vm{S}_{\nt,2}^\ddagger \hh|^2\right]+\mu_{\max}^2\frac{SL}{Q}  \max_n \|\mm_n\|_2^2\right)\left[\max_n \max_{\o \in \Delta_{n,1}} |\bh_\o^*\hh|^2\right], 
	\end{align*}
	where we have used the fact that the fact that $\sum_{\nt} \|\mm_{\nt}\|_2^2 = 1$. By definitions \eqref{eq:coherence_h}, and \eqref{eq:coherence_m}, we have the upper bound
	\begin{align*}
	\frac{Q^2}{L}\sum_{\nt=1}^N \max_{\ot \in \Delta_{\nt,2}} \Pi_1^2& \leq C\left(\mu_0^4 \frac{K}{Q} +\mu_{\max}^2 \mu_0^2 \rho_0^2 \frac{S}{QN}\right).
	\end{align*}
	In a similar manner, one can show that 
	\begin{align*}
	\frac{Q^2}{L}\sum_{\nt=1}^N \max_{\ot \in \Delta_{\nt,2}} \Lambda_1^2 \leq C\left(\mu_0^4\frac{K}{Q^2} +\mu_{\max}^4 \frac{S^2}{Q^2} \rho_0^4 \frac{1}{N^2} \mu_0^2\right). 
	\end{align*}
	Now with the choice of $Q$, and $N$ in the statement of lemma for some suitably large constant, one can show that 
    \begin{align*}
    &\frac{Q^2}{L} \sum_{\nt}\max_{\ot\in \Delta_{\nt,2}} \left\|\bh_{\ot}^* \vm{S}_{\nt,2}^\ddagger\left[ \frac{L}{Q}\PT \cA_1^*\cA_1 (\hh\mm^*)-\frac{L}{Q}\E\PT \cA_1^*\cA_1 (\hh\mm^*)\right] \vm{D}_{\nt}\right\|_2^2 \leq \frac{1}{2}\mu_0^2,
    \end{align*}
    where the last inequality follows from $\rho_0^2 \leq N$.  For the remaining term, using Lemma \ref{lem:rln}, and taking summation over $\nt$ followed by maximum over $\ot \in \Delta_{\nt,2}$ on both sides, and using the definition \eqref{eq:coherence_h}, we obtain 
    \[
    \frac{Q^2}{L} \sum_{\nt}\left[\max_{\ot\in \Delta_{\nt,2}} \left\|\bh_{\ot}^* \vm{S}_{\nt,2}^\ddagger\left[\frac{L}{Q}\E\PT \cA_1^*\cA_1 (\hh\mm^*)-\hh\mm^*\right] \vm{D}_{\nt}\right\|_2^2\right] \leq \left(\frac{9}{2} \mu_0^2 + 2 \mu_0^2\right) = 6.5\mu_0^2.
    \]
	Plugging back returns $\mu_1^2 \leq 14\mu_0^2$. Combining this with \eqref{eq:mup-bound-pg2} for a small enough $\delta$ means that we can bound 
	\[
	\mu_p^2 \leq 4^{-p+2}\mu_0^2, ~ \mbox{for each}~ p \in \{1,2,3,\ldots, P\}.
	\]
	This completes the proof. 
	
\end{proof}

\subsection{Proof of Lemma \ref{lem:concentration}}
\begin{proof}
The proof concerns bounding deviation of $\cA_p^*\cA_p\mathcal{S}_p^\ddagger(\w_{p-1})$ from its mean, and we resort to the matrix Bernstein inequality to control it. By definition of $\cA_p$, and $\mathcal{S}_p^\ddagger$ in Section \ref{sec:Main_results}, we have 
\[
\cA_p^*\cA_p\mathcal{S}_{p}^\ddagger (\w_{p-1}) = \sum_{(\o,n) \in \Gamma_p} (\bh_\o\bh_\o^*\vm{S}_{n,p}^\ddagger\otimes \vm{\phi}_{\o,n}\vm{\phi}_{\o,n}^*)(\w_{p-1}) = \sum_{(\o,n) \in \Gamma_p}\bh_\o\bh_\o^*\vm{S}_{n,p}^\ddagger\w_{p-1}  \vm{\phi}_{\o,n}\vm{\phi}_{\o,n}^*,
\]
and its expected value is 
\begin{align}\label{eq:expectation-lemma2}
\E \cA_p^*\cA_p\mathcal{S}_p^\ddagger (\w_{p-1})  &= \sum_{(\o,n) \in \Gamma_p}\bh_\o\bh_\o^*\vm{S}_{n,p}^\ddagger\w_{p-1} \E \vm{\phi}_{\o,n}\vm{\phi}_{\o,n}^* = \sum_{(\o,n) \in \Gamma_p}\bh_\o\bh_\o^*\vm{S}_{n,p}^\ddagger\w_{p-1} \vm{D}_n.
\end{align}
Given this, the quantity of interest can now be expressed as the sum
\[
\cA_p^*\cA_p\mathcal{S}_{p}^\ddagger (\w_{p-1}) -\E \cA_p^*\cA_p\mathcal{S}_{p}^\ddagger (\w_{p-1}) = \sum_{(\o,n) \in \Gamma_p}\left( \bh_\o\bh_\o^*\vm{S}_{n,p}^\ddagger\w_{p-1}  \vm{\phi}_{\o,n}\vm{\phi}_{\o,n}^* - \bh_\o\bh_\o^*\vm{S}_{n,p}^\ddagger\w_{p-1}\vm{D}_n\right)
\]
of mean zero, independent random matrices
\[
\z_{\o,n} := \bh_\o\bh_\o^*\vm{S}_{n,p}^\ddagger\w_{p-1}  \vm{\phi}_{\o,n}\vm{\phi}_{\o,n}^* - \bh_\o\bh_\o^*\vm{S}_{n,p}^\ddagger\w_{p-1}\vm{D}_n.
\]
An application of the matrix Bernstein inequality in  Proposition \ref{prop:matbernpsi} requires us to compute a bound on the variance $\sigma_{\vm{Z}}^2$. To this end
\begin{align*}
\left\|\sum_{(\o,n) \in \Gamma_p}\E \z_{\o,n}^*\z_{\o,n} \right\| &\leq \left\|\sum_{(\o,n) \in \Gamma_p}\E \left(\bh_\o\bh_\o^*\vm{S}_{n,p}^\ddagger\w_{p-1}  \vm{\phi}_{\o,n}\vm{\phi}_{\o,n}^* \right)^*\left(\bh_\o\bh_\o^*\vm{S}_{n,p}^\ddagger\w_{p-1}  \vm{\phi}_{\o,n}\vm{\phi}_{\o,n}^*\right) \right\|\\
& = \left\|\sum_{(\o,n) \in \Gamma_p}\E |\bh_\o^*\vm{S}_{n,p}^\ddagger\w_{p-1}  \vm{\phi}_{\o,n}|^2 \vm{\phi}_{\o,n}\vm{\phi}_{\o,n}^*\right\|,
\end{align*}
where again the first inequality is the result of the fact that $\|\vm{A}-\vm{B}\| \leq \|\vm{A}\|$ when $\vm{A}$, $\vm{B}$, and $\vm{A}-\vm{B}$ are PSD. An application of Lemma \ref{lem:Supporting1} shows that 
\[
\E |\bh_\o^*\vm{S}_{n,p}^\ddagger\w_{p-1}  \vm{\phi}_{\o,n}|^2 \vm{\phi}_{\o,n}\vm{\phi}_{\o,n}^* \preccurlyeq 3 \|\bh_\o^*\vm{S}_{n, p}^\ddagger\w_{p-1}\vm{D}_n\|_2^2\vm{D}_n,
\]
 and taking the summation through returns
\[
\sum_{(\o,n) \in \Gamma_p}\E |\bh_\o^*\vm{S}_{n,p}^\ddagger\w_{p-1}  \vm{\phi}_{\o,n}|^2\vm{\phi}_{\o,n} \vm{\phi}_{\o,n}^* \preccurlyeq3\sum_{(\o,n) \in \Gamma_p}\|\bh_\o^*\vm{S}_{n,p}^\ddagger\w_{p-1}\vm{D}_n\|_2^2
\vm{D}_n,
\]
and thus the operator norm produces the variance
\begin{align}\label{eq:conc-var1}
\left\|\sum_{(\o,n) \in \Gamma_p}\E \z_{\o,n}\z_{\o,n}^*\right\| &\leq 3\left\| \sum_{(\o,n) \in \Gamma_p}\|\bh_\o^*\vm{S}_{n,p}^\ddagger\w_{p-1}\vm{D}_n\|_2^2
\vm{D}_n\right\| \notag\\
&= 3\max_{n}\sum_{\o \in \Delta_{n,p}} \|\bh_\o^*\vm{S}_{n,p}^\ddagger\w_{p-1}\vm{D}_n\|_2^2\leq 3 \rho_{p-1}^2\frac{L}{QN},
\end{align} 
where the last line follows from the definition of $\rho_p^2$ in \eqref{eq:rho_Wp}. 
In a similar manner, we can compute
\begin{align*}
\left\|\sum_{(\o,n) \in \Gamma_p}\E \z_{\o,n}\z_{\o,n}^*\right\| &\leq\left\|\sum_{n}\sum_{\o\in \Delta_{n,p}}\E  \left(\bh_\o\bh_\o^*\vm{S}_{n,p}^\ddagger\w_{p-1}  \vm{\phi}_{\o,n}\vm{\phi}_{\o,n}^* \right)\left(\bh_\o\bh_\o^*\vm{S}_{n, p}^\ddagger\w_{p-1}  \vm{\phi}_{\o,n}\vm{\phi}_{\o,n}^* \right)^*\right\|,
\end{align*}
where the first inequality follows for exact same reasoning as before. 
The summand simplifies to
\[
\bh_\o\bh_\o^*\E |\bh_\o^*\vm{S}_{n, p}^\ddagger\w_{p-1}  \vm{\phi}_{\o,n}|^2\|\vm{\phi}_{\o,n}\|_2^2, 
\]
and once the expectation is moved inside, we obtain 
\[
\E |\bh_\o^*\vm{S}_{n, p}^\ddagger\w_{p-1}  \vm{\phi}_{\o,n}|^2\|\vm{\phi}_{\o,n}\|_2^2
= 3K \|\bh_\o^*\vm{S}_{n,p}\w_{p-1}\vm{D}_n\|_2^2,
\]
and using the orthogonality of $\{\bh_\o\}_\o$, the operator norm simplifies to
\begin{align}\label{eq:conc-var2}
\left\|\sum_{(\o,n) \in \Gamma_p}\E \z_{\o,n}\z_{\o,n}^*\right\|
&\leq 3K\cdot \max_{1 \leq p \leq P}\max_{\o \in \Delta_{n,p}} \left(\sum_{n} \|\bh_\o^*\vm{S}_{n,p}^\ddagger\w_{p-1}\vm{D}_n\|_2^2 \cdot \left\|\sum_{\o\in \Delta_{n,p}}\bh_\o\bh_\o^*\right\|\right)\notag\\
& \leq 3\mu_{p-1}^2\frac{KL}{Q^2},
\end{align}
which follows by the definition of the coherence $\mu_p^2$ in \eqref{eq:mup-def}. As per \eqref{eq:matbernsigma}, the maximum of \eqref{eq:conc-var1}, and \eqref{eq:conc-var2} is the variance 
\begin{equation}\label{eq:variance-conc}
\sigma^2_{\vm{Z}} \leq 3\left(\mu_{p-1}^2\frac{KL}{Q^2}+\rho_{p-1}^2\frac{L}{QN}\right).
\end{equation}
The fact that $\z_{\o,n}$ are sub-exponential can be proven by showing that $\max_{\o,n}\|\z_{\o,n}\|_{\psi_1} < \infty$. First, note that 
\begin{align*}
\|\z_{\o,n}\|_{\psi_1} &= \|\bh_\o\bh_\o^*\vm{S}_p^\ddagger\w_{p-1}  \vm{\phi}_{\o,n}\vm{\phi}_{\o,n}^* - \bh_\o\bh_\o^*\vm{S}_p^\ddagger\w_{p-1}\vm{D}_n\|_{\psi_1} \leq 2\|\bh_\o\bh_\o^*\vm{S}_{n,p}^\ddagger\w_{p-1}  \vm{\phi}_{\o,n}\vm{\phi}_{\o,n}^*\|_{\psi_1}.
\end{align*}
Second, the operator norm of the matrix under consideration is 
\begin{align*}
\|\bh_\o\bh_\o^*\vm{S}_{n,p}^\ddagger\w_{p-1}  \vm{\phi}_{\o,n}\vm{\phi}_{\o,n}^*\| &= \|\bh_\o\|_2 \|\vm{\phi}_{\o,n}\|_2 |\bh_\o^*\vm{S}_{n,p}^\ddagger\w_{p-1}  \vm{\phi}_{\o,n}|.
\end{align*}
It is well-known that $|\bh_\o^*\vm{S}_{n,p}^\ddagger\w_{p-1}  \vm{\phi}_{\o,n}|$ is a subgaussian random variable for an arbitrary matrix $\w_{p-1}$ with $\|\bh_\o^*\vm{S}_{n, p}^\ddagger\w_{p-1}  \vm{\phi}_{\o,n}\|_{\psi_2} \leq C\|\bh_\o^*\vm{S}_{n,p}^\ddagger\w_{p-1}\vm{D}_n\|_{2} \leq C \nu_{p-1}\sqrt{L/Q^2N}$. Also $\|\vm{\phi}_{\o,n}\|_2$ is subgaussian with $\|\|\vm{\phi}_{\o,n}\|_2\|_{\psi_2} \leq \sqrt{K}$. This implies $U_{\alpha}$ in Proposition \ref{prop:matbernpsi} is 
\begin{align}\label{eq:Zw-orlicz-concentration}
U_1 &:= \max_{n} \left[\max_{\o \in \Delta_{n,p}}\|\|\vm{\phi}_{\o,n}\|_2 |\bh_\o^*\vm{S}_{n,p}^\ddagger\w_{p-1}  \vm{\phi}_{\o,n}|\|_{\psi_1}\right]\notag\\
 &\leq \max_{n}\left[\max_{\o \in \Delta_{n,p}}\left(\|\bh_\o^*\vm{S}_{n,p}^\ddagger\w_{p-1}  \vm{\phi}_{\o,n}\|_{\psi_2}\|\|\vm{\phi}_{\o,n}\|\|_{\psi_2} \right)\right]\leq C\nu_{p-1}\sqrt{\frac{KL}{Q^2N}},
\end{align}
which means
\[
\log\left(\frac{(QN) U_1^2}{\sigma_{\vm{Z}}^2} \right)  \leq C\log M,~ \mbox{where}~
M : = \frac{\nu_{p-1}^2 KQN}{\mu_{p-1}^2KN+\rho_{p-1}^2Q}.
\]
Applying the coherence bounds in \eqref{eq:rhop-bound}, \eqref{eq:nup-bound}, and \eqref{eq:mup-bound}, we have 
\[
M \leq C(\rho_0^2Q+\mu_0^2KN) \leq C LN,
\]
where the last inequality follows from the fact that $\rho_0^2 \leq N$, $\mu_0^2 \leq L$, and $L \geq K$. 
Combining all the ingredients gives us the final result and choosing $t = (\beta-1)\log (LN)$ in the Bernstein inequality shows that
\begin{align}\label{eq:Bern-bound-pg2}
&\|\cA_p^*\cA_p\mathcal{S}_p^\ddagger(\w_{p-1})-\E \cA_p^*\cA_p\mathcal{S}_p^\ddagger(\w_{p-1}) \|\leq\notag\\
&~~~C\max\left\{\sqrt{\beta \left(\mu_{p-1}^2\frac{KL}{Q^2}+\rho_{p-1}^2\frac{L}{QN}\right)\log (LN)},\sqrt{\frac{ \beta^2\nu_{p-1}^2 KL\log^4(LN)}{Q^2N}}\right\}
\end{align}
holds with probability at least $1-(LN)^{-\beta}$. Using the union bound, it follows from Lemma \ref{lem:mup-bound}, \ref{lem:rho-p-bound}, and \ref{lem:nup-bound} that at least one of the coherence bounds in \eqref{eq:rhop-bound}, \eqref{eq:nup-bound}, and \eqref{eq:mup-bound} fails with probability at most $(P-1)^3(LN)^{-\beta}$. This means that all of the coherence bounds hold with probability at least $1-(P-1)^3(LN)^{-\beta} \geq 1-(LN)^{-\beta+3}$. Plugging the coherence bounds in \eqref{eq:Bern-bound-pg2}, and choosing $Q$, and $N$ as in Lemma \ref{lem:concentration} for appropriately large constant $C$ ensures that 
\[
\|\cA_p^*\cA_p\mathcal{S}_p^\ddagger(\w_{p-1})-\E \cA_p^*\cA_p\mathcal{S}_p^\ddagger(\w_{p-1}) \|\leq 2^{-p-1}, ~~ \mbox{for each}~ p \in \{2,3, \ldots, P\}
\]
holds with probability at least $1-(LN)^{-\beta+3}$. Using union bound for $P-1$ choices of $p$, we can show that the above conclusion holds for all $p \in \{2,3,\ldots, P\}$ with probability at least $1-(P-1)(LN)^{-\beta+3} \geq 1-(LN)^{-\beta+4}$. 
\end{proof}
\subsection{Proof of Corollary \ref{cor:concentration}}
\begin{proof}
The proof follows essentially from the proof of Lemma \ref{lem:concentration} by taking $p=1$, and afterwards taking $\mathcal{S}_1^\ddagger = (L/Q) \PO$, or equivalently, $\vm{S}_{n,1}^\ddagger = (L/Q)\PO$. The final bound is obtained by making the above changes in \eqref{eq:Bern-bound-pg2}, and is 
\begin{align}
&\|\cA_1^*\cA_1(L/Q)\PO(\hh\mm^*)-\E \cA_1^*\cA_1(L/Q)\PO(\hh\mm^*) \|\leq\notag\\
&~~~C\max\left\{\sqrt{\beta \left(\mu_{0}^2\frac{KL}{Q^2}+\rho_{0}^2\frac{L}{QN}\right)\log (LN)},\sqrt{\frac{ \beta^2\nu_{0}^2 KL\log^4(LN)}{Q^2N}}\right\}
\end{align}
for some constant $C$ that may differ from the one in \eqref{eq:Bern-bound-pg2}. Choosing $Q$, and $N$ as in Corollary \ref{cor:concentration} for a large enough $C$ proves the corollary.
\end{proof}

\subsection{Proof of Lemma \ref{lem:conectrationpg2part2}}
\begin{proof}
Begin by noting that it is clear from \eqref{eq:expectation-lemma2} that for a fixed $\Gamma_p$, we have 
\[
\E\cA_p^*\cA_p\mathcal{S}_p^\ddagger (\w_{p-1}) \neq \w_{p-1}, 
\]
and if one takes the random construction of $\Gamma_p$ into account, the dependence between $\Gamma_1, \Gamma_2,\ldots, \Gamma_{p-1}$, and $\Gamma_p$ that in turn means that $\w_{p-1}$ is dependent on $\Gamma_p$. This means there is no simple way to write this quantity as a sum of independent random matrices and apply the matrix Bernstein inequality to control the size as before. Fortunately, we can work a uniform bound using restricted isometry property that works for all matrices $\w_{p-1}$, and thus overcome the issues of intricate dependencies between $\w_{p-1}$ and $\Gamma_p$. By the equivalence between operator, and Frobenius norm, we have 
\begin{align}\label{eq:intermediate}
&\|\E\cA_p^*\cA_p\mathcal{S}_p^\ddagger (\w_{p-1})-\w_{p-1}\|^2 \leq \|\E\cA_p^*\cA_p\mathcal{S}_p^\ddagger (\w_{p-1})-\w_{p-1}\|_\F^2\notag\\
&\qquad\qquad= \left\|\PO\left[\E\cA_p^*\cA_p\mathcal{S}_p^\ddagger (\w_{p-1})-\w_{p-1}\right]\right\|_\F^2 + \left\|\POc\left[\E\cA_p^*\cA_p\mathcal{S}_p^\ddagger (\w_{p-1})\right]\right\|_\F^2,
\end{align}
where the projection operator $\PO$ is defined in \eqref{eq:POmega}, and $\POc$ is the orthogonal complement. Note that 
\[
\PO\left[\E \cA_p^*\cA_p\mathcal{S}_p^\ddagger (\w_{p-1})\right] = \sum_n\left[\sum_{ \o \in \Delta_{n,p}}\PO \bh_{\o}\bh_{\o}^*\PO \vm{S}_{n,p}^\ddagger\right]\w_{p-1}\vm{D}_n.
\]
Using the definition of $\vm{S}_{n,p}^\ddagger$ in \eqref{eq:Snp}, it is clear that 
\[
\left[\sum_{ \o \in \Delta_{n,p}}\PO \bh_{\o}\bh_{\o}^*\PO \vm{S}_{n,p}^\ddagger\right] = \PO,
\]
which implies, using the fact that $\PO \w_{p-1} = \w_{p-1}$, $\PO\left[\E \cA_p^*\cA_p\mathcal{S}_p^\ddagger (\w_{p-1})\right]  = \w_{p-1}$. This means \eqref{eq:intermediate} reduces to
\[
\|\E\cA_p^*\cA_p\mathcal{S}_p^\ddagger (\w_{p-1})-\w_{p-1}\|^2 \leq \left\|\POc\left[\E\cA_p^*\cA_p\mathcal{S}_p^\ddagger (\w_{p-1})\right]\right\|_\F^2.
\]
As far as the second term in the above expression is concerned, using \eqref{eq:expectation-lemma2}, one obtains
\begin{align*}
\left\|\sum_{(\o,n) \in \Gamma_p}  \POc \bh_\o\bh_\o^*\vm{S}_{n,p}^\ddagger\w_{p-1}\vm{D}_n\right\|_\F^2 &= \left\|\sum_{(\o,n) \in \Gamma_p} \bh_\o\bh_\o^*\vm{S}_{n,p}^\ddagger\w_{p-1}\vm{D}_n\right\|_\F^2 - \left\|\sum_{(\o,n) \in \Gamma_p} \PO\bh_\o\bh_\o^*\vm{S}_{n,p}^\ddagger\w_{p-1}\vm{D}_n\right\|_\F^2 \notag\\
&= \sum_{(\o,n) \in \Gamma_p}  \|\bh_\o^*\vm{S}_{n,p}^\ddagger \w_{p-1}\vm{D}_n\|_2^2 - \|\w_{p-1}\|_\F^2.
\end{align*}
The last equality follows from the fact that $\{\bh_\o\}_\o$ are orthonormal vectors, $\w_{p-1}\vm{D}_n$ is orthogonal to $\w_{p-1}\vm{D}_{n^\prime}$ for $n \neq n^\prime$ and by the definition of $\vm{S}_{n,p}^\ddagger$ in \eqref{eq:Snp}. The matrix $\w_{p-1}$ is dependent on the sets $\Delta_{n,1}, \Delta_{n,2},\ldots, \Delta_{n,p-1}$ that are in turn dependent on $\Delta_{n,p}$ by construction. However, we can avoid this dependence issue here as the result in \eqref{eq:RIPresults} is uniform in nature in the sense that it holds for all $S$-sparse vectors. Employing this result on every column of $\w_{p-1} \in \Omega$, we obtain 
\begin{align*}
\sum_{(\o,n) \in \Gamma_p}  \|\bh_\o^*\vm{S}_{n,p}^\ddagger \w_{p-1}\vm{D}_n\|_2^2 &= \sum_n \sum_{\o \in \Delta_{n,p}}\|\bh_\o^*\vm{S}_{n,p}^\ddagger \w_{p-1}\vm{D}_n\|_2^2 \leq \frac{5Q}{4L}\sum_n\|\vm{S}_{n,p}^\ddagger\w_{p-1}\vm{D}_n\|_\F^2\\
& \leq \frac{5Q}{4L}\max_n \|\vm{S}_{n,p}^\ddagger\|_2^2 \sum_n\|\w_{p-1}\vm{D}_n\|_\F^2 \leq \frac{20}{9}\frac{L}{Q} \|\w_{p-1}\|_\F^2,
\end{align*}
which holds with probability at least $1-(LN)^{-\beta}$.  This implies using the calculation above that 
\begin{align*}
\left\|\sum_{(\o,n) \in \Gamma_p}\POc \bh_\o\bh_\o^*\vm{S}_{n,p}^\ddagger\w_{p-1}\vm{D}_n\right\|^2
& \leq \left(\frac{20}{9}\frac{L}{Q}  - 1\right) \|\w_{p-1}\|_\F^2.
\end{align*}
Now the  decay rate of $\|\w_p\|_\F$ in the statement of the lemma is sufficient to guarantee that 
\[
\|\E\cA_p^*\cA_p\mathcal{S}_p^\ddagger (\w_{p-1})-\w_{p-1}\| \leq 2^{-p-1}
\]
for every $p \in \{2,3,\ldots, P\}$. Using the union bound as before, the statement can be extended to all $p \in \{2,3,\ldots, P\}$ with probability at least $1-(LN)^{-\beta+1}$.
\end{proof}
\subsection{Proof of Lemma \ref{lem:concentrationpe1part2}}
\begin{proof}
Using the definition of $\cA_1$, and after evaluating the expectation, it is easy to see that
\[
\E \cA_1^*\cA_1(\hh\mm^*) = \sum_{(\o,n) \in \Gamma_1}\bh_{\o}\bh_{\o}^* \hh\mm^*\vm{D}_n.
\]
From Section \ref{sec:Golfing}, we know that $\Gamma_1 = \{(\Delta_{n,1},n)\}_n$, where the sets $|\Delta_{n,1}| = Q$ for every $n$, and are chosen uniformly at random. Define a set $\tilde{\Gamma}_1 := \{(\tilde{\Delta}_{n,1},n)\}_n$, where $\tilde{\Delta}_{n,1}$ for each $n$ are independent Bernoulli sets defined as 
\[
\tilde{\Delta}_{n,1} : = \{ \o \in [L] ~|~ \delta_{\o,n} = 1 \},
\]
where $\delta_{\o,n}$ is an independently chosen Bernoulli number for every $\o$, and $n$ that takes value one with probability $Q/L$. Since the probability of failure the event
\begin{equation*}
\left\|\frac{L}{Q}\sum_{n}\sum_{\o \in \Delta_{n,1}}\bh_{\o}\bh_{\o}^* \hh\mm^*\vm{D}_n - \hh\mm^* \right\| \leq \epsilon
\end{equation*}
for some number $\epsilon > 0$ is a nonincreasing function of $Q = |\Delta_{n,1}|$ for every $n$; this follows by the orthogonality of $\{\vm{b}_\ell\}_{\ell}$, and by the fact that increasing $|\Delta_{n,1}|$ only increases the range of the projector  $\sum_{\o \in \Delta_{n,1}}\bh_{\o}\bh_{\o}^*$, and hence the distance in the operator norm above can either decrease or stay the same. It now follows using Lemma 2.3 in \cite{candes2006robust} that the probability of failure of the event above is less than or equal to twice the probability of failure of the event
\begin{align}\label{eq:Bernoulli-event}
\left\|\frac{L}{Q}\sum_{(\o,n) \in \tilde{\Gamma}_1} \bh_{\o}\bh_{\o}^* \hh\mm^*\vm{D}_n- \vm{h}\vm{m}^*\right\| \leq \epsilon.
\end{align}
Therefore, in the rest of the proof it suffices to only consider the event above where the index sets are $\{\tilde{\Delta}_{n,1}\}_n$. Using the definition of the Bernoulli sets $\tilde{\Delta}_{n,1}$ defined above, we can write 
\begin{align*}
\sum_{n}\sum_{\o \in \tilde{\Delta}_{n,1}} \bh_{\o}\bh_{\o}^* \hh\mm^*\vm{D}_n = \sum_{\o,n}\delta_{\o,n} \bh_{\o}\bh_{\o}^* \hh\mm^*\vm{D}_n,
\end{align*}
where on the right hand side the summation is over all $\o \in [L]$. Note that 
\[
\E\frac{L}{Q}\sum_{\forall (\o,n)}\delta_{\o,n} \bh_\o\bh_\o^*\hh\mm^*\vm{D}_n = \sum_{\forall (\o,n)}\bh_\o\bh_\o^*\hh\mm^*\vm{D}_n = \hh\mm^*, 
\]
where we have used the fact that $\E \delta_{\o,n} = \frac{Q}{L}$. 

A simple application of matrix Bernstein is now enough to show that the event in \eqref{eq:Bernoulli-event} holds for the desired $\epsilon$ with high probability. To this end, the calculation for the variance is laid out as follows. Denote the centered random matrices 
\[
\vm{Z}_{\o,n} := \left(\frac{L}{Q}\delta_{\o,n}\bh_\o\bh_\o^*\vm{h}\mm^*\vm{D}_n- \bh_\o\bh_\o^*\vm{h}\mm^*\vm{D}_n\right). 
\]
The matrices $\vm{Z}_{\o,n}$ are independent not only for $\ell \in \Delta_{n,1}$ but for every $n$ as the set $\Delta_{n,1}$ is chosen independently for each $n$. The variance $\sigma_{\vm{Z}}^2$ is as before the maximum of the operator norms of two quantities; firstly, 
\begin{align*}
&\left\|\sum_{\o,n}\E\vm{Z}_{\o,n}\vm{Z}_{\o,n}^* \right\| \leq  \frac{L^2}{Q^2} \left\|\sum_{\o,n}\E\delta_{\o,n}^2|\bh_\o^*\hh|^2 \|\mm_n\|_2^2 \bh_\o\bh_\o^*\right\|\\
 &~~~~~\leq \frac{L}{Q} \max_{\o}|\bh_\o^*\hh|^2\left[\sum_n\|\mm_n\|_2^2\right]\left\|\sum_{\o}\bh_\o\bh_\o^*\right\| \leq  \mu_{\max}^2 \frac{S}{Q},
\end{align*}
where the second last inequality follows from the fact that $\E\delta_{\o,n}^2 = \frac{Q}{L}$, and the last inequality is obtained by applying the definitions in \eqref{eq:coherence_h}, and \eqref{eq:coherence_m}; secondly,
\begin{align*}
&\left\|\sum_{\o,n}\E\vm{Z}_{\o,n}^*\vm{Z}_{\o,n}\right\| \leq  \frac{L^2}{Q^2} \left\|\sum_{\o,n}\E\delta_{\o,n}^2|\bh_\o^*\hh|^2 \|\bh_\o\|_2^2 \vm{D}_n \mm\mm^*\vm{D}_n \right\| \\
&~~~~\leq \frac{L}{Q}\left[\sum_{\o}|\bh_{\o}^* \hh |^2 \right] \left\|\sum_n \vm{D}_n \mm\mm^*\vm{D}_n \right\|\leq  \frac{L}{Q} \max_n \|\mm_n\|_2^2 \leq \frac{L\rho_0^2}{QN},
\end{align*}
where we have used the facts that $\{\bh_{\o}\}_\o$ is a complete orthonormal basis, and that the operator norm of a block diagonal matrix with $n$th block being $\mm_n \mm_n^*$ is upper bounded by $\max_n \|\mm_n\|_2^2$. The last inequality is the result of \eqref{eq:coherence_m}. Thus, the variance $\sigma_{\vm{Z}}^2$ being the maximum of the operator norm of the two results above is bounded by
\[
\sigma_{\vm{Z}}^2 \leq \mu_{\max}^2 \frac{S}{Q}+\frac{L}{Q}\cdot\rho_0^2\frac{1}{N}.
\]
The last ingredient required to apply the Bernstein inequality in Proposition \ref{prop:matrixBernstein-uniform} is then 
\begin{align*}
\max_{\o,n} \|\vm{Z}_{\o,n}\| &= \max_{\o,n} \left\| \frac{L}{Q}\delta_{\o,n} \bh_{\o}\bh_{\o}^*\hh\mm^* \vm{D}_n - \bh_{\o}\bh_{\o}^*\hh\mm^* \vm{D}_n \right\| \leq 2 \max_{\o,n}\frac{L}{Q}\left\| \bh_\o\bh_\o^* \hh\mm^*\vm{D}_n \right\| \\
&\leq 2\frac{L}{Q}  \max_{\o}(\|\bh_\o\|_2|\bh_\o^*\hh|) \cdot \max_n \|\mm_n\|_2\leq 2\frac{L}{Q} \cdot \mu_{\max}\sqrt{\frac{S}{L}}\cdot\rho_0 \frac{1}{\sqrt{N}}.
\end{align*}
With all the ingredients in place, an application of the uniform version of the Bernstein bound with $t = (\beta-1)\log(LN)$ tells us that 
\begin{align*}
&\left\|(L/Q)\E\cA_1^*\cA_1(\hh\mm^*) - \hh\mm^*\right\| \leq \\
& \max\left\{ \sqrt{\left(\mu_{\max}^2 \frac{S}{Q}+\frac{L}{Q}\cdot\rho_0^2\frac{1}{N}\right)\beta \log (LN)}, 2\left(\mu_{\max}\rho_0\frac{L}{Q}\sqrt{\frac{S}{LN}}\right) \beta \log(LN)\right\}.
\end{align*}
The right hand side can be driven to the desired small number by choosing $N \geq C \rho_0^2(L/Q) \log(LN)$, and $L \geq C \beta\mu_{\max}^2 S(L/Q)\log(LN)$ for an appropriately large constant $C$. The probability that the above inequlaity holds is $1-(LN)^{-\beta}$ and follows by plugging in the choice $t=(\beta-1)\log(LN)$ in Proposition \ref{prop:matrixBernstein-uniform}.
\end{proof}

\section{Supporting Lemmas}\label{sec:Supporting-Lemmas}
This section proves Lemma \ref{lem:qln-bound}, Corollary \ref{cor:qln-bound}, and Lemmas \ref{lem:rln}, \ref{lem:Supporting1}. 
\subsection{Proof of Lemma \ref{lem:qln-bound}}
\begin{proof}
	We start with the proof of Lemma \ref{lem:qln-bound} that concerns bounding the quantity \footnote{The use of $\ell^\prime$, and $n^\prime$ as index variables is to avoid conflict with $\ell$, and $n$ that are reserved to index the set $\Gamma_p$ in the proof below.} $\|\vm{b}_{\ell^\prime}^*\vm{S}_{n^\prime,p+1}^\ddagger \w_p \vm{D}_{n^\prime}\|_2^2$ for $p \geq 2$. Let $\vm{q}^* := \bh_{\ot}^*\vm{S}_{\nt,p+1}^\ddagger \w_p\vm{D}_{\nt}$. The quantity can be expanded using the definition of $\w_p$ for $p \geq 2$ in \eqref{eq:Wp-def2} 
	\begin{align*}
	\vm{q}^* &= \bh_{\ot}^*\vm{S}^\ddagger_{\nt,p+1}\left[\PT\cA_p^*\cA_p\mathcal{S}_p^\ddagger(\w_{p-1})-\w_{p-1} \right]\vm{D}_{\nt}\\
	&=\bh_{\ot}^*\vm{S}^\ddagger_{\nt,p+1}\left[\PT\cA_p^*\cA_p\mathcal{S}_p^\ddagger(\w_{p-1})-\PT\E\cA_p^*\cA_p\mathcal{S}_p^\ddagger(\w_{p-1}) \right]\vm{D}_{\nt},
	\end{align*}
	where the second equality follows from a previously shown fact that $\E \PT\cA_p^*\cA_p \mathcal{S}_p^\ddagger(\w_{p-1}) = \w_{p-1}$. Thus $\vm{q}$ is just a zero-mean random vector that can be expanded further as a sum of zero-mean, independent random vectors using the definition of map $\cA_p^*\cA_p$ in \eqref{eq:cAp} as follows
	\begin{align}\label{eq:q-summands}
	\vm{q}^* & = \sum_{(\o,n) \in \Gamma_p} \bh_{\ot}^*\vm{S}_{\nt,p+1}^\ddagger\left[\PT\left(\bh_\o\bh_\o^*\vm{S}_{n,p}^\ddagger \w_{p-1}\vm{\phi}_{\o,n}\vm{\phi}_{\o,n}^*\right) - \PT\left(\bh_\o\bh_\o^*\vm{S}_{n,p}^\ddagger \w_{p-1}\E\vm{\phi}_{\o,n}\vm{\phi}_{\o,n}^*\right)\right]\vm{D}_{\nt}.
	\end{align}
	From here on, we use the matrix Bernstein inequality to find the range in which the $\ell_2$-norm of the random vector $\vm{q}$ lies with high probability. Let us define random vectors 
	\[
	\vm{z}^*_{\o,n} := \bh_{\ot}^*\vm{S}_{\nt,p+1}^\ddagger\PT\left(\bh_\o\bh_\o^*\vm{S}_{n,p}^\ddagger \w_{p-1}\vm{\phi}_{\o,n}\vm{\phi}_{\o,n}^*\right)\vm{D}_{\nt}
	\]
	 then by Proposition \ref{prop:matbernpsi}, it suffices to compute the following upper bound on the variance
	\begin{align*}
	\sigma_{\vm{z}}^2 \leq  \max\left\{\left\|\sum_{(\o,n) \in \Gamma_p} \E \vm{z}_{\o,n}\vm{z}_{\o,n}^*\right\|, \left| \sum_{(\o,n) \in \Gamma_p} \E \|\vm{z}_{\o,n}\|_2^2\right|\right\} \leq \left| \sum_{(\o,n) \in \Gamma_p} \E\|\vm{z}_{\o,n}\|_2^2\right|.
	\end{align*}
	Note that the vectors $\vm{z}_{\o,n}$ can be rewritten as a scalar times a vector as follows
	\[
	\vm{z}^*_{\o,n} = \bh_\o^*\vm{S}_{n,p}^\ddagger \w_{p-1}\vm{\phi}_{\o,n}  \cdot \bh_{\ot}^*\vm{S}_{\nt,p+1}^\ddagger\PT\left(\bh_\o\vm{\phi}_{\o,n}^*\right)\vm{D}_{\nt},
	\]
	and using \eqref{eq:PT-def}, the vector part above can be expanded as 
	\begin{align*}
	\bh_{\ot}^*\vm{S}^\ddagger_{\nt,p+1}\PT(\bh_\o\vm{\phi}_{\o,n}^*)\vm{D}_{\nt}  = (\bh_\o^* \hh)^*(\bh_{\ot}^*\vm{S}^\ddagger_{\nt,p+1}\hh)\vm{\phi}^*_{\o,n}\vm{D}_{\nt}&+(\vm{\phi}^*_{\o,n}\mm)(\bh_{\ot}^*\vm{S}^\ddagger_{\nt,p+1}\bh_\o)\mm^*\vm{D}_{\nt}\\
	&\qquad\qquad-(\bh_\o^* \hh)^*(\vm{\phi}^*_{\o,n}\mm)(\bh_{\ot}^*\vm{S}^\ddagger_{\nt, p+1}\hh)\mm^*\vm{D}_{\nt},
	\end{align*}
	and its the $\ell_2$-norm can then easily be shown to be upper bounded as 
	\begin{align*}
	&\|\bh_{\ot}^*\vm{S}_{\nt,p+1}^\ddagger\PT\left(\bh_\o\vm{\phi}_{\o,n}^*\right)\vm{D}_{\nt}\|_2^2\\
	 &\qquad\qquad\leq 3|\bh_\o^* \hh|^2|\bh_{\ot}^*\vm{S}^\ddagger_{\nt,p+1}\hh|^2 \left(\vm{\phi}^*_{\o,n}\vm{D}_{\nt}\vm{D}_{\nt}^*\vm{\phi}_{\o,n}\right) + 3|\vm{\phi}^*_{\o,n}\mm|^2 |\bh_{\ot}^*\vm{S}^\ddagger_{\nt,p+1}\bh_\o|^2 \|\mm_{\nt}\|_2^2 \\
	&\qquad\qquad+ 3|\bh_\o^* \hh|^2 |\vm{\phi}^*_{\o,n}\mm|^2 |\bh_{\ot}^*\vm{S}^\ddagger_{\nt,p+1}\hh|^2 \|\mm_{\nt}\|_2^2\\
	&\qquad\qquad\leq 3|\bh_\o^* \hh|^2|\bh_{\ot}^*\vm{S}^\ddagger_{\nt,p+1}\hh|^2 \left(\vm{\phi}^*_{\o,n}\vm{D}_{\nt}\vm{D}_{\nt}^*\vm{\phi}_{\o,n}\right) + 6|\vm{\phi}^*_{\o,n}\mm|^2 |\bh_{\ot}^*\vm{S}^\ddagger_{\nt,p+1}\bh_\o|^2 \|\mm_{\nt}\|_2^2,
	\end{align*}
	where the last line follows from the fact that $|\bh_\o^*\hh|^2 \leq 1$. For a standard Gaussian vector $\vm{g}$, it can easily be verified that $\E \left(|\vm{g}^*\vm{x}|^2 |\vm{g}^*\vm{y}|^2\right)\leq 3\|\vm{x}\|_2^2\|\vm{y}\|_2^2$. Using this fact, one has 
	\[
	\E |\vm{\phi}^*_{\o,n}\mm|^2 |\bh_\o^*\vm{S}_{n,p}^\ddagger \w_{p-1}\vm{\phi}_{\o,n}|^2 \leq 3  \|\mm_n\|_2^2\|\bh_\o^*\vm{S}_{n,p}^\ddagger \w_{p-1}\vm{D}_n\|_2^2,
	\]
	and
	\[
	\sum_{n=1}^N\left[\max_{\o \in \Delta_{n,p}}\E\left(\vm{\phi}^*_{\o,n}\vm{D}_{\nt}\vm{D}_{\nt}^*\vm{\phi}_{\o,n}\right) |\bh_\o^*\vm{S}_{n,p}^\ddagger \w_{p-1}\vm{\phi}_{\o,n}|^2\right]  \leq 3K\left[\max_{\o \in \Delta_{\nt,p}}\|\bh_\o^*\vm{S}_{\nt,p}^\ddagger \w_{p-1}\vm{D}_{\nt}\|_2^2\right].
	\]
	Note the change of index variable from $n$ to $\nt$ on the right hand side. Moreover, using \eqref{eq:RIPresults}
	\[
	\sum_{\o \in \Delta_{n,p}} |\bh_{\ot}^*\vm{S}_{\nt,p+1}^\ddagger \bh_\o|^2 \leq  \frac{5Q}{4L}\|\bh_{\ot}^*\vm{S}_{\nt,p+1}^\ddagger\|_2^2 \leq \frac{5Q}{4L} \|\mathcal{P}\bh_{\ot}\|_2^2\|\vm{S}^\ddagger_{\nt,p+1}\|_2^2,
	\]
	where the inclusion of the projection operator $\mathcal{P}$, defined in \eqref{eq:POmega}, on the vector $\bh_{\ot}$ is due to the fact that the rows of matrix $\vm{S}_{n,p}^\ddagger$ are supported on $\Omega$. Furthermore, using definitions in \eqref{eq:coherence_B}, and \eqref{eq:S-eigen-bounds}, we can finally bound the above result as 
	\[
	\sum_{\o \in \Delta_{n,p}} |\bh_{\ot}^*\vm{S}_{\nt,p+1}^\ddagger \bh_\o|^2 \leq \frac{20}{9}\mu_{\max}^2\frac{S}{Q}.
    \]
	Putting the above identities together with
	\[
	\sum_{n=1}^N \|\mm_n\|_2^2 = \|\mm\|_2^2 = 1, \qquad \mbox{and}~ \sum_{\o \in \Delta_{n,p}} |\bh_\o^*\hh|^2 \leq \frac{5Q}{4L}\|\hh\|_2^2 = \frac{5Q}{4L}
	\]
	one directly obtains
	\begin{align}\label{eq:Vz}
	 \sum_{(\o,n) \in \Gamma_p} \|\vm{z}_{\o,n}\|_2^2 &= \sum_{n=1}^N\sum_{\o \in \Delta_{n,p}} \|\vm{z}_{\o,n}\|_2^2\leq 4\frac{Q}{L} K |\bh_{\ot}^*\vm{S}^\ddagger_{\nt,p+1}\hh|^2\left[ \max_{\o \in \Delta_{\nt,p}}\|\bh_\o^*\vm{S}_{\nt,p}^\ddagger \w_{p-1}\vm{D}_{\nt}\|_2^2\right]\notag\\
	 &\qquad\qquad\qquad\qquad+ 14 \mu_{\max}^2 \frac{S}{Q}\|\mm_{\nt}\|_2^2 \left[\max_n \max_{\o \in \Delta_{n,p}} \|\bh_{\o}^*\vm{S}_{n,p}^\ddagger\w_{p-1}\vm{D}_n\|_2^2\right] := \Pi^2_p.
	\end{align}
	The only ingredient left to apply the Bernstein bound in Proposition \ref{prop:matbernpsi} is the Orlicz norm of the summands $\vm{z}_{\o,n}$.  To this end, the $\psi_2$ norm of the vector $\bh_{\ot}^*\vm{S}_{n,p+1}^\ddagger\PT(\bh_\o\vm{\phi}_{\o,n}^*)\vm{D}_{\nt}$ can be evaluated as follows
	\begin{align*}
	&\|\bh_{\ot}^*\vm{S}_{\nt,p+1}^\ddagger\PT(\bh_\o\vm{\phi}_{\o,n}^*)\vm{D}_{\nt}\|_{\psi_2} 
	\leq |\bh_\o^*\hh||\bh_{\ot}^*\vm{S}_{\nt,p+1}^\ddagger\hh|\|\vm{\phi}_{\o,n}\|_{\psi_2}+2\|\vm{\phi}^*_{\o,n}\mm\|_{\psi_2}|\bh_{\ot}^*\vm{S}_{\nt,p+1}^\ddagger\bh_\o| \|\mm_{\nt}\|_2.
	\end{align*}
	Since $\vm{\phi}_{\o,n}$ are Gaussian vectors, the discussion on the Orlicz-norms in Section \ref{sec:Conc-Ineq} tells us that $\|\vm{\phi}^*_{\o,n}\mm\|_{\psi_2} \leq C \|\mm_n\|_2$, and $\|\vm{\phi}_{\o,n}\|_{\psi_2} \leq C\sqrt{K}$. This means
	\begin{align*}
\|\bh_{\ot}^*\vm{S}_{\nt,p+1}^\ddagger\PT(\bh_\o\vm{\phi}_{\o,n}^*)\vm{D}_{\nt}\|_{\psi_2} &\leq  C\left(\sqrt{K}|\bh_\o^*\hh|\cdot|\bh_{\ot}^*\vm{S}_{\nt,p+1}^\ddagger\hh| + \|\mm_n\|_{2}|\bh_{\ot}^*\vm{S}_{\nt,p+1}^\ddagger\bh_\o|\|\mm_{\nt}\|_2\right) \\
	&\leq C\left( \sqrt{K}|\bh_\o^*\hh||\bh_{\ot}^*\vm{S}_{\nt,p+1}^\ddagger\hh| +\|\PO\bh_\o\|_2\|\bh_{\ot}^*\vm{S}_{\nt,p+1}^\ddagger\|_2\|\mm_n\|_2\|\mm_{\nt}\|_2\right),
	\end{align*}
	where we have used the fact that $\|\bh_{\ot}^*\vm{S}_{\nt,p+1}^\ddagger\bh_\o\|_2 = \|\bh_{\ot}^*\PO\vm{S}_{\nt,p+1}^\ddagger\PO\bh_\o\|_2 \leq \|\PO\bh_\o\|_2\|\bh_{\ot}^*\vm{S}_{\nt,p+1}^\ddagger\|_2$.  Again as  $\vm{\phi}_{\o,n}$ is a Gaussian vector, the Orlicz norm of its inner product with a fixed vector $\bh_\o^*\vm{S}_{n,p}^\ddagger \w_{p-1}$ is 
	\begin{align*}
	\left\|\bh_\o^*\vm{S}_{n,p}^\ddagger \w_{p-1}\vm{\phi}_{\o,n}\right\|_{\psi_2} \leq C\left\|\bh_\o^* \vm{S}_{n,p}^\ddagger\w_{p-1}\vm{D}_n\right\|_2.
	\end{align*}
	Using these facts, we can show that random summands in \eqref{eq:q-summands} are sub-exponential vectors by computing their $\psi_1$-norm and showing that it is bounded. Note that 
	\begin{align*}
	&\left\|\bh_{\ot}^*\vm{S}_{\nt,p+1}^\ddagger\left[\PT\left(\bh_\o\bh_\o^*\vm{S}_{n,p}^\ddagger \w_{p-1}\vm{\phi}_{\o,n}\vm{\phi}_{\o,n}^*\right)- \PT\left(\bh_\o\bh_\o^*\vm{S}_{n,p}^\ddagger \w_{p-1}\E\vm{\phi}_{\o,n}\vm{\phi}_{\o,n}^*\right)\right]\vm{D}_{\nt}\right\|_{\psi_1} \\
	&\qquad\qquad\leq 2\left\|\bh_{\ot}^*\vm{S}_{\nt,p+1}^\ddagger\PT\left(\bh_\o\bh_\o^*\vm{S}_{n,p}^\ddagger \w_{p-1}\vm{\phi}_{\o,n}\vm{\phi}_{\o,n}^*\right)\vm{D}_{\nt}\right\|_{\psi_1} ,
	\end{align*}
	where the inequality follows by using the identity in \eqref{eq:Orlicz-norm-centered-X}. Now note that 
	\begin{align*}
	\max_n \max_{\o \in \Delta_{n,p}}&\left\|\bh_{\ot}^*\vm{S}_{\nt,p+1}^\ddagger\PT\left(\bh_\o\bh_\o^*\vm{S}_{n,p}^\ddagger \w_{p-1}\vm{\phi}_{\o,n}\vm{\phi}_{\o,n}^*\right)\vm{D}_{\nt}\right\|_{\psi_1}\\ &\qquad\qquad\leq \max_{\o \in \Delta_{\nt,p}}\left\|\bh_{\ot}^*\vm{S}_{\nt,p+1}^\ddagger\PT\left(\bh_\o\bh_\o^*\vm{S}_{\nt,p}^\ddagger \w_{p-1}\vm{\phi}_{\o,\nt}\vm{\phi}_{\o,\nt}^*\right)\right\|_{\psi_1},
	\end{align*}
	and the  change of indices from $n$ to $\nt$ is justified as $\phi_{\o,n}[\kt]$ is zero when $\kt \notin \{(K-1)\nt+1,\ldots, \nt K\}$.  Now using the result in \eqref{eq:Product-Subgaussians}, we can write 
	\begin{align*}
	&\left\|\bh_{\ot}^*\vm{S}_{\nt,p+1}^\ddagger\PT\left(\bh_\o\bh_\o^*\vm{S}_{\nt,p}^\ddagger \w_{p-1}\vm{\phi}_{\o,\nt}\vm{\phi}_{\o,\nt}^*\right)\right\|_{\psi_1} \leq \left\|\bh_{\ot}^*\vm{S}_{\nt,p+1}^\ddagger\PT\left(\bh_\o\vm{\phi}^*_{\o,\nt}\right)\right\|_{\psi_2}\left\|\bh_\o^*\vm{S}_{\nt,p}^\ddagger \w_{p-1}\vm{\phi}_{\o,\nt}\right\|_{\psi_2}\\
	& \qquad\qquad \leq \left(\sqrt{K}|\bh_\o^* \hh||\bh_{\ot}^*\vm{S}_{\nt,p+1}^\ddagger\hh| +\|\PO\bh_\o\|_2\|\bh_{\ot}^*\vm{S}_{\nt,p+1}^\ddagger\|_2 \|\mm_{\nt}\|_2^2\right)\left(\left\|\bh_\o^* \vm{S}_{\nt,p}^\ddagger\w_{p-1}\vm{D}_{\nt}\right\|_2\right).
	\end{align*}
	 Setting $\alpha = 1$ in Proposition \ref{prop:matbernpsi}, one obtains the upper bound $U_1$
	\begin{align}\label{eq:Uz}
	 U_1^2  &= \max_{n}\max_{\o\in \Delta_{n,p}} \|\vm{z}_{\o,n}\|^2_{\psi_1} \leq \notag C\left(\frac{\mu^2_0K}{L}|\bh_{\ot}^*\vm{S}_{\nt,p+1}^\ddagger\hh|^2+\mu^2_{\max}\rho_0^2\frac{S}{LN}\|\mm_{\nt}\|_2^2\|\bh_{\ot}^* \vm{S}_{\nt,p+1}^\ddagger\|_2^2\right)\notag\\
	&~~~~~~~~~~~~~~~~~~~~~~~~~\qquad\qquad\qquad\left[\max_{\o \in \Delta_{\nt,p}}\left\|\bh_\o^* \vm{S}_{\nt,p}^\ddagger\w_{p-1}\vm{D}_{\nt}\right\|_2^2\right]\notag\\
	&\leq C \left(\frac{\mu^2_0K}{L}|\bh_{\ot}^*\vm{S}_{\nt,p+1}^\ddagger\hh|^2+\mu^4_{\max}\rho_0^2\frac{S^2}{Q^2N}\|\mm_{\nt}\|_2^2\right)\left[\max_{\o \in \Delta_{\nt,p}}\left\|\bh_\o^* \vm{S}_{\nt,p}^\ddagger\w_{p-1}\vm{D}_{\nt}\right\|_2^2\right]:= \Lambda^2_p,
	\end{align}
	where the last inequality follows from plugging in the bound for $\|\bh_{\ot}^* \vm{S}_{\nt,p+1}^\ddagger\|_2^2$ calculated earlier in the proof of this lemma. The logarithmic factor in the Bernstein bound can be crudely bounded as follows
	\[
	\log\left( \frac{(QN)U_{1}^2}{\sigma_{\vm{z}}^2}\right) \leq C\log(LN)
	\]
	for some constant $C$, which follows from the fact that $U_1^2 \leq |\Gamma_p| \sigma_{\vm{z}}^2$. This completes all the ingredients to apply the Bernstein bound with $t = \beta \log (LN)$ to obtain 
	\begin{align*}
	&\|\vm{b}_{\ell^\prime}^*\vm{S}_{n^\prime,p+1}^\ddagger \w_p \vm{D}_{n^\prime}\|_2^2 \leq C\max\left\{\beta\Pi_p^2\log (LN), \beta^2 \Lambda_p^2\log^4 (LN)\right\},
	\end{align*}
	which holds with probability at least $1-(LN)^{-\beta}$. This completes the proof of the lemma. 
	 \end{proof}
\subsection{Proof of Corollary \ref{cor:qln-bound}} 

\begin{proof}
	Note that from \eqref{eq:sln}, we can equivalently write
	\[
	\vm{s}_{\ot,\nt}^* := \bh_{\ot}^*\vm{S}^\ddagger_{\nt,2}\left[\PT\cA_1^*\cA_1(L/Q)\PO(\hh\mm^*)- \E\PT\cA_1^*\cA_1(L/Q)\PO(\hh\mm^*)\right]\vm{D}_{\nt}.
	\]
	using the fact that $\hh\mm^* \in \Omega$. In comparison, Lemma \ref{lem:qln-bound} was concerned with bounding the $\ell_2$ norm of the term 
	\[
	\bh_{\ot}^*\vm{S}^\ddagger_{\nt,p+1}\left[\PT\cA_p^*\cA_p\mathcal{S}_p^\ddagger(\w_{p-1})-\E \PT\cA_p^*\cA_p \mathcal{S}_p^\ddagger(\w_{p-1}) \right]\vm{D}_{\nt}. 
	\]
	All we need to do over here is to replace $\vm{S}_{\nt,p+1}$, $\cA_p$, $\mathcal{S}_p^\ddagger$, and $\w_{p-1}$ with $\vm{S}_{\nt,2}$, $\cA_1$, $(L/Q)\PO$, and $\hh\mm^*$ and repeat the same argument as in the proof of Lemma \ref{lem:qln-bound}. This leads to the bound on the $\ell_2$ norm of $\vm{s}_{\o,n}$ in the statement of the corollary in terms of $\Pi^2_1$, and $\Lambda^2_1$ defined in \eqref{eq:G1}, and \eqref{eq:H1}, respectively. Compared to Lemma \ref{lem:qln-bound} the $\Pi^2_p$, and $\Lambda^2_p$ are replaced in the result by $\Pi^2_1$, and $\Lambda^2_1$, respectively. The quantities $\Pi_1^2$ and $\Lambda_1^2$ are just $\Pi^2_p$, and $\Lambda^2_p$ evaluated at $p=1$, and afterwards $\vm{S}_{n,1}^\ddagger$ therein replaced by $(L/Q)\PO$. 
\end{proof}
\subsection{Proof of Lemma \ref{lem:rln}}

\begin{proof}
	Using the definition \eqref{eq:rln}, and the triangle inequality, we have 
	\begin{align}\label{eq:rln-l2-norm}
	\|\vm{r}^*_{\ot,\nt}\|_2 \leq  \left\|\frac{L}{Q}\bh_{\ot}^*\vm{S}^\ddagger_{\nt,2}\PT\E\cA_1^*\cA_1(\hh\mm^*)\vm{D}_{\nt}\right\|_2 + \|\bh_{\ot}^*\vm{S}_{\nt,2}^\ddagger \hh\mm^* \vm{D}_{\nt}\|_2,
	\end{align}
	where using the definition of $\vm{S}_{n,1}$ defined in Section \ref{sec:Golfing}, we have 
	\begin{align*}
	&\frac{L}{Q}\bh_{\ot}^*\vm{S}^\ddagger_{\nt,2}\PT\E\cA_1^*\cA_1(\hh\mm^*) \vm{D}_{\nt} = \frac{L}{Q} \bh_{\ot}^*\vm{S}_{\nt,2}^\ddagger \left[\sum_{n}\PT\left(\vm{S}_{n,1}\hh\mm^*\vm{D}_n\right)\right]\vm{D}_{\nt} = \\
	&\frac{L}{Q} \bh_{\ot}^*\vm{S}_{\nt,2}^\ddagger \left[\sum_{n} \left(\hh^*\vm{S}_{n,1}\hh\right)\hh\mm^*\vm{D}_n+\vm{S}_{n,1}\hh\mm^* \|\mm_n\|_2^2-  \left(\hh^*\vm{S}_{n,1}\hh\right)\hh\mm^*\|\mm_n\|_2^2\right]\vm{D}_{\nt}\leq \\
	&~~~~~~\frac{L}{Q} \bh_{\ot}^*\vm{S}_{\nt,2}^\ddagger \left[\frac{5Q}{4L}\hh\mm^*\sum_n\vm{D}_n+\max_n\vm{S}_{n,1}\hh\mm^* \sum_n\|\mm_n\|_2^2- \frac{3Q}{4L}\hh\mm^*\sum_n \|\mm_n\|_2^2\right]\vm{D}_{\nt},
	\end{align*}
	where the second last equality results from the definition of $\PT$ in \eqref{eq:PT-def}, and from \eqref{eq:S-eigen-bounds} that directly imply that 
	\[
	\frac{3Q}{4L} \leq \hh^*\vm{S}_{n,1}\hh \leq \frac{5Q}{4L},
	\]
	for the choice of $Q$ in the lemma. Since $\sum_n \|\mm_n\|_2^2 = 1$, and $\sum_n \vm{D}_n = \vm{I}_{KN}$, the result above simplifies to 
	\begin{align*}
	\frac{L}{Q}\bh_{\ot}^*\vm{S}^\ddagger_{\nt,2}\PT\E\cA_1^*\cA_1(\hh\mm^*) \vm{D}_{\nt} \leq  \frac{L}{Q} \bh_{\ot}^*\vm{S}_{\nt,2}^\ddagger \left[ \frac{Q}{2L} \hh\mm^* + \max_n\vm{S}_{n,1}\hh\mm^*\right]\vm{D}_{\nt}.
	\end{align*}
	Finally, the operator norm of the above quantity returns 
	\[
	\left\|\frac{L}{Q}\bh_{\ot}^*\vm{S}^\ddagger_{\nt,2}\PT\E\cA_1^*\cA_1(\hh\mm^*) \vm{D}_{\nt} \right\|_2 \leq \left(\frac{1}{2} |\bh_{\ot}^*\vm{S}_{\nt,2}^\ddagger\hh| + \frac{L}{Q}\max_n |\bh_{\ot}^*\vm{S}_{\nt,2}^\ddagger\vm{S}_{n,1}\hh|\right) \|\mm_{\nt}\|_2.
	\]
	Using \eqref{eq:rln-l2-norm}, we obtain 
	\[
	\|\vm{r}_{\ot,\nt}\|_2 \leq  \left(\frac{3}{2} |\bh_{\ot}^*\vm{S}_{\nt,2}^\ddagger\hh| + \frac{L}{Q}\max_n |\bh_{\ot}^*\vm{S}_{\nt,2}^\ddagger\vm{S}_{n,1}\hh|\right)\|\mm_{\nt}\|_2,
	\]
	and squaring both sides results in 
	\[
	\|\vm{r}_{\ot,\nt}\|_2^2 \leq  \left(\frac{9}{2} |\bh_{\ot}^*\vm{S}_{\nt,2}^\ddagger\hh|^2 + 2\frac{L^2}{Q^2}\max_n |\bh_{\ot}^*\vm{S}_{\nt,2}^\ddagger\vm{S}_{n,1}\hh|^2\right)\|\mm_{\nt}\|_2^2.
	\]
	This completes the proof.
\end{proof}
\begin{lem}\label{lem:Supporting1}
	Let $\vm{\phi}_{\o,n} $ be as in defined in \eqref{eq:meas}, and $\X$ be a fixed matrix. Then 
	\[
	\E |\<\vm{X},\bh_\o\vm{\phi}_{\o,n}^*\>|^2 \vm{\phi}_{\o,n}\vm{\phi}_{\o,n}^*  \preccurlyeq 3\|\bh_\o^*\X\vm{D}_n\|_2^2\vm{D}_n.
	\]
\end{lem}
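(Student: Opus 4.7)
The plan is to reduce the claim to a purely local computation on the $n$-th block of coordinates, where the Gaussian structure of $\vm{c}_{\o,n}$ does all the work.

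First I would rewrite $\<\vm{X},\bh_\o\vm{\phi}_{\o,n}^*\> = \bh_\o^*\vm{X}\vm{\phi}_{\o,n}$ (up to conjugation, which is immaterial inside $|\cdot|^2$), and set $\vm{a}^* := \bh_\o^*\vm{X}$, a fixed $1\times KN$ row. Then the claim becomes
\[
\E\,|\vm{a}^*\vm{\phi}_{\o,n}|^2\,\vm{\phi}_{\o,n}\vm{\phi}_{\o,n}^* \;\preccurlyeq\; 3\,\|\vm{a}^*\vm{D}_n\|_2^2\,\vm{D}_n.
\]
Because $\vm{\phi}_{\o,n} = \vm{c}_{\o,n}\otimes\vm{e}_n$ from \eqref{eq:phi_ln}, both $\vm{\phi}_{\o,n}$ and $\vm{\phi}_{\o,n}\vm{\phi}_{\o,n}^*$ vanish off the $n$-th $K$-block; writing $\tilde{\vm{a}}\in\mathbb{C}^K$ for the restriction of $\vm{a}$ to indices $k\sim_K n$, we have $\vm{a}^*\vm{\phi}_{\o,n} = \tilde{\vm{a}}^*\vm{c}_{\o,n}$ and $\|\tilde{\vm{a}}\|_2 = \|\vm{a}^*\vm{D}_n\|_2$, so after zero-padding the identity back into $KN$-dimensional form with $\vm{D}_n$, it suffices to prove
\[
\E\,|\tilde{\vm{a}}^*\vm{c}_{\o,n}|^2\,\vm{c}_{\o,n}\vm{c}_{\o,n}^* \;\preccurlyeq\; 3\,\|\tilde{\vm{a}}\|_2^2\,\vm{I}_K.
\]

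Next I would handle the Gaussian computation by rotational (unitary) invariance: without loss of generality $\tilde{\vm{a}} = \|\tilde{\vm{a}}\|_2\vm{e}_1$, so one only needs $\E\,|c_{\o,n}[1]|^2\,\vm{c}_{\o,n}\vm{c}_{\o,n}^*$. Using the distribution spelled out in \eqref{eq:cl-construction}, I would treat the two cases. In the real case ($\o\in\{1,L/2+1\}$), with $\vm{c}_{\o,n}\sim\mathrm{Normal}(0,\vm{I}_K)$, the standard moment calculations give the diagonal entries $(\E c_1^4,\E c_1^2c_2^2,\ldots) = (3,1,\ldots,1)$ and all off-diagonals vanish by independence/symmetry, so
\[
\E\,c_1^2\,\vm{c}\vm{c}^* = \vm{I}_K + 2\vm{e}_1\vm{e}_1^* \;\preccurlyeq\; 3\vm{I}_K.
\]
In the complex case ($2\le\o\le L/2$, with the conjugate-symmetric extension for $\o>L/2+1$ yielding the same distribution modulo conjugation), circular-complex Gaussian moments give $\E\,|c_1|^4 = 2$ and the analogous off-diagonal cancellations, so
\[
\E\,|c_1|^2\,\vm{c}\vm{c}^* = \vm{I}_K + \vm{e}_1\vm{e}_1^* \;\preccurlyeq\; 2\vm{I}_K \;\preccurlyeq\; 3\vm{I}_K.
\]
Multiplying by $\|\tilde{\vm{a}}\|_2^2$ and undoing the rotation, we obtain $\E\,|\tilde{\vm{a}}^*\vm{c}_{\o,n}|^2\vm{c}_{\o,n}\vm{c}_{\o,n}^*\preccurlyeq 3\|\tilde{\vm{a}}\|_2^2\vm{I}_K$ in both cases.

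Finally I would reinsert this block-level bound into the full $KN\times KN$ matrix: multiplying on both sides by $\vm{D}_n$ (which is the identity on the $n$-th block and zero elsewhere) and replacing $\|\tilde{\vm{a}}\|_2^2$ by $\|\bh_\o^*\vm{X}\vm{D}_n\|_2^2$ yields exactly the conclusion of the lemma. There is no real obstacle here; the only thing to watch is the slight distributional bookkeeping across the three regimes of $\o$ in \eqref{eq:cl-construction}, which ultimately all produce the same constant $3$ (and in fact the complex regime is tighter).
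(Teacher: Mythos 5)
Your proposal is correct and follows essentially the same route as the paper: reduce to the $n$-th coordinate block using the support of $\vm{\phi}_{\o,n}=\vm{c}_{\o,n}\otimes\vm{e}_n$, then apply the Gaussian fourth-moment identity $\E\,|\vm{a}^*\vm{g}|^2\,\vm{g}\vm{g}^*=\|\vm{a}\|_2^2\vm{I}+2\vm{a}\vm{a}^*\preccurlyeq 3\|\vm{a}\|_2^2\vm{I}$, which is exactly the one-line computation the paper records (without separating the real and complex regimes). The only nit is that for the real-Gaussian rows $\o\in\{1,L/2+1\}$ the restricted vector $\tilde{\vm{a}}$ may still be complex, so the unitary rotation sending $\tilde{\vm{a}}$ to $\|\tilde{\vm{a}}\|_2\vm{e}_1$ is not distribution-preserving there; writing $\tilde{\vm{a}}=\vm{u}+\mathrm{j}\vm{v}$ and computing directly still gives $\|\tilde{\vm{a}}\|_2^2\vm{I}+2(\vm{u}\vm{u}^{\T}+\vm{v}\vm{v}^{\T})\preccurlyeq 3\|\tilde{\vm{a}}\|_2^2\vm{I}$, so the constant $3$ and the conclusion are unaffected.
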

\begin{proof}
	Note that
	\begin{align*} 
	\E\left( |\<\vm{X},\bh_\o\vm{\phi}_{\o,n}^*\>|^2\vm{\phi}_{\o,n}\vm{\phi}_{\o,n}^*\right)&= \E\left( |\bh_{\o}^*\vm{X}\vm{\phi}_{\o,n}|^2\vm{\phi}_{\o,n}\vm{\phi}_{\o,n}^*\right) \\
	&= \E \left(\left|\sum_{k \sim_K n} \{\bh_\o^*\vm{X}\vm{\phi}_{\o,n}\}[k]\right|^2\vm{\phi}_{\o,n}\vm{\phi}_{\o,n}^*\right)\\
	&= \|\bh_\o^*\X\vm{D}_n\|_2^2\vm{D}_n + 2 \bh_\o^*\vm{X}\vm{X}^*\bh_\o \vm{D}_n \preccurlyeq 3 \|\bh_\o^*\X\vm{D}_n\|_2^2 \vm{D}_n.
	\end{align*}
\end{proof}

\bibliographystyle{plain}
\bibliography{Bibliography}

\begin{thebibliography}{10}

\bibitem{aghasi2013tightest}
A.~Aghasi, S.~Bahmani, and J.~Romberg.
\newblock A tightest convex envelope heuristic to row sparse and rank one
  matrices.
\newblock In {\em GlobalSIP}, page 627, 2013.

\bibitem{ahmed2012blind}
A.~Ahmed, B.~Recht, and J.~Romberg.
\newblock Blind deconvolution using convex programming.
\newblock {\em {IEEE} Trans. Inform. Theory}, 60(3):1711--1732, 2014.

\bibitem{bahmani2015lifting}
S.~Bahmani and J.~Romberg.
\newblock Lifting for blind deconvolution in random mask imaging:
  Identifiability and convex relaxation.
\newblock {\em SIAM J. Imag. Sci.}, 8(4):2203--2238, 2015.

\bibitem{burer2003nonlinear}
S.~Burer and R.D.C. Monteiro.
\newblock A nonlinear programming algorithm for solving semidefinite programs
  via low-rank factorization.
\newblock {\em Math. Prog.}, 95(2):329--357, 2003.

\bibitem{candes2008restricted}
E.~Cand\`es.
\newblock The restricted isometry property and its implications for compressed
  sensing.
\newblock {\em Comptes Rendus Mathematique}, 346(9):589--592, 2008.

\bibitem{candes2015phase}
E.~Cand\`es, Y.~Eldar, T.~Strohmer, and V.~Voroninski.
\newblock Phase retrieval via matrix completion.
\newblock {\em SIAM Review}, 57(2):225--251, 2015.

\bibitem{candes09ex}
E.~Cand\`es and B.~Recht.
\newblock Exact matrix completion via convex optimization.
\newblock {\em Found. Comput. Math.}, 9(6):717--772, 2009.

\bibitem{candes2007sparsity}
E.~Cand\`es and J.~Romberg.
\newblock Sparsity and incoherence in compressive sampling.
\newblock {\em Inverse problems}, 23(3):969, 2007.

\bibitem{candes2006robust}
E.~Cand\`es, J.~Romberg, and T.~Tao.
\newblock Robust uncertainty principles: {E}xact signal reconstruction from
  highly incomplete frequency information.
\newblock {\em IEEE Trans. Inform. Theory}, 52(2):489--509, February 2006.

\bibitem{candes2006stable}
E.~Cand\`es, J.~Romberg, and T.~Tao.
\newblock Stable signal recovery from incomplete and inaccurate measurements.
\newblock {\em Commun. Pure Appl. Math.}, 59(8):1207--1223, 2006.

\bibitem{candes2013phaselift}
E.~Cand\`es, T.~Strohmer, and V.~Voroninski.
\newblock Phaselift: Exact and stable signal recovery from magnitude
  measurements via convex programming.
\newblock {\em Commun. Pure Appl. Math.}, 66(8):1241--1274, 2013.

\bibitem{choudhary2014sparse}
Sunav Choudhary and Urbashi Mitra.
\newblock Sparse blind deconvolution: What cannot be done.
\newblock In {\em IEEE Int'l. Symp. Inform. Theory (ISIT)}, pages 3002--3006,
  2014.

\bibitem{fazel02ma}
M.~Fazel.
\newblock {\em Matrix rank minimization with applications}.
\newblock PhD thesis, Stanford University, March 2002.

\bibitem{fazel2001rank}
M.~Fazel, H.~Hindi, and S.P. Boyd.
\newblock A rank minimization heuristic with application to minimum order
  system approximation.
\newblock In {\em American Control Conf., Proc.}, volume~6, pages 4734--4739.
  IEEE, 2001.

\bibitem{gross11re}
D.~Gross.
\newblock Recovering low-rank matrices from few coefficients in any basis.
\newblock {\em {IEEE} Trans. Inform. Theory}, 57(3):1548--1566, 2011.

\bibitem{gurelli1995evam}
M.~G{\"u}relli, C.~Nikias, et~al.
\newblock {EVAM}: An eigenvector-based algorithm for multichannel blind
  deconvolution of input colored signals.
\newblock {\em {IEEE} Trans. Sig. Process.}, 43(1):134--149, 1995.

\bibitem{koltchinskii10nu}
V.~Koltchinskii, K.~Lounici, and A.~Tsybakov.
\newblock Nuclear-norm penalization and optimal rates for noisy low-rank matrix
  completion.
\newblock {\em Ann. Stat.}, 39(5):2302--2329, 2011.

\bibitem{levin2011understanding}
A.~Levin, Y.~Weiss, F.~Durand, and W.~Freeman.
\newblock Understanding blind deconvolution algorithms.
\newblock {\em IEEE Trans. Patt. Analys. Mach. Intell.}, 33(12):2354--2367,
  2011.

\bibitem{li2016rapid}
X.~Li, S.~Ling, T.~Strohmer, and K.~Wei.
\newblock Rapid, robust, and reliable blind deconvolution via nonconvex
  optimization.
\newblock {\em arXiv preprint arXiv:1606.04933}, 2016.

\bibitem{ling2015blind}
Shuyang Ling and Thomas Strohmer.
\newblock Blind deconvolution meets blind demixing: Algorithms and performance
  bounds.
\newblock {\em {IEEE} Trans. Inform. Theory}, 2017.

\bibitem{liu1996recent}
H.~Liu, G.~Xu, L.~Tong, and T.~Kailath.
\newblock Recent developments in blind channel equalization: From
  cyclostationarity to subspaces.
\newblock {\em {IEEE} Trans. Sig. Process.}, 50(1):83--99, 1996.

\bibitem{oymak2015simultaneously}
S.~Oymak, A.~Jalali, M.~Fazel, Y.~Eldar, and B.~Hassibi.
\newblock Simultaneously structured models with application to sparse and
  low-rank matrices.
\newblock {\em {IEEE} Trans. Inform. Theory}, 61(5):2886--2908, 2015.

\bibitem{recht11si}
B.~Recht.
\newblock A simpler approach to matrix completion.
\newblock {\em J. Mach. Learn. Res.}, 12(12):3413--3430, 2011.

\bibitem{recht10gu}
B.~Recht, M.~Fazel, and P.~Parrilo.
\newblock Guaranteed minimum-rank solutions of linear matrix equations via
  nuclear norm minimization.
\newblock {\em SIAM Rev.}, 52(3):471--501, 2010.

\bibitem{romberg2013multichannel}
J.~Romberg, N.~Tian, and K.~Sabra.
\newblock Multichannel blind deconvolution using low rank recovery.
\newblock In {\em SPIE Defense, Security, and Sensing}, pages 87500E--87500E.
  Int. Soc. Opt. Phot., 2013.

\bibitem{rudelson2008sparse}
M.~Rudelson and R.~Vershynin.
\newblock On sparse reconstruction from fourier and gaussian measurements.
\newblock {\em Commun. Pure Appl. Math.}, 61(8):1025--1045, 2008.

\bibitem{MINFUNC}
M.~Schmidt.
\newblock {minFunc}: unconstrained differentiable multivariate optimization in
  {Matlab}.
\newblock \url{http://www.di.ens.fr/~mschmidt/Software/minFunc.html}, 2012.

\bibitem{tong1998multichannel}
L.~Tong and S.~Perreau.
\newblock Multichannel blind identification: From subspace to maximum
  likelihood methods.
\newblock {\em Proc. IEEE}, 86:1951--1968, 1998.

\bibitem{tong1994blind}
L.~Tong, G.~Xu, and T.~Kailath.
\newblock Blind identification and equalization based on second-order
  statistics: A time domain approach.
\newblock {\em {IEEE} Trans. Inform. Theory}, 40(2):340--349, 1994.

\bibitem{tropp12us}
J.~Tropp.
\newblock User-friendly tail bounds for sums of random matrices.
\newblock {\em Found. Comput. Math.}, 12(4):389--434, 2012.

\bibitem{vershynin10in}
R.~Vershynin.
\newblock {\em Compressed sensing: theory and applications}.
\newblock Cambridge University Press, 2012.

\bibitem{watson1992characterization}
G.~Watson.
\newblock Characterization of the subdifferential of some matrix norms.
\newblock {\em Linear Algebra and its Appl.}, 170:33--45, 1992.

\bibitem{xu1995least}
G.~Xu, H.~Liu, L.~Tong, and T.~Kailath.
\newblock A least-squares approach to blind channel identification.
\newblock {\em {IEEE} Trans. Sig. Process.}, 43(12):2982--2993, 1995.

\end{thebibliography}
\end{document}